\LetLtxMacro{\originaleqref}{\eqref}
\renewcommand{\eqref}{Eq.~\originaleqref}
\definecolor{codegreen}{rgb}{0,0.6,0}
\definecolor{codegray}{rgb}{0.5,0.5,0.5}
\definecolor{codepurple}{rgb}{0.58,0,0.82}
\definecolor{backcolour}{rgb}{0.95,0.95,0.92}
\definecolor{bitcolor}{rgb}{0.99,0.93,0.0}
\definecolor{checkcolor}{rgb}{0.52941,0.80784,1}
\definecolor{channelcolor}{rgb}{0.67,0.88,0.69}
\definecolor{infocolor}{rgb}{0.82,0.62,0.91}
\definecolor{cqcolor}{rgb}{0.99,0.56,0.67}
\lstdefinestyle{mystyle}{
    backgroundcolor=\color{backcolour},   
    commentstyle=\color{codegreen},
    keywordstyle=\color{magenta},
    numberstyle=\tiny\color{codegray},
    stringstyle=\color{codepurple},
    basicstyle=\ttfamily\footnotesize,
    breakatwhitespace=false,         
    breaklines=true,                 
    captionpos=b,                    
    keepspaces=true,                 
    numbers=left,                    
    numbersep=5pt,                  
    showspaces=false,                
    showstringspaces=false,
    showtabs=false,                  
    tabsize=2
}
\newif\ifarxiv
\newif\ifisit
\newif\ifextra
\newcounter{IEEE@bibentries}
\renewcommand\IEEEtriggeratref[1]{%
  \renewbibmacro{finentry}{%
    \stepcounter{IEEE@bibentries}%
    \ifthenelse{\equal{\value{IEEE@bibentries}}{#1}}
    {\finentry\@IEEEtriggercmd}
    {\finentry}%
  }%
}
\tikzstyle{startstop} = [rectangle,  minimum width=3cm, minimum height=1cm,text centered,text width=10cm, draw=black ,fill=gray!20]
\tikzstyle{process} = [rectangle, minimum width=3cm, minimum height=1cm, text centered,text width=10cm, draw=black,fill=orange!20]
\tikzstyle{arrow} = [thick,->,>=stealth]
\tikzstyle{state}=[shape=circle,draw=blue!50,fill=blue!20]
\tikzstyle{observation}=[shape=rectangle,draw=orange!50,fill=orange!20]
\tikzstyle{lightedge}=[<-,dotted]
\tikzstyle{mainstate}=[state,thick]
\tikzstyle{mainedge}=[<-,thick]
\definecolor{bitcolor}{rgb}{1,0.84314,0}
\definecolor{checkcolor}{rgb}{0.52941,0.80784,1}
\pgfplotsset{compat=1.18}
\renewcommand{\epsilon}{\varepsilon}
\newtheorem{theorem}{Theorem}
\newtheorem{rem}[theorem]{Remark}
\newtheorem{lem}[theorem]{Lemma}
\newtheorem{defn}[theorem]{Definition}
\newcommand{\cA}{\mathcal{A}}
\newcommand{\cG}{\mathcal{G}}
\newcommand{\cU}{\mathcal{U}}
\newcommand{\cX}{\mathcal{X}}
\newcommand{\bw}{\bm{w}}
\newcommand{\bx}{\bm{x}}
\newcommand{\by}{\bm{y}}
\newcommand{\mI}{\mathbb{I}}
\newcommand{\hn}{\mathbb{C}^q}
\newcommand{\hop}{\mathbb{S}}
\newcommand{\psd}{\mathbb{S}_{+}}
\newcommand{\pd}{\mathbb{S}_{++}}
\newcommand{\dop}{\mathbb{D}}
\renewcommand{\triangleq}{\coloneqq}
\newcommand{\sym}[1]{S_N}
\newcommand{\vnop}{\varoast}
\newcommand{\cnop}{\boxast}
\newcommand{\blambda}{\bm{\lambda}}
\newcommand{\psuc}{P_{\text{suc}}}
\newcommand{\perr}{P_{\text{err}}}
\newcommand{\psia}{\psi^{(1)}}
\newcommand{\psib}{\psi^{(2)}}
\newcommand{\lambdaa}{\lambda^{(1)}}
\newcommand{\lambdab}{\lambda^{(2)}}
\newcommand{\ga}{g^{(1)}}
\newcommand{\gb}{g^{(2)}}
\newcommand{\bmu}{\bm{\mu}}
\newcommand{\bmua}{\bm{\mu}^{(1)}}
\newcommand{\bmub}{\bm{\mu}^{(2)}}
\newcommand{\mua}{\mu^{(1)}}
\newcommand{\mub}{\mu^{(2)}}
\newcommand{\mone}{\bm{1}}
\newcommand{\mzero}{\bm{0}}
\title{Belief Propagation with Quantum Messages for Symmetric Q-ary Pure-State Channels}
\author[1,3]{Avijit Mandal}
\author[1,2,3]{Henry D. Pfister}
\affil[1]{Department of Electrical and Computer Engineering, Duke University}
\affil[2]{Department of Mathematics, Duke University}
\affil[3]{Duke Quantum Center, Duke University}
\begin{document}
\pagestyle{empty}
\maketitle
\begin{abstract}
Belief propagation with quantum messages (BPQM) provides a low-complexity alternative to collective measurements for communication over classical--quantum channels.
Prior BPQM constructions and density-evolution (DE) analyses have focused on binary alphabets.
Here, we generalize BPQM to symmetric q-ary pure-state channels (PSCs) whose output Gram matrix is circulant.
For this class, we show that bit-node and check-node combining can be tracked efficiently via closed-form recursions on the Gram-matrix eigenvalues, independent of the particular physical realization of the output states.
These recursions yield explicit BPQM unitaries and analytic bounds on the fidelities of the combined channels in terms of the input-channel fidelities.
This provides a DE framework for symmetric q-ary PSCs that allows one to estimate BPQM decoding thresholds for LDPC codes and to construct polar codes on these channels.

\end{abstract}
\vspace{-2mm}

\section{Introduction}

Coding for classical--quantum (CQ) channels was initiated by Holevo~\cite{holevo1998capacity} and by Schumacher and Westmoreland~\cite{schumacher1997sending}, who showed that reliable communication is achievable at rates up to the capacity via collective measurements over long blocks.  For many structured CQ channels, including pure-state constellations arising in optical communication~\cite{krovi2015optimal,da2013achieving}, implementing such collective measurements is a key bottleneck in practice.  This motivates structured receiver architectures whose complexity scales favorably with blocklength while retaining strong performance guarantees. A recently proposed algorithm called decoded quantum interferometry (DQI) reduces certain structured optimization tasks to reversible decoding problems, underscoring renewed interest in efficient quantum decoding primitives~\cite{jordan2025optimization}.

A prominent family of capacity-achieving codes in the classical setting is polar codes, introduced by Ar{\i}kan~\cite{arikan2009channel}.  Wilde and Guha extended polarization to binary-input CQ channels~\cite{wilde2012polar}, which was subsequently generalized for arbitrary $q$-ary CQ channels and CQ-MACs~\cite{nasser2018polar}.  While these works establish the existence of capacity-achieving constructions, the decoding measurements for general CQ channels are not known to admit a simple, message-passing structure analogous to classical belief propagation.

Belief propagation with quantum messages (BPQM) was introduced by Renes~\cite{Renes-njp17} as a quantum analogue of classical BP for decoding binary linear codes on pure-state channels (PSCs).  Piveteau and Renes proved that BPQM is simultaneously optimal for bit- and block-error probability on tree factor graphs and reduced the decoding complexity by introducing a quantum reliability register~\cite{Piveteauarxiv21}.  Subsequent work developed BPQM-style density-evolution (DE) analyses for symmetric binary CQ channels and used them to study LDPC and polar constructions~\cite{brandsen2022belief,mandal2023belief,mandal2024polarcodescqchannels}, with further extensions to turbo-style constructions~\cite{piveteau2025efficient}.  However, extending BPQM beyond binary alphabets has remained challenging. 

In this paper, we generalize BPQM to \emph{symmetric $q$-ary pure-state channels} (PSCs) whose output Gram matrices are circulant.  Our main observation is that, for this class, the effect of check-node and bit-node combining can be tracked \emph{entirely} via the list of eigenvalues (or eigen list) of the Gram matrix, independent of the particular choice of output states within the isometry equivalence class.  Concretely, we derive explicit eigen list update rules for check and bit node combining and provide corresponding BPQM unitaries, and discuss how the same operations extend to \emph{heralded mixtures} of symmetric PSCs via controlled unitaries. We leverage the eigen list recursion to develop practical DE procedures for both polar and LDPC ensembles on symmetric $q$-ary PSCs, enabling polar-code design for a target block error rate and estimating LDPC decoding thresholds.
We characterize symmetric Holevo information, channel fidelity and pretty good measurement error rate using the Gram matrix for symmetric  $q$-ary PSCs. We also obtain \emph{upper bounds} on the fidelity of the combined channels in terms of the input-channel fidelities 
via eigen list update rules from BPQM.
\vspace{-1mm}

\section{Background}
\vspace{-2mm}
\subsection{Notation}
The natural numbers are denoted by $\mathbb{N}=\left\{ 1,2,\ldots\right\} $ and  $\mathbb{N}_0=\mathbb{N}\cup \{0\}$. For $m\in \mathbb{N}_0$,  we use the shorthand $[m]\coloneqq\left\{ 0,\ldots,m-1\right\} $. We use boldface notation to denote vectors (e.g. $\by,\bw$).
We use $\mzero$ and $\mone$ to denote all zero and all one vector or matrix respectively if it is clear from the context. The identity matrix is denoted by $\mI$. For a matrix $M$, the element in the $i\textsuperscript{th}$ row and the $j\textsuperscript{th}$ column is denoted by $M_{i,j}$, while $M_{i,:}$ refers to the $i\textsuperscript{th}$ row of $M$. The canonical basis is represented using $\{\ket{j}\}_{j\in \mathbb{N}_0}$.
\subsection{Preliminaries}

\begin{defn}
    A classical-quantum (CQ) channel $W\colon x\rightarrow \rho_{x}$ takes a classical input $x$ from a finite size alphabet $\cX$ and outputs a density matrix $\rho_{x}$. 
\end{defn}
For the rest of the paper, we set the input alphabet size as $q$ and denote the input alphabet as $[q]$. All the addition and multiplication operations are defined with respect to modulo $q$, i.e., we consider $\mathbb{Z}_q$.  While we restrict our analysis to prime q for simplicity and space limitations, the BPQM update rules derived in Sec.~\ref{sec: BPQM q-ary symmetric PSC} rely solely on the diagonalization of the Gram matrix via the Fourier vectors (or more specifically, Discrete Fourier Transform). These results generalize naturally to any channel whose symmetry corresponds to a finite abelian group $\cG$ by replacing the DFT with the group character table.
\begin{defn}
    Two CQ channels $W$ and $W'$ with the same input alphabet size are isometrically (or unitarily) equivalent if there exists an isometry (or unitary) $V$, such that $W'(j)=VW(j)V^{\dagger}$  $\forall j\in [q]$.
\end{defn}


\begin{defn}
    A CQ channel $W\colon j\rightarrow \rho_{j}$ is called a pure state channel (PSC), if the output state $\rho_{j}=\ketbra{\psi_j}{\psi_j}$ is a pure state i.e. a rank-1 matrix $\forall j\in [q]$.
\end{defn}

 \vspace{-2mm}
\subsection{Symmetric $q$-ary Pure State Channel}

Consider a collection of quantum states $
\{\ket{\psi_u}\}_{u\in [q]}$. Let the Gram matrix $G$ be defined by $G_{i,j}=\bra{\psi_{i}}\ket{\psi_j}.$
Denote the first row of $G$ as $ G_{0,:}=[g_0,\dots ,g_{q-1}].$
When the Gram matrix is circulant, it satisfies $ G_{i,j}=g_{(j-i)}$ where $(j-i)$ is defined in modulo $q$.

\begin{defn}
    A PSC $W \colon u\rightarrow \ketbra{\psi_u}{\psi_u}$ is called symmetric $q$-ary PSC, if the Gram matrix formed by the states $\{\ket{\psi_u}\}_{u\in [q]}$ is circulant. 
\end{defn}
Define $\omega\coloneqq e^{2\pi\mathrm{i}/q}$ to be a primitive $q$\textsuperscript{th} root of unity. The orthogonality relation that we are going to use throughout the paper is that for some fixed $i,j\in [q]$, $ \sum_{k\in [q]}\omega^{(i-j)k}=q\delta_{i,j}.$
We denote the $m$\textsuperscript{th}  Fourier vector $\ket{v_m}$ componentwise with $\braket{j}{v_m}=\frac{1}{\sqrt{q}}\omega^{jm}$ for all $j,m\in [q]$. Using the orthogonality relation of $\omega$, it follows that $ \braket{v_m}{v_{m'}} =\delta_{m,m'}.$
\begin{lem}\label{lem:gram eigenvector}
    For $m\in[q]$, $\ket{v_{m}}$ is an eigenvector of $G$ satisfying $G \ket{v_m} = \lambda_m \ket{v_m}$ where $\lambda_{m}=\sum_{j\in [q]}g_{j}\omega^{jm}$ $\forall m\in [q]$.
\end{lem}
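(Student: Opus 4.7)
The plan is a direct computation that exploits (i) the circulant form $G_{i,j}=g_{(j-i)\bmod q}$ and (ii) the multiplicative structure $\omega^{jm}=\omega^{im}\omega^{(j-i)m}$ of the Fourier components. The target identity to verify, componentwise in the canonical basis, is
\begin{equation}
\bra{i}G\ket{v_m} \;=\; \lambda_m\braket{i}{v_m}\qquad\forall i\in[q],
\end{equation}
with $\lambda_m=\sum_{k\in[q]}g_k\omega^{km}$.

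First I would expand the left-hand side using the resolution of identity $\sum_{j\in[q]}\ketbra{j}{j}=\mI$, which gives $\bra{i}G\ket{v_m}=\sum_{j\in[q]}G_{i,j}\braket{j}{v_m}=\frac{1}{\sqrt{q}}\sum_{j\in[q]}g_{(j-i)\bmod q}\,\omega^{jm}$. Then I would perform the change of summation index $k\coloneqq(j-i)\bmod q$; since both $j$ and $k$ range over the complete residue system $[q]$ and $g$ is indexed modulo $q$, the sum becomes $\frac{1}{\sqrt{q}}\sum_{k\in[q]}g_k\,\omega^{(k+i)m}=\omega^{im}\cdot\frac{1}{\sqrt{q}}\sum_{k\in[q]}g_k\omega^{km}=\lambda_m\braket{i}{v_m}$, using $\braket{i}{v_m}=\frac{1}{\sqrt{q}}\omega^{im}$. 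Since this holds for every $i\in[q]$, one concludes $G\ket{v_m}=\lambda_m\ket{v_m}$.

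The only subtle point, and thus the main (minor) obstacle, is justifying that $\omega^{(j-i)m}=\omega^{((j-i)\bmod q)m}$, so that the index substitution $k=(j-i)\bmod q$ preserves the exponential factor. This follows immediately from $\omega^q=1$, which makes the map $j\mapsto\omega^{jm}$ a well-defined function on $\mathbb{Z}_q$. After that, the argument is just rearrangement of a finite sum and no further tools are needed. Orthonormality of the Fourier vectors, already observed from the orthogonality relation of $\omega$, ensures that the $\{\ket{v_m}\}_{m\in[q]}$ form a complete eigenbasis and hence exhaust the spectrum of $G$.
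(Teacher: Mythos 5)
Your proof is correct and follows essentially the same route as the paper: expand the $i$\textsuperscript{th} component $G_{i,:}\ket{v_m}$ using circulance, shift the summation index, and factor out $\omega^{im}$ to recognize $\lambda_m\braket{i}{v_m}$. The extra remark that $\omega^q=1$ is what licenses the modular index substitution is a fair (if unstated in the paper) justification, and the closing observation about completeness of the Fourier eigenbasis is a harmless bonus.
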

The proof can be found in Appendix~\ref{Appendix:Gram matrix properties}. 
We denote the eigenvalues of the Gram matrix $G$ using  the ordered list $\blambda=[\lambda_0,\dots,\lambda_{q-1}]$ which we refer to as the eigen list. Note that we use the specific ordering implied by the stated Fourier transform and its correspondence to eigenvectors. Using the eigen list $\blambda$ and the Fourier vectors $\{\ket{v_{m}}\}_{m\in [q]}$ define the state $\ket{\psi_u}$ as
\begin{align}\label{eq:psi-fourier-form}
    \ket{\psi_u} &\coloneqq \frac{1}{\sqrt{q}}\sum_{j\in [q]}\sqrt{\lambda_{j}}\omega^{-uj}\ket{v_{j}}
\end{align}
\begin{lem}\label{lem: eigenlist trace relation}
    For Gram matrix $G$ with eigen list $\blambda=[\lambda_0,\dots,\lambda_{q-1}]$, it holds $\sum_{u\in [q]}\lambda_u  =q$.
\end{lem}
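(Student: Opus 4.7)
The plan is to observe that the sum of eigenvalues of any matrix equals its trace, so the statement reduces to showing $\operatorname{Tr}(G) = q$. Since $G$ is the Gram matrix of the pure states $\{\ket{\psi_u}\}_{u\in[q]}$, each diagonal entry is $G_{u,u} = \braket{\psi_u}{\psi_u} = 1$ because $\ket{\psi_u}$ is a unit vector. Summing over $u\in[q]$ yields $\operatorname{Tr}(G)=q$, which equals $\sum_{u\in[q]} \lambda_u$.

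Alternatively, one can argue directly from the closed form $\lambda_m = \sum_{j\in[q]} g_j \omega^{jm}$ given in Lemma~\ref{lem:gram eigenvector}. Summing over $m$ and swapping the order of summation gives
\begin{align}
\sum_{m\in[q]} \lambda_m \;=\; \sum_{j\in[q]} g_j \sum_{m\in[q]} \omega^{jm} \;=\; \sum_{j\in[q]} g_j \cdot q\,\delta_{j,0} \;=\; q\, g_0,
\end{align}
where the middle step uses the orthogonality relation $\sum_{k\in[q]} \omega^{(i-j)k} = q\delta_{i,j}$ quoted just above the lemma. Since $g_0 = G_{0,0} = \braket{\psi_0}{\psi_0} = 1$, we conclude $\sum_m \lambda_m = q$.

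There is no real obstacle here; the result is essentially a one-line trace identity. The only thing worth doing carefully is pointing to the normalization $g_0 = 1$ of the first row of $G$, which is implicit in the PSC assumption, and invoking the orthogonality relation already introduced in the paper so that the argument is self-contained within the established notation.
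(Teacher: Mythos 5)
Your first argument is exactly the paper's proof: both identify $\sum_u \lambda_u$ with $\operatorname{Tr}(G)$ and use that each diagonal entry $\braket{\psi_u}{\psi_u}=1$, giving $q$. The alternative you offer via the Fourier form of $\lambda_m$ and the orthogonality relation is also correct and makes the dependence on $g_0=1$ explicit, but the trace identity is the cleaner route and is what the paper uses.
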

\begin{proof}
    Observe that 
    \begin{align*}
        \sum_{u\in [q]}\lambda_u = \Tr(G)
        = \sum_{u\in [q]}\braket{\psi_u}{\psi_u}
         = q \qquad  \qedhere
    \end{align*}
\end{proof}

The immediate consequence of Lemma~\ref{lem: eigenlist trace relation} is that $\bmu=[\mu_0,\dots,\mu_{q-1}]$ where $\mu_j=\frac{\lambda_j}{q}$ $\forall j\in [q]$ forms a probability distribution on $[q]$ which we refer to as normalized eigen list.
\begin{lem}\label{lem:canonical state representation}
    All symmetric $q$-ary pure-state channels $W\colon u\rightarrow \ketbra{\psi_{u}}{\psi_u}$ with $u\in [q]$ are determined (up to isometric equivalence) by their eigen list $\blambda$.
\end{lem}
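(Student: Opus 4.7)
The plan is to separate the claim into two independent facts: \emph{(a)} within the class of circulant Gram matrices, the eigen list $\blambda$ (in the fixed Fourier ordering) determines $G$ bijectively, and \emph{(b)} any two families of states with the same Gram matrix are isometrically equivalent. Combining (a) and (b) yields the lemma immediately, since two symmetric $q$-ary PSCs with the same $\blambda$ share a Gram matrix and therefore admit an isometry between their output states.

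For (a), I would start from Lemma~\ref{lem:gram eigenvector}, which gives $\lambda_m=\sum_{j\in[q]}g_j\omega^{jm}$; this is precisely the DFT of the first row $[g_0,\dots,g_{q-1}]$ of $G$. Inverting the DFT yields $g_j=\tfrac{1}{q}\sum_{m\in[q]}\lambda_m\omega^{-jm}$, so the first row, and hence the entire circulant matrix $G$, is uniquely recovered from $\blambda$.

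For (b), given realizations $\{\ket{\psi_u}\}$ and $\{\ket{\psi'_u}\}$ with the common Gram matrix $G$, I would define $V$ on $\mathrm{span}\{\ket{\psi_u}\}$ by $V\ket{\psi_u}=\ket{\psi'_u}$ and extend linearly. Well-definedness follows from the observation that any linear relation $\sum_u c_u\ket{\psi_u}=0$ forces $\sum_u c_u G_{v,u}=0$ for every $v$, which makes $\sum_u c_u\ket{\psi'_u}$ orthogonal to every $\ket{\psi'_v}$ and therefore (being in their span) zero. Inner-product preservation is immediate from $G_{u,v}=\bra{\psi_u}\ket{\psi_v}=\bra{\psi'_u}\ket{\psi'_v}$, so $V$ extends to an isometry on the span and then to the full ambient Hilbert space. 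The identity $V\ketbra{\psi_u}{\psi_u}V^\dagger=\ketbra{\psi'_u}{\psi'_u}$ is then the defining condition of isometric equivalence.

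The main subtlety is the well-definedness of $V$ when the $\ket{\psi_u}$ are linearly dependent, i.e., when some $\lambda_m$ vanish; the argument above handles this, but a completely equivalent route is to invoke the canonical form~\eqref{eq:psi-fourier-form} directly. Performing an SVD of $\Psi\coloneqq\sum_{u\in[q]}\ket{\psi_u}\bra{u}$ with the Fourier basis as right singular vectors produces, for each realization, an orthonormal family $\{\ket{u_m}:\lambda_m>0\}$ in the ambient space satisfying $\ket{\psi_u}=\tfrac{1}{\sqrt q}\sum_{m}\sqrt{\lambda_m}\,\omega^{-um}\ket{u_m}$, and the explicit isometry $V=\sum_{m:\lambda_m>0}\ket{u'_m}\bra{u_m}$ carries one realization onto the other by linear substitution, with no ambiguity coming from the zero-eigenvalue directions.
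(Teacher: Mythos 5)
Your proposal is correct and follows essentially the same route as the paper: the eigen list and the circulant Gram matrix determine each other via the DFT (the paper's Lemma~\ref{lem:gram eigenvector} and Lemma~\ref{lem:gram element eigenvalue relation}), and families of states sharing a Gram matrix are isometrically equivalent, which the paper exhibits by computing $\braket{\psi_{u'}}{\psi_u}=g_{u-u'}$ for the canonical realization~\eqref{eq:psi-fourier-form}. Your write-up is in fact slightly more complete than the paper's, since you explicitly verify well-definedness of the isometry $V$ when some $\lambda_m$ vanish and the $\ket{\psi_u}$ are linearly dependent --- a point the paper's proof leaves implicit.
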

The proof can be found in Appendix~\ref{Appendix:Gram matrix properties}. 
For circulant Gram matrix, to describe PSC $W$,  we will use quantum states $\{\ket{\psi_u}\}_{u\in [q]}$ defined in \eqref{eq:psi-fourier-form}.
\subsection{Information Measures}
For symmetric $q$-ary PSC $W$, for the rest of the paper we consider input distribution for $W$ to be uniform i.e. each classical input is chosen with probability $\frac{1}{q}$.
To characterize $W$ with uniform input ensemble, we consider two information measures, namely symmetric Holevo information which we denote as $I(W)$ and channel fidelity which is denoted by $F(W)$. More detailed description of $I(W)$ and $F(W)$ for  channel $W$ is provided in Appendix~\ref{Appendix: Information Measures}.
Below lemma establishes the relation between $I(W)$ and $F(W)$ with Gram matrix $G$.
\begin{lem}\label{lem:Holevo information and channel fidelity}
Consider the symmetric  $q$-ary pure-state channel $W\colon u\rightarrow \ketbra{\psi_{u}}{\psi_u}$ with $u\in [q]$ with circulant Gram matrix $G$ and eigenvalue spectrum $\blambda=[\lambda_0,\dots,\lambda_{q-1}]$. Then, for the uniform distribution over input alphabets $u\in [q]$, the symmetric Holevo information $I(W)$ and the channel fidelity $F(W)$ , for channel $W$ satisfies 
\begin{align*}
    I(W) &\ = H(\bmu)\\
  F(W) & =\frac{1}{q-1}\sum_{u=1}^{q-1}|g_{u}|
\end{align*}
where $\bmu=[\mu_0,\dots,\mu_{q-1}]$ is the normalized eigen list with $\mu_j=\frac{\lambda_j}{q}$ $\forall j\in [q]$ and $H(\bmu)$ is the Shannon entropy for $\bmu$.
\end{lem}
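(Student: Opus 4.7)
The proof splits into two essentially independent pieces, one for each identity.

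For the Holevo term, the plan is to observe that since each $\rho_u=\ketbra{\psi_u}{\psi_u}$ is pure, $S(\rho_u)=0$, so the symmetric Holevo information reduces to $I(W)=S(\rho)$, where $\rho=\frac{1}{q}\sum_{u\in[q]}\ketbra{\psi_u}{\psi_u}$. I would then substitute the Fourier representation of $\ket{\psi_u}$ from \eqref{eq:psi-fourier-form} and push the sum over $u$ inside, obtaining
\begin{equation*}
\rho=\frac{1}{q^2}\sum_{j,k\in[q]}\sqrt{\lambda_j\lambda_k}\,\ketbra{v_j}{v_k}\sum_{u\in[q]}\omega^{u(k-j)}.
\end{equation*}
The inner sum collapses by the orthogonality relation $\sum_u\omega^{u(k-j)}=q\,\delta_{j,k}$, leaving $\rho=\sum_j\mu_j\ketbra{v_j}{v_j}$ since $\mu_j=\lambda_j/q$. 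Because $\{\ket{v_j}\}$ is orthonormal (stated just before Lemma~\ref{lem:gram eigenvector}), this is already a spectral decomposition, and hence $S(\rho)=H(\bmu)$ as claimed.

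For the fidelity term, the approach is to unpack the definition of $F(W)$ in the appendix, which for an equiprobable pure-state ensemble reduces to the average pairwise overlap $F(W)=\frac{1}{q(q-1)}\sum_{u\neq u'}|\braket{\psi_u}{\psi_{u'}}|$. Using $\braket{\psi_u}{\psi_{u'}}=G_{u,u'}$ together with the circulant structure $G_{u,u'}=g_{(u'-u)\bmod q}$, the double sum reorganizes by the shift $k=u'-u$:
\begin{equation*}
\sum_{u\neq u'}|G_{u,u'}|=\sum_{u\in[q]}\sum_{k=1}^{q-1}|g_k|=q\sum_{k=1}^{q-1}|g_k|,
\end{equation*}
and dividing by $q(q-1)$ yields the stated formula.

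The only real obstacle is a notational/definitional one: I would need to state precisely which appendix-level definition of $F(W)$ is in force before the circulant reduction, and confirm that the ``symmetric Holevo information'' reduces to $S(\rho)$ under uniform inputs; both should be immediate from the definitions in Appendix~\ref{Appendix: Information Measures}. Once those are in place, the computations above are routine manipulations with the DFT identity and the circulant pattern of $G$.
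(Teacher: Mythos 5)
Your proof is correct. The fidelity half is essentially identical to the paper's: expand the pure-state fidelity into pairwise overlaps $|\braket{\psi_u}{\psi_{u'}}|$, identify these with the circulant entries $g_{u'-u}$, and reindex. For the Holevo half you take a genuinely different route. The paper introduces the matrix $\Psi=[\,\ket{\psi_0}\cdots\ket{\psi_{q-1}}\,]$ and uses the general fact that $\Psi\Psi^\dagger$ and $\Psi^\dagger\Psi=G$ have the same nonzero spectrum, so $S\!\left(\tfrac{1}{q}\Psi\Psi^\dagger\right)=H(\bmu)$ without ever writing the states explicitly; this works for \emph{any} realization of the PSC with that Gram matrix. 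You instead substitute the canonical Fourier form \eqref{eq:psi-fourier-form}, push the sum over $u$ inside, and collapse the cross terms with $\sum_u\omega^{u(k-j)}=q\delta_{j,k}$ to read off $\rho=\sum_j\mu_j\ketbra{v_j}{v_j}$ as an explicit spectral decomposition in the Fourier basis. Your version is more concrete and elementary, but it implicitly fixes a particular state representative; to conclude for an arbitrary symmetric PSC you need Lemma~\ref{lem:canonical state representation} (isometric equivalence) plus invariance of $I(W)$ under isometries, which you gesture at but should state. The paper's $\Psi^\dagger\Psi$ trick sidesteps that step entirely and also anticipates its reuse in the PGM proof of Lemma~\ref{lem:pgm}, where $\bar\rho=\tfrac{1}{q}\Psi\Psi^\dagger$ appears again. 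Both arguments are sound.
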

The proof of this lemma is provided in Appendix~\ref{Appendix: Information Measures}. 

\subsection{Pretty Good Measurement on Symmetric $q$-ary PSC}\label{sec:pgm}
A pretty good measurement (PGM) \cite{hausladen1994pretty,holevo1978asymptotically}, which is also known as the square root measurement, is defined as follows, consider the $q$-ary CQ channel $W\colon j\rightarrow \rho_j$ with uniform input distribution on $[q]$. For the output state $\rho_j$ for $j\in [q]$, the measurement operator is $M_{j}=\frac{1}{q}\bar{\rho}^{-\frac{1}{2}}\rho_j\bar{\rho}^{-\frac{1}{2}}$ where $\bar{\rho}=\sum_{j\in [q]}\frac{1}{q}\rho_j$ and the inverse is taken on the support of $\bar{\rho}$.  This $q$-outcome measurement $\{M_j\}_{j\in [q]}$, is shown to be optimal \cite{eldar2002quantum} in terms of  minimizing the probability of state discrimination error for geometrically uniform quantum states. In this paper, to analyze the performance of BPQM,
we will use PGM to obtain the optimal symbol error probability. The following lemma establishes the relation between the error probability for discriminating the output states of symmetric $q$-ary PSC $W$ in terms of eigen list of the Gram matrix.  
\begin{lem}\label{lem:pgm}
Consider symmetric $q$-ary pure-state channel $W\colon u\rightarrow \ketbra{\psi_{u}}{\psi_u}$ with $u\in [q]$ with circulant Gram matrix $G$ and eigen list $\blambda=[\lambda_0,\dots,\lambda_{q-1}]$. Then, for the uniform distribution over input alphabets $u\in [q]$, the probability of error $\perr(W)$ for distinguishing the states optimally using pretty good measurement satisfies 
\begin{align*}
    \perr(W)=1-\left(\frac{1}{q}\sum_{u\in [q]}\sqrt{\lambda_u}\right)^{2}
\end{align*}
\end{lem}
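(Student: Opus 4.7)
The plan is to exploit the Fourier diagonalization in \eqref{eq:psi-fourier-form} to explicitly compute the PGM operators and then the success probability, which simplifies dramatically because the matrix $\bar{\rho}$ is already diagonal in the Fourier basis.

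First, I would compute the ensemble average $\bar{\rho}=\tfrac{1}{q}\sum_{u\in[q]}\ketbra{\psi_u}{\psi_u}$. Substituting the expansion $\ket{\psi_u} = \tfrac{1}{\sqrt{q}}\sum_j \sqrt{\lambda_j}\,\omega^{-uj}\ket{v_j}$ gives a double sum over Fourier indices $j,k$ whose $u$-sum collapses by orthogonality $\sum_{u\in[q]}\omega^{u(k-j)}=q\delta_{j,k}$. The outcome is
\begin{align*}
\bar{\rho} \;=\; \sum_{j\in[q]} \mu_j \ketbra{v_j}{v_j},
\end{align*}
so $\{\ket{v_j}\}$ is an eigenbasis and (on the support of $\bar\rho$) $\bar{\rho}^{-1/2}=\sum_{j:\lambda_j>0}\mu_j^{-1/2}\ketbra{v_j}{v_j}$. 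This step is conceptually central: it makes transparent that the PGM can be analyzed purely through the eigen list $\blambda$.

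Next I would evaluate $a_u\coloneqq \bra{\psi_u}\bar{\rho}^{-1/2}\ket{\psi_u}$ using the same Fourier expansion. The cross terms again vanish by orthogonality of the $\ket{v_j}$, leaving
\begin{align*}
a_u \;=\; \frac{1}{q}\sum_{j\in[q]} \lambda_j \,\mu_j^{-1/2}
\;=\; \frac{1}{\sqrt{q}}\sum_{j\in[q]}\sqrt{\lambda_j},
\end{align*}
which is the key structural fact: $a_u$ is independent of $u$, reflecting the geometric uniformity induced by the circulant Gram matrix. For pure states one has the standard identity $\Tr(M_u\rho_u)=\tfrac{1}{q}a_u^{2}$, because $M_u\rho_u=\tfrac{1}{q}\bar{\rho}^{-1/2}\ket{\psi_u}\bra{\psi_u}\bar{\rho}^{-1/2}\ket{\psi_u}\bra{\psi_u}$ and its trace is $\tfrac{1}{q}(\bra{\psi_u}\bar{\rho}^{-1/2}\ket{\psi_u})^2$.

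Finally I would assemble the success probability for the uniform prior,
\begin{align*}
\psuc(W) \;=\; \sum_{u\in[q]}\frac{1}{q}\Tr(M_u\rho_u) \;=\; \frac{1}{q^{2}}a_u^{2}\cdot q \;=\; \left(\frac{1}{q}\sum_{u\in[q]}\sqrt{\lambda_u}\right)^{2},
\end{align*}
and take $\perr(W)=1-\psuc(W)$, which is exactly the claim. The only subtlety worth flagging is the support issue: if some $\lambda_j=0$, then $\bar\rho^{-1/2}$ is interpreted on $\supp(\bar\rho)$, but those indices also contribute $\sqrt{\lambda_j}=0$ to the sum, so the formula is unchanged. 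I expect this bookkeeping to be the main (minor) obstacle; the rest is bookkeeping in the Fourier basis once Lemma~\ref{lem:gram eigenvector} and \eqref{eq:psi-fourier-form} are in hand.
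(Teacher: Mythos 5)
Your proof is correct and follows essentially the same path as the paper's: both diagonalize $\bar{\rho}$ in the Fourier basis $\{\ket{v_j}\}$, reduce the PGM success probability to a single overlap that is manifestly independent of $u$ by circulant symmetry, and sum. The only cosmetic difference is that you carry the quantity $a_u=\bra{\psi_u}\bar{\rho}^{-1/2}\ket{\psi_u}$ directly, whereas the paper first forms $\ket{\Gamma_u}=\tfrac{1}{\sqrt{q}}\bar{\rho}^{-1/2}\ket{\psi_u}$ and computes $\braket{\Gamma_u}{\psi_u}$, which is the same inner product up to the $1/\sqrt{q}$ normalization you absorb into your final bookkeeping; your handling of the $\lambda_j=0$ support issue is also a nice touch.
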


The proof of this lemma is provided in Appendix~\ref{Appendix: pgm}.

\section{BPQM on Symmetric PSC}\label{sec: BPQM q-ary symmetric PSC}
Check and bit node combining operations are the fundamental building blocks for decoding linear codes on tree factor graphs. Specifically, symbol estimation decomposes the graph into a sequence of these operations. Their definitions for CQ channels follow
\begin{defn}[Check node combining]
   For symmetric $q$-ary PSCs $W_1$ and $W_2$ the check node channel combining operation $W_1\cnop W_2$, $\forall l\in [q]$ is defined as
   \begin{align*}
       [W_1\cnop W_2](l)\coloneqq\frac{1}{q}\sum_{u\in [q]}W_1(u)\otimes W_2(u-l).
   \end{align*}

\end{defn}

\begin{defn}[Bit node combining]
   For symmetric $q$-ary PSCs $W_1$ and $W_2$ the bit  node channel combining operation $W_1\vnop W_2$ , $\forall l\in [q]$ is defined as
   \begin{align*}
       [W_1\vnop W_2](l)\coloneqq W_1(l)\otimes W_2(l).
   \end{align*}
\end{defn}
 
For channel combining operations, BPQM finds a unitary transformation, such that the output of the combined channel becomes a CQ channel output  in the first register with  classical side information contained in the second register. For symmetric $q$-ary PSCs, BPQM converts combined channel output to a symmetric $q$-ary PSC output in the first register. The resultant channel can be characterized via an ensemble of symmetric PSCs known as heralded mixtures of symmetric $q$-ary PSCs. 

Characterizing these PSCs is often a hard task without a suitable choice of a unitary. The key property of BPQM is that, when we apply a suitable unitary on the output of combined channel of two heralded mixtures of  symmetric $q$-ary PSCs the output state can also be described as a heralded mixture of symmetric $q$-ary PSCs (see Appendix~\ref{appendix:hearlded PSC}). 
Thus, we apply BPQM repeatedly based on the channel combining operations associated with the factor graph, and apply a qudit measurement at the end (PGM in the context of minimizing symbol error probability) to estimate the root symbol without sacrificing optimality.

 If we can characterize the BPQM output of check and bit node combining operations for symmetric $q$-ary PSCs i.e. heralded mixtures of symmetric $q$-ary PSCs using the eigen list of the Gram matrices, then we can efficiently classically estimate the output channel. This realization is the key feature of BPQM, which lets us estimate the optimal symbol error probability without the need of implementing the actual quantum decoder which we describe in more detail in the density evolution section (Sec.~\ref{sec:density evolution}). In the following lemmas, we characterize the output symmetric $q$-ary PSCs for BPQM on check and bit node channel combining operations.
\begin{lem}\label{lem:check node}
    Consider  symmetric $q$-ary PSCs $W_{1}$ and $W_{2}$ with Gram matrices $G^{(1)}$ and $G^{(2)}$ which have eigen lists $\blambda_{1}=[\lambda^{(1)}_{0},\dots,\lambda_{q-1}^{(1)}]$ and $\blambda_{2}=[\lambda^{(2)}_{0},\dots,\lambda_{q-1}^{(2)}]$ respectively. Then, the check node combined channel $W_{1}\cnop W_{2}$ can be decomposed into an ensemble  of symmetric $q$-ary PSCs $\{p^{\cnop}_{m},W^{\cnop}_{m}\}_{m\in [q]}$ where  $p^{\cnop}_{m}$ is the probability of the $m$\textsuperscript{th} $q$-ary PSC $W^{\cnop}_{m}$. Let $\blambda^{\cnop}_{m}=[\lambda^{(\cnop,m)}_0,\dots ,\lambda^{(\cnop,m)}_{q-1}]$ be the eigen list for $W^{\cnop}_{m}$ for $m\in [q]$. Then, for  $m\in[q]$, $p_{m}^{\cnop}$ and $\blambda^{\cnop}_{m}$ are computed using $\blambda_1$ and $\blambda_2$ as below
    \begin{align*}
        p_{m}^{\cnop} & =\frac{1}{q^2}\sum_{j\in [q]}\lambda_{(m+j)}^{(1)}\lambda_{-j}^{(2)}\\
        \lambda_{j}^{(\cnop,m)} & = \frac{1}{qp_{m}^{\cnop}}\lambda_{(m+j)}^{(1)}\lambda_{-j}^{(2)}
    \end{align*}
    \end{lem}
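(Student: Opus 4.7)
The plan is to work directly in the Fourier basis and exhibit an explicit BPQM-type unitary $U$ that transforms $[W_1\cnop W_2](l)$ into a heralded-mixture form with a classical second register. Using \eqref{eq:psi-fourier-form} I would expand
\begin{align*}
\ket{\psi_u^{(1)}}\otimes\ket{\psi_{u-l}^{(2)}} = \frac{1}{q}\sum_{n,k\in[q]} \sqrt{\lambda_n^{(1)}\lambda_k^{(2)}}\,\omega^{-un-(u-l)k}\,\ket{v_n}\ket{v_k},
\end{align*}
and observe that the $u$-dependence in the exponent collapses to $\omega^{-u(n+k)}$. This motivates choosing $U$ to be the controlled Fourier-basis shift $\ket{v_n}\ket{v_k}\mapsto\ket{v_n}\ket{v_{n+k}}$ (a qudit CNOT in the $\{\ket{v_j}\}$ basis), which, after the substitution $m\coloneqq n+k$, routes the Fourier-index sum $m$ into the second register while leaving the $l$-dependent phase $\omega^{-l n}$ on the first.

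Forming $\rho_l\coloneqq[W_1\cnop W_2](l)$ and averaging over $u$, the orthogonality relation $\sum_{u\in[q]}\omega^{-u(m-m')} = q\,\delta_{m,m'}$ annihilates all off-diagonal herald cross terms and should leave
\begin{align*}
U\rho_l U^\dagger = \sum_{m\in[q]} p_m^{\cnop}\,\ketbra{\phi_l^{(m)}}{\phi_l^{(m)}}\otimes\ketbra{v_m}{v_m},
\end{align*}
with $p_m^{\cnop}=\frac{1}{q^2}\sum_n \lambda_n^{(1)}\lambda_{m-n}^{(2)}$ and $\ket{\phi_l^{(m)}} = \frac{1}{\sqrt{q}}\sum_n \sqrt{\tilde{\lambda}_n^{(\cnop,m)}}\,\omega^{-ln}\ket{v_n}$, where $\tilde{\lambda}_n^{(\cnop,m)} \coloneqq \lambda_n^{(1)}\lambda_{m-n}^{(2)}/(q\,p_m^{\cnop})$. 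Since the second register is diagonal in $\{\ket{v_m}\}$ it acts as a classical herald, and matching $\ket{\phi_l^{(m)}}$ against \eqref{eq:psi-fourier-form} identifies the conditional first-register state as the canonical output of a symmetric $q$-ary PSC driven by input $l$ with eigen list $\tilde{\blambda}^{(\cnop,m)}$.

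The main subtlety is reconciling $\tilde{\blambda}^{(\cnop,m)}$ with the exact form $\lambda_j^{(\cnop,m)} = \lambda_{m+j}^{(1)}\lambda_{-j}^{(2)}/(q\,p_m^{\cnop})$ in the lemma: the relabeling $n=m+j$ gives $\tilde{\lambda}_n^{(\cnop,m)} = \lambda_j^{(\cnop,m)}$, so the two eigen lists differ only by a cyclic shift by $m$. This shift is effected by a further $m$-controlled Fourier-basis relabeling $\ket{v_n}\mapsto\ket{v_{n-m}}$ on the first register (a unitary), so by Lemma~\ref{lem:canonical state representation} both descriptions define the same symmetric $q$-ary PSC up to isometric equivalence, which is exactly what the lemma asserts. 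The remaining routine verifications---that $\{p_m^{\cnop}\}$ sums to $1$ and each $\blambda^{\cnop}_m$ sums to $q$---both reduce to swapping the order of summation and invoking Lemma~\ref{lem: eigenlist trace relation}.
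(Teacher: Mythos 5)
Your proof is correct and follows essentially the same Fourier-basis conjugation approach as the paper: expand in the $\{\ket{v_j}\}$ basis, conjugate by a controlled-shift unitary, and use the $u$-averaging orthogonality relation to render the herald register diagonal. The only substantive difference is your choice of unitary $\ket{v_n}\ket{v_k}\mapsto\ket{v_n}\ket{v_{n+k}}$, which yields the eigen list cyclically shifted by $m$ and hence requires your extra $m$-controlled relabeling at the end, whereas the paper's $\tilde{U}^{\cnop}\colon\ket{v_j}\ket{v_{j'}}\mapsto\ket{v_{j+j'}}\ket{v_{-j'}}$ produces the stated $\lambda_j^{(\cnop,m)}=\lambda_{m+j}^{(1)}\lambda_{-j}^{(2)}/(q\,p_m^{\cnop})$ directly, with the herald appearing in the first register before the final $\mathrm{SWAP}$ and $F^\dagger$.
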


    \begin{lem}\label{lem:bit node}
        Consider symmetric $q$-ary PSCs $W_{1}$ and $W_{2}$ with Gram matrices $G^{(1)}$ and $G^{(2)}$ which have eigen list $\blambda_{1}=[\lambda^{(1)}_{0},\dots,\lambda_{q-1}^{(1)}]$ and $\blambda_{2}=[\lambda^{(2)}_{0},\dots,\lambda_{q-1}^{(2)}]$ respectively. Then, the bit node combined channel $W_{1}\vnop W_{2}$ is isometrically equivalent to  symmetric $q$-ary PSC $W^{\vnop}$ with Gram matrix eigen list $\blambda^{\vnop}=[\lambda^{\vnop}_{0},\dots, \lambda^{\vnop}_{q-1}]$ such that 
        \begin{align*}
            \lambda^{\vnop}_{j}=\frac{1}{q}\sum_{k\in [q]}\lambdaa_{k}\lambdab_{j-k}
        \end{align*}
    \end{lem}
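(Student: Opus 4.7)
The plan is to reduce the claim to a statement about Gram matrices and then apply the convolution theorem for circulant matrices. First, I would observe that for the bit-node combining, the output state associated with input $l$ is $\ket{\psi_l^{(1)}}\otimes\ket{\psi_l^{(2)}}$, so the combined channel is itself a pure-state channel whose Gram matrix $G^{\vnop}$ is the Hadamard (entry-wise) product of $G^{(1)}$ and $G^{(2)}$. Writing this out,
\begin{align*}
G^{\vnop}_{l,l'} = \bra{\psi_l^{(1)}}\ket{\psi_{l'}^{(1)}}\bra{\psi_l^{(2)}}\ket{\psi_{l'}^{(2)}} = G^{(1)}_{l,l'}G^{(2)}_{l,l'},
\end{align*}
which is again circulant since both factors are circulant, with first row entries $g^{\vnop}_j = g^{(1)}_j g^{(2)}_j$. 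Hence $W_1\vnop W_2$ is a symmetric $q$-ary PSC, and by Lemma~\ref{lem:canonical state representation} it is determined up to isometry by the eigen list of $G^{\vnop}$.

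Next, I would invoke Lemma~\ref{lem:gram eigenvector} to express the eigen list of $G^{\vnop}$ as the DFT of its first row,
\begin{align*}
\lambda^{\vnop}_j = \sum_{k\in[q]} g^{(1)}_k g^{(2)}_k \omega^{jk}.
\end{align*}
Then I would use the inverse DFT relation obtained by inverting Lemma~\ref{lem:gram eigenvector}, namely $g^{(1)}_k = \frac{1}{q}\sum_{r\in[q]}\lambda_r^{(1)}\omega^{-rk}$, substitute into the expression above, and swap the order of summation. The inner sum over $k$ produces $\sum_{k} g^{(2)}_k \omega^{(j-r)k}=\lambda^{(2)}_{j-r}$ by a second application of Lemma~\ref{lem:gram eigenvector}, yielding
\begin{align*}
\lambda^{\vnop}_j = \frac{1}{q}\sum_{r\in[q]}\lambda^{(1)}_r \lambda^{(2)}_{j-r},
\end{align*}
which is the claimed convolution formula after renaming $r \to k$.

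There is essentially no hard step; this is a direct computation once one recognizes bit-node combining as Hadamard multiplication of Gram matrices and invokes the DFT/convolution correspondence already packaged in Lemma~\ref{lem:gram eigenvector}. The only point requiring care is the sign convention in the Fourier exponent ($\omega^{jm}$ in the forward direction versus $\omega^{-rk}$ in the inverse direction), together with the use of arithmetic modulo $q$ in the index $j-k$ so that the convolution is cyclic; otherwise the result follows immediately.
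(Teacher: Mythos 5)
Your proposal is correct and follows essentially the same route as the paper: both identify the Gram matrix of $W_1\vnop W_2$ as the entry-wise product of the circulant matrices $G^{(1)}$ and $G^{(2)}$ and then translate this into a cyclic convolution of the eigen lists via the DFT (the paper substitutes the inverse DFT for both factors and reads off the coefficient, while you take the forward DFT of the product and substitute for one factor — the same computation in a slightly different order). No gaps.
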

Proofs of two lemmas are provided in Appendix~\ref{Appendix:BPQM for symmetric PSC}.
    \subsection{Check node and Bit node Unitary}\label{sec: check and bit node unitary description}
  We define check and bit node unitary for BPQM to be the unitaries which convert the outputs of the combined channel from symmetric $q$-ary PSCs to a heralded mixture of symmetric $q$-ary PSCs. For a heralded mixture of symmetric $q$-ary PSCs, we follow the convention that the second register  contains the classical side information. Thus, the state in the second register can be in any fixed orthonormal basis. 
    As described in the proof of Lemma~\ref{lem:check node}, the unitary $\tilde{U}^{\cnop}$ satisfies $\forall j,j'\in [q]$, 
    \begin{align}\label{eq:checknode unitary relation}
        \tilde{U}^{\cnop} \left( \ket{v_{j}}\otimes \ket{v_{j'}} \right) =\ket{v_{j+j'}}\otimes \ket{v_{-j'}}
    \end{align}
    where $\{\ket{v_j}\}_{j\in [q]}$ are the Fourier vectors and $-j'$ refers to $(-j'$ modulo $q)$. Using this we can get the matrix representation of $\tilde{U}^{\cnop}$ in $\{\ket{v_{j}}\}_{j\in [q]}$ basis. Note that in this case the first register after applying $\tilde{U}^{\cnop}$ stays in the Fourier basis $\{\ket{v_j}\}_{j\in [q]}$ which is sufficient to characterize the check node combined channel. Let $U^{\cnop}$ be the check node unitary such that the second register remains in canonical basis $\{\ket{j}\}_{j\in [q]}$. 
    Let $F$ be the DFT matrix such that $\ket{v_{j}}=F\ket{j}$ $\forall j\in [q]$ and SWAP be the unitary which swaps the states between two registers. Then $U^{\cnop}$ is implemented using $\tilde{U}^{\cnop}$ as follows
    \begin{align*}
        U^{\cnop}=(\mI\otimes F^{\dagger})(\text{SWAP}) \tilde{U}^{\cnop}
    \end{align*}
    Notice that the check node unitary $U^{\cnop}$, does not depend on the channels. The bit node unitary  $U^{\vnop}_{\blambda_1,\blambda_2}$, satisfies the following relation $\forall u\in [q]$,
    \begin{align}\label{eq:bitnode unitary relation}
U^{\vnop}_{\blambda_1,\blambda_2} \left( \ket{\psi^{(1)}_{u}}\otimes \ket{\psi^{(2)}_u} \right) =\ket{\psi^{\vnop}_u}\otimes \ket{0}
    \end{align}
    We can describe $U^{\vnop}_{\blambda_1,\blambda_2}$ as a combination of the following unitary operators. 
    Consider the unitary $U_{+}$ such that $\forall j,j'\in [q]$, it satisfies 
    \begin{align*}
        U_{+} \left(\ket{v_{j}}\otimes \ket{v_{j'}}\right)=\ket{v_{j+j'}}\otimes \ket{v_{j'}}
    \end{align*}
    Then, we get
\begin{align*}
   &  U_{+} \left( \ket{\psia_u}\otimes \ket{\psib_{u}} \right)\\
   & =  \frac{1}{q}\sum_{j\in [q]}\sum_{j'\in [q]}\sqrt{\lambdaa_{j}\lambdab_{j'}}\omega^{-uj-uj'}\ket{v_{j+j'}}\otimes\ket{v_{j'}} \\
   & = \frac{1}{q}\sum_{k\in [q]}\sum_{j'\in [q]}\sqrt{\lambdaa_{k-j'}\lambdab_{j'}}\omega^{-uk}\ket{v_{k}}\otimes\ket{v_{j'}}\\
   & = \frac{1}{q}\sum_{k\in [q]}\omega^{-uk}\ket{v_{k}} \otimes \ket{\smash{\tilde{\zeta}_{k}}}
\end{align*}
where $\ket{\smash{\tilde{\zeta}_{k}}}$ is an unnormalized vector satisfying
\begin{align*}
    \ket{\tilde{\zeta}_{k}} =\sum_{j\in [q]}\sqrt{\lambdaa_{k-j}\lambdab_{j}}\ket{v_{j}}
\end{align*}
The norm $\left\|\ket{\smash{\tilde{\zeta}_{k}}}\right\|^{2}$ satisfies 
\begin{align*}
   \left\|\ket{\smash{\tilde{\zeta}_{k}}}\right\|^{2}  = \sum_{j\in [q]}\lambdaa_{k-j}\lambdab_{j}
 = q\lambda^{\vnop}_{k}
\end{align*}
Defining $\ket{\zeta_{k}}=\frac{\ket{\tilde{\zeta}_k}}{\|\ket{\smash{\tilde{\zeta}_{k}}}\|}$, we get 
\begin{align*}
     &  U_{+} \left(\ket{\psia_u}\otimes \ket{\psib_{u}}\right) = \frac{1}{\sqrt{q}}\sum_{k\in [q]}\sqrt{\lambda^{\vnop}_{k}}\omega^{-uk}\ket{v_{k}} \otimes \ket{\zeta_{k}}
\end{align*}
Next, consider the following unitary $U^{\text{control}}_{\blambda_1,\blambda_2}$ as below
\begin{align*}
    U^{\text{control}}_{\blambda_1,\blambda_2}=\sum_{k\in [q]}\ketbra{v_k}{v_k}\otimes U_{\blambda_1,\blambda_2}^{k}
\end{align*}
where $U_{\blambda_1,\blambda_2}^{k}$ satisfies the following relation $\forall k\in [q]$
\begin{align*}
    U_{\blambda_1,\blambda_2}^{k}\ket{\zeta_{k}}= \ket{0}
\end{align*}
The unitary $U_{\blambda_1,\blambda_2}^{k}$ is constructed in the following way.
\begin{align*}
    U_{\blambda_1,\blambda_2}^{k}= \begin{cases}
        \mI \quad & ,\text{if $\ket{\zeta_{k}}=\ket{0}$}\\
        \mI-2\ketbra{\zeta'_k}{\zeta'_k} &, \text{otherwise}
    \end{cases}
\end{align*}

where $\ket{\zeta'_k}=\frac{\ket{\zeta_k}-\ket{0}}{||\ket{\zeta_k}-\ket{0}||}$. Thus setting $U^{\vnop}_{\blambda_1,\blambda_2}= U^{\text{control}}_{\blambda_1,\blambda_2}U_{+}$ satisfies the required relation in \eqref{eq:bitnode unitary relation}.
\begin{rem}
    While the check  and the bit node unitary (\eqref{eq:checknode unitary relation}, \eqref{eq:bitnode unitary relation}) are defined for the canonical choice of quantum states defined in \eqref{eq:psi-fourier-form}, we can easily extend the construction of unitaries for all isometrically (or unitarily) equivalent symmetric $q$-ary PSCs with same Gram matrix. Let $W\colon j\rightarrow \ket{\psi_j}$ be a symmetric $q$-ary PSC whose outputs are canonical quantum states for Gram matrix eigen list $\blambda$ as shown in \eqref{eq:psi-fourier-form}. Let $W'\colon j\rightarrow \ket{\psi_{j}'}$ be another  symmetric $q$-ary PSC such that $\ket{\psi_{j}'}=V_{W'}\ket{\psi_j}$ $\forall j\in [q]$ and some isometry (or unitary) $V_{W'}$. Then the check node and bit node unitary satisfy the following relations
    \begin{align*}
        U^{\cnop}_{W'} & = (V_{W'}\otimes \mI)U^{\cnop}(V_{W'}^{\dagger}\otimes V_{W'}^{\dagger})\\
        U^{\vnop}_{\blambda_1,\blambda_2,W'} & = (V_{W'}\otimes \mI)U^{\vnop}_{\blambda_1,\blambda_2}(V_{W'}^{\dagger}\otimes V_{W'}^{\dagger}).
    \end{align*}
\end{rem}
\begin{rem}
   While the check and the bit node unitary $U^{\cnop}$ and $U^{\vnop}_{\blambda_1,\blambda_2}$ are described for single round of BPQM on symmetric $q$-ary PSCs, it can be easily extended for heralded mixtures of PSCs by designing controlled unitaries which apply BPQM operation on symmetric $q$-ary PSCs associated with the two heralded mixtures of symmetric $q$-ary PSCs in a controlled manner. More generally, they can be described via the following forms up to SWAP  symmetry
   \begin{align*}
      & U^{\cnop}_{\text{control}} =U^{\cnop}\otimes \mI\\
      & U^{\vnop}_{\text{control},\{\blambda_{x_1}\}_{x_1\in \cX_1},\{\blambda_{x_2}\}_{x_2\in\cX_2}} \\
      & = \sum_{x_1\in \cX_1,x_2\in \cX_2}U^{\vnop}_{\blambda_{x_1},\blambda_{x_2}}\otimes \ketbra{x_1}{x_1}\otimes \ketbra{x_2}{x_2}.
   \end{align*}
   where $\cX_1$ and $\cX_2$ are finite sized alphabets.
\end{rem}

\begin{figure*}[t]

    \centering
    \begin{minipage}{0.48\linewidth}
        \centering
        \includegraphics[width=0.8\linewidth]{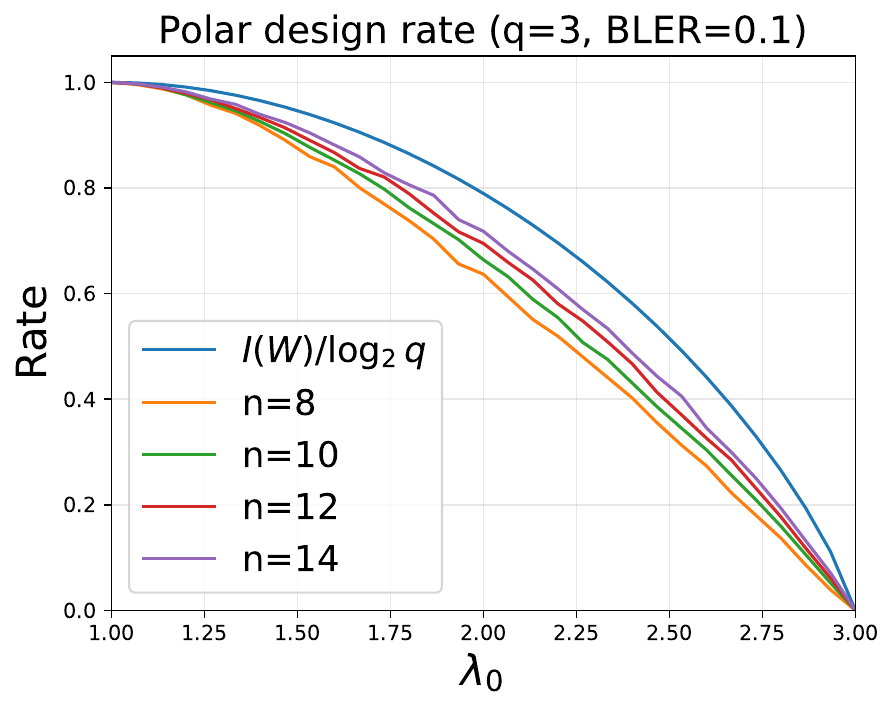}
        \vspace{-4mm} 
        \caption{Plot comparing rate $\frac{|\cA|}{N}$ with $N=2^n$ of polar codes for $q=3$}
        \label{plot:polar rate vs capacity q=3}
    \end{minipage}
    \hfill 
    \begin{minipage}{0.48\linewidth}
        \centering
        \includegraphics[width=0.9\linewidth]{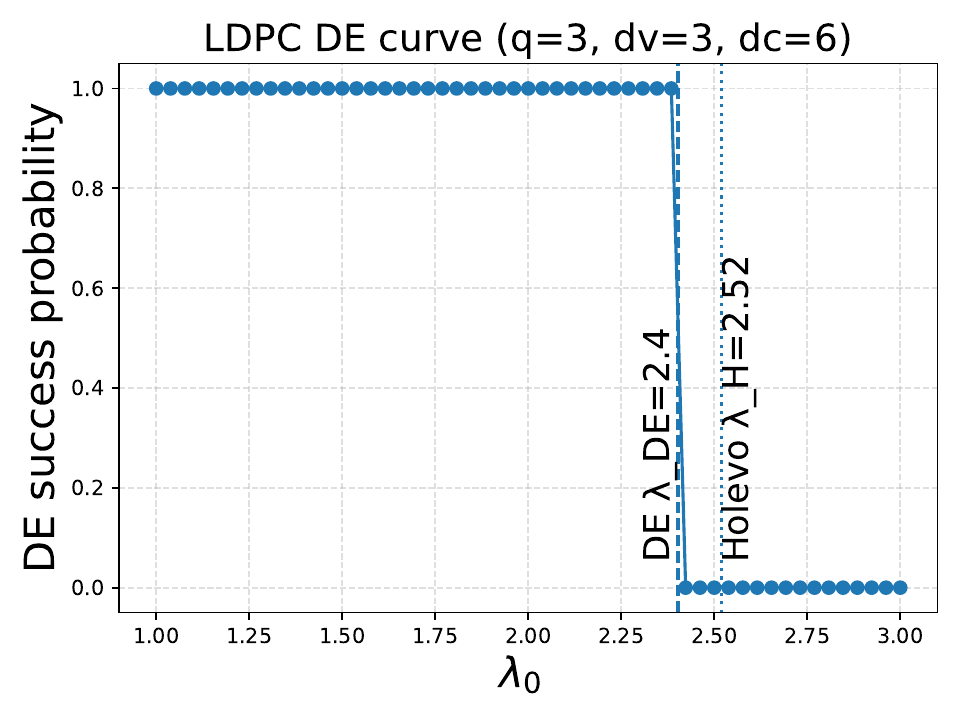}
        \vspace{-4mm} 
        \caption{Threshold plot for $(3,6)$ LDPC code with $q=3$}
        \label{plot: ldpc threshold curve}
    \end{minipage}
    \vspace{-2mm}
\end{figure*}

\vspace{-2mm}
\subsection{Fidelity Bounds}
Bhattacharyya coefficients play a key role for analyzing the asymptotic behavior of linear codes with tree factor graphs on classical channels \cite{sason2006performance}. Since channel fidelity generalizes the notion of Bhattacharyya coefficients to CQ channels, obtaining fidelity bounds help us analyze these codes on CQ channels  \cite{wilde2012polar,nasser2018polar}. In \cite{nasser2018polar}, authors obtained the fidelity bounds for check and bit node combined channels to analyze polar codes on arbitrary $q$-ary CQ channels. Below lemma obtains the fidelity bounds for bit and check node combined channels based on BPQM eigen list update rules for symmetric $q$-ary PSCs.
    \begin{lem}\label{lem:fidelity check and bit node bounds}
          Consider symmetric $q$-ary PSCs $W_{1}$ and $W_{2}$ with Gram matrices $G^{(1)}$ and $G^{(2)}$ which have eigen list $\blambda_{1}=[\lambda^{(1)}_{0},\dots,\lambda_{q-1}^{(1)}]$ and $\blambda_{2}=[\lambda^{(2)}_{0},\dots,\lambda_{q-1}^{(2)}]$ respectively. Then, for the check node and bit node combined channels $W_{1}\cnop W_{2}$ and $W_{1}\vnop W_{2}$, the following relations hold
          \begin{align*}
          F(W_{1}\vnop W_{2}) & \leq  (q-1)F(W_1)F(W_{2})\\
               F(W_{1}\cnop W_{2}) & \leq F(W_1)+F(W_{2})+ (q-1)F(W_1)F(W_{2}).
              \end{align*}
    \end{lem}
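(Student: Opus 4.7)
The plan is to reduce both bounds to the Gram-matrix formula $F(W)=\frac{1}{q-1}\sum_{u=1}^{q-1}|g_u|$ of Lemma~\ref{lem:Holevo information and channel fidelity}, and then feed in the BPQM eigen-list update rules of Lemmas~\ref{lem:check node} and~\ref{lem:bit node} together with the inverse DFT $g_j=\tfrac{1}{q}\sum_m \lambda_m\omega^{-jm}$. Channel fidelity is invariant under isometric equivalence (an isometry preserves inner products and hence the Gram matrix), so I can replace $W_1\vnop W_2$ by the canonical $W^{\vnop}$ of Lemma~\ref{lem:bit node}. For the heralded-mixture output of the check-node operation I use $F(W_1\cnop W_2)=\sum_m p_m^{\cnop}F(W_m^{\cnop})$, which follows because the BPQM check-node unitary leaves the output block-diagonal in the classical herald register, and Uhlmann fidelity of two block-diagonal classical--quantum states with the same herald distribution decomposes linearly into conditional fidelities.

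For the bit-node inequality, the canonical output $\ket{\psia_u}\otimes\ket{\psib_u}$ has Gram matrix equal to the entrywise (Hadamard) product of $G^{(1)}$ and $G^{(2)}$; equivalently, applying the inverse DFT to the convolution in Lemma~\ref{lem:bit node} gives $g_u^{\vnop}=g_u^{(1)}g_u^{(2)}$. Then
\begin{align*}
(q-1)F(W_1\vnop W_2)&=\sum_{u=1}^{q-1}|g_u^{(1)}||g_u^{(2)}|\\
&\leq \Bigl(\sum_{u=1}^{q-1}|g_u^{(1)}|\Bigr)\Bigl(\sum_{v=1}^{q-1}|g_v^{(2)}|\Bigr),
\end{align*}
since all cross terms $|g_u^{(1)}||g_v^{(2)}|$ with $u\neq v$ are non-negative; the right-hand side equals $(q-1)^2F(W_1)F(W_2)$, giving the claim.

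For the check-node inequality, I first rewrite, using Lemma~\ref{lem:check node}, $p_m^{\cnop}g_u^{(\cnop,m)}=\tfrac{1}{q^2}\sum_j \lambda_{m+j}^{(1)}\lambda_{-j}^{(2)}\omega^{-uj}$. Substituting $\lambda_k^{(i)}=\sum_l g_l^{(i)}\omega^{kl}$ and using $\sum_j\omega^{j(l_1-l_2-u)}=q\delta_{l_1-l_2,u}$ collapses this to $p_m^{\cnop}g_u^{(\cnop,m)}=\tfrac{1}{q}\sum_{l\in[q]} g_l^{(1)}g_{l-u}^{(2)}\omega^{ml}$. A triangle inequality in $l$ followed by the sum over $m\in[q]$ yields $\sum_m p_m^{\cnop}|g_u^{(\cnop,m)}|\leq \sum_{l\in[q]}|g_l^{(1)}||g_{l-u}^{(2)}|$. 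Splitting the $l$-sum into the three disjoint cases $l=0$, $l=u$, and $l\in[q]\setminus\{0,u\}$, using $g_0^{(i)}=1$ and $|g_{-u}|=|g_u|$ (Hermiticity of $G$), and then summing over $u\in\{1,\dots,q-1\}$ produces the three contributions $(q-1)F(W_2)$, $(q-1)F(W_1)$, and, after enlarging the residual double sum to the full $(q-1)\times(q-1)$ square, the upper bound $(q-1)^2F(W_1)F(W_2)$. Dividing by $q-1$ gives the stated inequality. The main subtlety is when to apply the triangle inequality: doing it in the eigenvalue basis destroys the $\omega^{ml}$ phase structure that collapses cleanly under the $m$-sum, whereas applying it after the DFT collapse in the Gram-entry basis is exactly what lets the $l=0$ and $l=u$ terms survive as the linear pieces $F(W_1)$ and $F(W_2)$ rather than being absorbed into the looser quadratic term.
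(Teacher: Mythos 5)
Your proof is correct and follows essentially the same route as the paper's: both pass via $F(W)=\tfrac{1}{q-1}\sum_{u=1}^{q-1}|g_u|$, use $g_u^{\vnop}=g_u^{(1)}g_u^{(2)}$ for the bit node, and for the check node compute $p_m^{\cnop}g_u^{(\cnop,m)}=\tfrac{1}{q}\sum_l g_l^{(1)}g_{l-u}^{(2)}\omega^{ml}$, apply the triangle inequality in $l$, sum over $m$, and split off the $l=0$ and $l=u$ terms before enlarging the residual double sum. Your explicit appeal to $|g_{-u}|=|g_u|$ (Hermiticity) tidies a step the paper elides, and your closing remark about where to apply the triangle inequality correctly identifies why the argument is done in the Gram-entry basis rather than the eigenvalue basis.
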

 The proof of the above lemma can be found in Appendix~\ref{Appendix: fidelity bounds}. 
 These fidelity bounds obtained  improve the fidelity upper bounds obtained in \cite{nasser2018polar} for symmetric $q$-ary PSCs. In Appendix~\ref{appendix:hearlded PSC},
 we show that the same fidelity bounds also hold for the check and bit node channel combining of two arbitrary heralded mixtures of symmetric $q$-ary PSCs.

 \vspace{-2mm}

\section{Density Evolution}\label{sec:density evolution}
Density evolution (DE) is a standard asymptotic analysis technique for belief-propagation decoding of tree factor graph based codes, where one tracks how the distribution of messages evolves with iterations~\cite{richardson2008modern}. In the context of decoding on CQ channels, a Monte Carlo based DE was introduced in~\cite{brandsen2022belief} which proposed using a generalization of BPQM called Paired Measurement BPQM (PM-BPQM) on binary symmetric CQ channels.   
In this paper, we follow the same principle of DE where message update is based on the eigen list of the Gram matrices for symmetric $q$-ary PSCs.\vspace{-2mm}
\subsection{Polar Codes}

Binary Polar codes are constructed by using the polar transform  kernel $K_N$ \cite{arikan2009channel} to encode a vector $u \in \{0,1\}^N$ whose values are freely chosen on a subset $\mathcal{A} \subseteq [N]$ of information positions but restricted to have fixed values on the complementary set $\mathcal{A}^c = [N] \setminus A$ of frozen positions.
For the frozen positions, the fixed values are shared with the receiver in advance to aid the decoding process. In the $q$-ary case, we follow the same philosophy. We still consider the polar transform of length $N=2^n$, but the matrix $K_2$ is defined in $\mathbb{Z}_q$. The successive cancellation (SC) decoder of polar code realizes a CQ channel corresponding to the $i\textsuperscript{th}$ input $u_i$ based on the classical side information $u_{1}^{i-1}$. More generally the effective channel $W^{i}$ for input $u_{i}\in [q]$ realized  by the SC decoder is written as follows 
\begin{align*}
    W^{(i)}_{N}(u_{i})=\frac{1}{q^{N-1}} \!\!\!\! \sum_{u_{\sim i}\in [q]^{N-1}} \!\!\!\! \ketbra{u^{i-1}_{1}}{u^{i-1}_{1}}\otimes \left( \bigotimes_{j=1}^N W([\bm{u} K_N]_j)\! \right)\!.
\end{align*}
 We can analyze these channels essentially using check node and bit node channel combining operations. More specifically, the $W^{i}_{N}$ channel satisfies the following recursions 
\begin{align*}
     W_{N}^{(2i-1)} &\ =W^{(i)}_{N/2}\cnop W^{(i)}_{N/2}\\
     W_{N}^{(2i)} &\ =W^{(i)}_{N/2}\vnop W^{(i)}_{N/2}.
\end{align*}
For a code with length $N=2^n$, it requires $n$ levels of channel recursions based on check and bit node channel combining operations. A detailed description of polar coding on  symmetric $q$-ary PSC is provided in Appendix~\ref{Appendix:polar coding on symmetric PSC}.
\begin{lem}
    Consider the polarization channel $\{W^{(i)}_{N}\}_{i=1}^{N}$ after $n$ rounds of polarization and target block error rate $\epsilon$, then the information set $\cA$ is designed as the largest index set such that $4\sum_{i\in \cA}\perr(W^{(i)}_N)\leq \epsilon$.
\end{lem}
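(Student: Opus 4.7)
The plan is to view successive cancellation (SC) decoding on the polarization channels $\{W^{(i)}_N\}$ as an adaptive sequence of quantum measurements, one per information index, and then bound the block error probability by a Gao-style quantum union bound in terms of the per-channel error probabilities $\perr(W^{(i)}_N)$ given by Lemma~\ref{lem:pgm}.

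First I would formalize the SC decoder. Process indices $i=1,\ldots,N$ in order: for each $i$, prepare (conditional on previously decoded $\hat u_1^{i-1}$) the classical side register in state $\ket{u_1^{i-1}}$ together with the $N$ channel outputs, then apply BPQM for the tree of check- and bit-node combinings that realizes $W^{(i)}_N$, following Sec.~\ref{sec: BPQM q-ary symmetric PSC} and the recursions preceding the lemma. The BPQM unitary collapses the combined output to a symmetric $q$-ary PSC (or a heralded mixture thereof) in a single register; we then apply the PGM on that register. By the optimality of BPQM combined with Lemma~\ref{lem:pgm}, the probability of an incorrect estimate at step $i$, conditioned on all previous estimates being correct, equals $\perr(W^{(i)}_N)$. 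Frozen indices contribute zero probability of error since their values are known at the receiver, so only $i\in\cA$ produce decoding decisions.

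Next I would invoke a quantum union bound of Gao-type: for a sequence of two-outcome projective measurements $\{\Pi_i,\mI-\Pi_i\}$ applied in order to a state $\rho$, with $\Tr(\Pi_i\rho_i)\geq 1-p_i$ on the appropriate (conditional) state, the probability that at least one measurement returns the ``failure'' outcome is at most $4\sum_i p_i$. Applying this to the sequence of ``correct-symbol'' projectors associated to the PGMs at each information index yields
\begin{align*}
P_B \ \leq\ 4\sum_{i\in\cA}\perr(W^{(i)}_N) \ \leq\ \epsilon,
\end{align*}
which is the desired block-error guarantee. Maximizing $|\cA|$ subject to the constraint $4\sum_{i\in\cA}\perr(W^{(i)}_N)\leq \epsilon$ is a trivial greedy problem: sort indices in increasing order of $\perr(W^{(i)}_N)$ and include them one by one until adding the next would exceed $\epsilon/4$; this is precisely the ``largest index set'' described.

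The main obstacle is rigorously justifying the union-bound step in the adaptive quantum setting. Gao's bound is stated for a fixed sequence of projectors, whereas here the projector at step $i$ depends on the outcomes of steps $1,\ldots,i-1$. The standard remedy is to lift each adaptive PGM to a coherent isometry (writing the outcome into an ancilla) and express the block-success event as the joint outcome of a fixed set of commuting projectors on an enlarged Hilbert space, at which point Gao's inequality applies verbatim; alternatively, one can iterate the gentle measurement lemma, but that route yields a $\sqrt{\cdot}$ bound and would need to be sharpened to recover the linear factor $4$. Either treatment is routine but should be spelled out to justify the explicit constant in the statement.
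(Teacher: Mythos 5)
Your proposal is correct and follows essentially the same route as the paper, which justifies the lemma only by the one-line remark that the factor $4$ comes from Gao's noncommutative union bound for sequential measurements \cite{gao2015quantum}. Your write-up is in fact more careful than the paper's: you correctly flag the adaptivity of the successive-cancellation measurements as the point that needs the coherent-lifting argument to apply Gao's bound with the stated constant, a detail the paper leaves implicit.
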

The multiplicative factor $4$ comes from the noncommuting nature of quantum states  in the context of sequential measurements \cite{gao2015quantum}. 
Based on Monte Carlo DE, we obtain the design rate  $\frac{|\cA|}{N}$  of the polar code for a target block error rate $\epsilon$. In this case, we choose the eigen list for channel $W$ to be $\blambda=[\lambda_0,\frac{q-\lambda_0}{q-1},\dots,\frac{q-\lambda_0}{q-1}]$ and $\lambda_0$ is the maximum eigenvalue of $G$. $\lambda_0=q$ implies the channel is useless as $I(W)=0$ and $\lambda_0=1$ implies the channel is perfect. In Fig.~\ref{plot:polar rate vs capacity q=3}, we vary $\lambda_0$ from 1 to $q$ with $q=3$ and obtain the design rate for different block lengths  via BPQM based on DE (as described in Algorithm~\ref{algo:polar qudit de})
with a target block error rate $\epsilon=0.1$. We also plot the curve of symmetric Holevo information $I(W)$. As expected, we observe that as we increase the block length, the design rate curve nears $I(W)$. More results based on DE are provided in Appendix~\ref{Appendix:DE results}.


\vspace{-1.5mm}
\subsection{LDPC Codes}

 Based on the eigen list update for symmetric $q$-ary PSC from Lemma~\ref{lem:check node},\ref{lem:bit node}, we obtain the thresholds in terms of eigen value of Gram matrix of channel $W$ via DE. In Algorithm~\ref{alg:ldpc_de_qudit} of Appendix~\ref{sec: DE Algorithms}, 
 the description of DE algorithm for $(dv,dc)$ LDPC codes on symmetric $q$-ary PSC with arbitrary eigen list $\blambda$ has been provided where $dv$ corresponds to bit degree and $dc$ corresponds to check degree.  We consider the channel $W$ with eigen list $\blambda=[\lambda_0,\frac{q-\lambda_0}{q-1},\dots,\frac{q-\lambda_0}{q-1}]$ to characterize the threshold in terms of a single parameter $\lambda_0$. In Fig.~\ref{plot: ldpc threshold curve}, we obtain the $\lambda_0$ threshold for $q=3$ and $(3,6)$ regular LDPC codes. Notice that the symmetric Holevo information bound in terms of $\lambda_0$ for rate $\frac{1}{2}$ code is $2.52$, while BPQM threshold is $2.4$. More threshold results for different $(dv,dc)$ pairs are shown in Appendix~\ref{Appendix:DE results}. 
\vspace{-2mm}
\section{Conclusion}
In this work, we propose a novel approach to study BPQM on symmetric $q$-ary PSCs based on the eigenvalues of the Gram matrix. The resulting check and bit node eigen list recursions yield explicit BPQM unitaries and analytic fidelity bounds, enabling practical density evolution without large-scale quantum state simulation. We demonstrate the approach by (i) designing q-ary polar codes for a target block error rate and (ii) estimating BPQM decoding thresholds for regular LDPC ensembles. Extending these tools from prime q to general finite-abelian symmetries via character-based diagonalization is a natural next step.
\clearpage
\IEEEtriggeratref{13}
\printbibliography

\clearpage
\onecolumn
\begin{appendices}
\section{Quantum Preliminaries}
A \emph{qudit} represents a quantum system with $q$ perfectly distinguishable states. A $q$ dimensional Hilbert space is denoted by $\mathbb{C}^{q}$.  Such a vector $\ket{\psi}\in \hn$ with unit norm is called a \emph{pure state}. Let $\mathbb{C}^{q\times q}$ denote the vector space of $q \times q$ complex matrices. 
 Let $\hop^q \subset \mathbb{C}^{q\times q}$ denote the subset of operators (i.e., matrices) mapping $\hn$ to $\mathbb{C}^q$ that are Hermitian. Let $\psd^q\subset \hop^q$ and $\pd^q \subset \psd^q$ denote the subset of Hermitian operators mapping $\hn$ to $\hn$ which are positive semidefinite and positive definite, respectively.
Let $\dop^q \subset \psd^q$ denote the set of operators on $\hn$ which are positive semidefinite with unit trace. In quantum, a \emph{density matrix} $\rho\in \dop^q$ can be defined by an ensemble of pure states $\Psi = \{p_i,\ket{\psi_i}\}_{i\in [q]}$ where $p_{i}$ is the probability of choosing the pure state $\ket{\psi_{i}}$ and
\[\rho=\sum_{i \in [q]}p_{i}\ketbra{\psi_{i}}{\psi_{i}}.\]
The unitary evolution of a quantum pure state $\ket{\psi} \in \hn$ is described by the mapping $\ket{\psi} \mapsto U \ket{\psi}$, where $U\in \mathbb{C}^{q\times q}$ is a unitary. For the pure state ensemble $\Psi$, this evolution results in the modified ensemble $\Psi'=\{p_{i},U\ket{\psi_{i}}\}_{i\in[q]}$ whose density matrix is \vspace{-1.5mm}
\begin{align*}
    \rho' = \sum_{i\in [q]} p_{i}U\ketbra{\psi_{i}}{\psi_{i}}U^{\dagger}=U\rho U^{\dagger},
\end{align*}
where $U^\dagger$ is the Hermitian transpose of $U$.
    \section{Gram Matrix Properties}\label{Appendix:Gram matrix properties}
    \subsection{Proof of Lemma~\ref{lem:gram eigenvector}}
    \begin{proof}
  Consider the $i$\textsuperscript{th} row $G_{i}$ of the Gram matrix $G$. Then $G_{i,:}\ket{v_{m}}$ satisfies
  \begin{align*}
      G_{i,:}\ket{v_{m}}&\ = \frac{1}{\sqrt{q}}\sum_{j\in [q]}g_{(j-i)}\omega^{jm}\\
      &\ = \frac{1}{\sqrt{q}}\sum_{j'\in [q]}g_{j'}\omega^{(j'+i)m}\\
      &\ =\frac{\omega^{im}}{\sqrt{q}}\sum_{j'\in [q]}g_{j'}\omega^{j'm}\\
      &\ =\lambda_{m}\frac{\omega^{im}}{\sqrt{q}}
  \end{align*}
  Thus, we get 
  \begin{align*}
      G\ket{v_{m}} &\ =\frac{\lambda_{m}}{\sqrt{q}}\begin{bmatrix}
        1\\
        \omega^{m}\\
        \dots\\
        \omega^{(q-2)m}\\
        \omega^{(q-1)m}  
      \end{bmatrix}\\
      &\ = \lambda_{m}\ket{v_{m}} \qedhere
  \end{align*}
\end{proof}

\subsection{Eigen list trace relation}
\begin{lem}\label{lem: eigenlist trace  square relation}
    For gram matrix $G$ with eigen list $\blambda=[\lambda_0,\dots,\lambda_{q-1}]$, the following relations hold 
    \begin{align*}
        \sum_{u,u'\in[q]}|\braket{\psi_u}{\psi_{u'}}|^{2}  =\sum_{u\in [q]}\lambda_{u}^{2}
    \end{align*}
\end{lem}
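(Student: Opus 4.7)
The plan is to recognize both sides as $\mathrm{Tr}(G^{2})$ and conclude by equating them. Since $G$ is a Gram matrix, it is Hermitian, so $\overline{G_{u,u'}}=G_{u',u}$; moreover, Lemma~\ref{lem:gram eigenvector} already gives a full spectral description of $G$ with eigenvalues $\lambda_{0},\dots,\lambda_{q-1}$. These two facts will handle the left-hand and right-hand sides respectively.

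First I would rewrite the LHS by expanding $|\braket{\psi_{u}}{\psi_{u'}}|^{2}=G_{u,u'}\overline{G_{u,u'}}=G_{u,u'}G_{u',u}$, where the last equality uses Hermiticity. Summing over $u,u'\in[q]$ then collapses to
\begin{align*}
\sum_{u,u'\in[q]}|\braket{\psi_{u}}{\psi_{u'}}|^{2}
=\sum_{u\in[q]}\sum_{u'\in[q]}G_{u,u'}G_{u',u}
=\sum_{u\in[q]}(G^{2})_{u,u}
=\mathrm{Tr}(G^{2}).
\end{align*}

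Next I would handle the RHS using Lemma~\ref{lem:gram eigenvector}: since $G\ket{v_{m}}=\lambda_{m}\ket{v_{m}}$ with $\{\ket{v_{m}}\}_{m\in[q]}$ an orthonormal basis of Fourier vectors, $G$ is unitarily diagonalizable with spectrum $\blambda$. By the cyclic property of the trace (or, equivalently, invariance of the trace under similarity), $\mathrm{Tr}(G^{2})=\sum_{u\in[q]}\lambda_{u}^{2}$. Combining this with the LHS computation yields the claim.

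There is essentially no obstacle here; the only mild point of attention is invoking Hermiticity of $G$ to identify $\overline{G_{u,u'}}$ with $G_{u',u}$, which is immediate from the Gram-matrix definition $G_{i,j}=\braket{\psi_{i}}{\psi_{j}}$. The same calculation also generalizes painlessly to $\sum_{u,u'}|\braket{\psi_{u}}{\psi_{u'}}|^{2k}$-type identities via higher powers of $G$, but only the $k=1$ case is needed here.
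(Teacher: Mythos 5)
Your proof is correct and follows essentially the same route as the paper: both identify the double sum as $\mathrm{Tr}(G^{2})$ via Hermiticity of the Gram matrix and equate it to $\sum_{u}\lambda_{u}^{2}$ using the spectral decomposition from Lemma~\ref{lem:gram eigenvector}.
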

\begin{proof}
 
Similarly, from matrix $G^2$ we get
\begin{align*}
    \sum_{u\in [q]}\lambda_{u}^2 & =\Tr(G^2)\\
    & =\Tr(G^{\dagger}G)\\
    & = \sum_{u,u'\in [q]}|G_{u,u'}|^{2}\\
    & =\sum_{u,u'\in [q]}\left|\braket{\psi_{u}}{\psi_{u'}}\right|^{2} \qedhere
\end{align*}
\end{proof}
  
\subsection{Proof of Lemma~\ref{lem:canonical state representation}}
\begin{proof}
Recall the canonical choice of state $\ket{\psi_u}$ for Gram matrix $G$ with eigen list $\blambda$ as
\begin{align}
    \ket{\psi_u} &\coloneqq \frac{1}{\sqrt{q}}\sum_{j\in [q]}\sqrt{\lambda_{j}}\omega^{-uj}\ket{v_{j}}
\end{align}
The inner product $\braket{\psi_{u'}}{\psi_{u}}$ satisfies 
\begin{align*}
    \braket{\psi_{u'}}{\psi_{u}} &\ =\frac{1}{q}\sum_{j\in [q]}\sum_{j'\in [q]}\sqrt{\lambda_{j}\lambda_{j'}}\omega^{(u'j'-uj)}\braket{v_{j}}{v_{j'}}\\
    &\ = \frac{1}{q}\sum_{j\in [q]}\lambda_{j}\omega^{j(u'-u)}\\
    &\ = \frac{1}{q}\sum_{j\in [q]}\sum_{k\in [q]}g_{k}\omega^{kj}\omega^{j(u'-u)}\\
    &\ =\frac{1}{q}\sum_{k\in [q]}g_{k}\sum_{j\in [q]}\omega^{j(k+u'-u)}\\
    &\ =\frac{1}{q}\sum_{j\in [q]}g_{u-u'}\\
    &\ =g_{u-u'}= G_{u',u}
\end{align*}
Since two classical quantum PSCs $W\colon u\rightarrow \ket{\psi_{u}}$ and $W'\colon u\rightarrow \ket{\psi_{u}'}$ with $u\in [q]$ are  isometrically (or unitarily) equivalent when there exists an isometry (or unitary) $V$ such that 
\begin{align*}
    \ket{\psi_{u}'}=V\ket{\psi_u}, \forall u\in[q],
\end{align*}
all the isometrically (or unitarily) equivalent symmetric $q$-ary PSCs with Gram matrix are well defined using eigenvalue spectrum $\blambda$ associated with Gram matrix $G$. 
\end{proof}

\subsection{Relation between Gram matrix elements and the eigen list}
\begin{lem}\label{lem:gram element eigenvalue relation}
    For a gram matrix $G$ with eigen list $\blambda = [\lambda_0,\dots,\lambda_{q-1}]$, the elements in the first row of $G$ are
    \begin{align*}
        g_{i}=\frac{1}{q}\sum_{j\in [q]}\lambda_{j}\omega^{-ij}
    \end{align*}
\end{lem}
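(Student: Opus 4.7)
The plan is to recognize the claim as the inverse of the discrete Fourier transform already established in Lemma~\ref{lem:gram eigenvector}. That lemma states $\lambda_m = \sum_{j\in[q]} g_j \omega^{jm}$, which exhibits the eigen list $\blambda$ as the DFT of the first-row sequence $(g_0,\dots,g_{q-1})$ of the circulant Gram matrix $G$. The proposed formula for $g_i$ is precisely the inverse DFT of $\blambda$, so the task reduces to applying the Fourier inversion.

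Concretely, I would take the relation $\lambda_j = \sum_{k\in[q]} g_k \omega^{jk}$, multiply both sides by $\omega^{-ij}$, sum over $j\in[q]$, and divide by $q$:
\begin{align*}
\frac{1}{q}\sum_{j\in[q]} \lambda_j\, \omega^{-ij}
\;=\; \sum_{k\in[q]} g_k \cdot \frac{1}{q}\sum_{j\in[q]} \omega^{j(k-i)}.
\end{align*}
The inner sum is evaluated by the orthogonality relation $\sum_{j\in[q]} \omega^{j(k-i)} = q\,\delta_{k,i}$ already recorded in Section~II-C, which collapses the right-hand side to $g_i$, yielding the claim.

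There is essentially no obstacle: the result is a standard Fourier inversion and uses only Lemma~\ref{lem:gram eigenvector} and the character orthogonality relation for $\mathbb{Z}_q$. The only minor care point is the sign convention on $\omega$ in the Fourier vectors $\ket{v_m}$ (with entries $\omega^{jm}/\sqrt{q}$), which forces the inverse kernel to carry the conjugate phase $\omega^{-ij}$, matching the statement.
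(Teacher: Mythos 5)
Your proposal is correct and matches the paper's own proof: both substitute the DFT relation $\lambda_j=\sum_{k}g_k\omega^{kj}$ from Lemma~\ref{lem:gram eigenvector}, exchange the order of summation, and apply the orthogonality relation $\sum_{j\in[q]}\omega^{j(k-i)}=q\,\delta_{k,i}$ to recover $g_i$. No differences worth noting.
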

\begin{proof}
    Using $\lambda_j=\sum_{k}g_{k}\omega^{kj}$, we get 
    \begin{align*}
        \frac{1}{q}\sum_{j\in [q]}\lambda_{j}\omega^{-ij} & = \frac{1}{q}\sum_{j\in [q]}\left(\sum_{k\in [q]}g_{k}\omega^{kj}\right)\omega^{-ij}\\
        & = \frac{1}{q}\sum_{k\in [q]}g_{k}\sum_{j\in [q]}\omega^{(k-i)j}\\
        & = \frac{1}{q}\sum_{k\in [q]}g_{k}q\delta_{k,i}\\
        & = g_{i} \qedhere
    \end{align*}
\end{proof}
\section{Information Measures}\label{Appendix: Information Measures}
The symmetric Holevo information measures the maximum amount of classical information that can be sent through the channel $W$ when input distribution is uniform.  More formally, for a CQ channel $W$, the symmetric Holevo information is computed as 
\begin{align*}
    I(W)=S\Bigg(\frac{1}{q}\sum_{u\in [q]}W(u)\Bigg)-\sum_{u\in [q]}\frac{1}{q}S\Bigg(W(u)\Bigg)
\end{align*}
where $S(\rho)=-\Tr(\rho\log\rho)$ is von Neumann entropy for density matrix $\rho$.
When $W\colon u\rightarrow\ketbra{\psi_u}{\psi_u}$ $\forall u\in [q]$ is a PSC, the symmetric Holevo information satisfies
\begin{align*}
    I(W) & =S\Bigg(\frac{1}{q}\sum_{u\in [q]}\ketbra{\psi_u}{\psi_u}\Bigg)-\frac{1}{q}\sum_{u\in [q]}S(\ketbra{\psi_u}{\psi_u})\\
   &  = S\Bigg(\frac{1}{q}\sum_{u\in [q]}\ketbra{\psi_u}{\psi_u}\Bigg)
\end{align*}
where we use the fact that $S(\ketbra{\psi_u}{\psi_u})=0$ $\forall u\in [q]$.
The channel fidelity characterizes expected overlap between the non identical output quantum states of channel $W$. Formally, it is written as 
\begin{align*}
    F(W)
    & =\frac{1}{q(q-1)}\!\sum_{u,u'\in [q]:, u\neq u'}\!\!\!\!\!\!\!\Tr\left(\sqrt{\sqrt{W(u)}W(u')\sqrt{W(u)}}\right).
\end{align*}
For symmetric PSC $W$, the expression reduces to
\begin{align*}
     F(W)=\frac{1}{q(q-1)}\sum_{u,u'\in[q]:u\neq u'}\left|\braket{\psi_u}{\psi_{u'}}\right|
\end{align*}

\subsection{Proof of Lemma~\ref{lem:Holevo information and channel fidelity}}
\begin{proof}
Consider the matrix $\Psi$ whose columns correspond to the quantum states $\{\ket{\psi_{u}}\}_{u\in [q]}$ as below
\begin{align*}
    \Psi\coloneqq \begin{bmatrix}
        \ket{\psi_0}\dots \ket{\psi_{q-1}}
    \end{bmatrix}
\end{align*}
Then we have
\begin{align*}
    \Psi\Psi^{\dagger}= \sum_{u\in [q]}\ketbra{\psi_u}{\psi_u}
\end{align*}
 The matrix $\Psi^{\dagger}\Psi=G$ as it satisfies 
\begin{align*}
    (\Psi^{\dagger}\Psi)_{i,j}=\braket{\psi_i}{\psi_j} =G_{i,j}
\end{align*}
Since the eigenvalues of $\Psi^{\dagger}\Psi$ are same as of $\Psi\Psi^{\dagger}$, we have 
\begin{align*}
    I(W) &\ = S\left(\frac{1}{q}\Psi\Psi^{\dagger}\right)\\
    &\ = -\sum_{u\in [q]}\frac{\lambda_{u}}{q}\log \frac{\lambda_u}{q}\\
    & = -\sum_{u\in [q]}\mu_u\log\mu_u\\
    & = H(\bmu)
\end{align*}


Writing  channel fidelity $F(W)$ in terms Gram matrix elements we get 
\begin{align*}
 F(W) & =\frac{1}{q(q-1)}\sum_{u,u'\in[q]:u\neq u'}|\braket{\psi_u}{\psi_u'}|\\
 &\ =   \frac{1}{q(q-1)}\Bigg(\sum_{u,u'\in [q]:u\neq u'}|g_{u'-u}|\Bigg)\\
 &\ =\frac{1}{q(q-1)}q\bigg(\sum_{u=1}^{q-1}|g_{u}|\bigg)\\
 & =\frac{1}{q-1}\sum_{u=1}^{q-1}|g_u| \qedhere
\end{align*}
\end{proof}
\subsection{Relation Between Symmetric Holevo Information and the Channel Fidelity}
\begin{lem}\label{lem:Holevo and fidelity relation}
Consider  symmetric $q$-ary pure-state channel $W$ with circulant Gram matrix $G$ and eigenvalue spectrum $\blambda=[\lambda_0,\dots,\lambda_{q-1}]$. Then, $I(W)$ and $F(W)$ satisfy 
\begin{align*}
    F(W) & \geq \frac{\sqrt{e^{\log q-I(W)}-1}}{q-1}\\
    F(W) & \leq \sqrt{\frac{2q}{q-1}(\log q- I(W))}
\end{align*}
\end{lem}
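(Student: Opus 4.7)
The plan is to translate both inequalities into statements about the normalized eigen list $\bmu=[\mu_0,\ldots,\mu_{q-1}]$ using Lemma~\ref{lem:Holevo information and channel fidelity}, which gives $I(W)=H(\bmu)$ and $F(W)=\frac{1}{q-1}\sum_{u=1}^{q-1}|g_u|$, and then to reduce each inequality to a single standard information-theoretic tool: Jensen for the lower bound and Pinsker for the upper bound. A key intermediate identity that I would establish first is the Parseval-type relation $\sum_{u=1}^{q-1}|g_u|^2 = q\sum_j \mu_j^2 - 1$. This follows by combining $\Tr(G^2)=\sum_j \lambda_j^2$ with the circulant structure of $G$ (each off-diagonal shift $g_w$ appears exactly $q$ times in $\sum_{u,u'}|G_{u,u'}|^2$), together with $\lambda_j = q\mu_j$ and the normalization $g_0 = 1$.

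For the lower bound, I would apply Jensen's inequality to the convex exponential map with weights $\mu_j$:
\begin{align*}
    e^{-H(\bmu)} = e^{\sum_j \mu_j \log \mu_j} \le \sum_j \mu_j \cdot \mu_j = \sum_j \mu_j^2.
\end{align*}
Multiplying by $q$ yields $q\sum_j \mu_j^2 \ge e^{\log q - I(W)}$, and substituting into the Parseval identity gives $\sum_{u=1}^{q-1}|g_u|^2 \ge e^{\log q - I(W)} - 1$. I would then combine this with the elementary inequality $\bigl(\sum_u x_u\bigr)^2 \ge \sum_u x_u^2$ for non-negative reals, take a square root, and divide by $q-1$ to conclude.

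For the upper bound, I would first rewrite $q\sum_j \mu_j^2 - 1 = q\sum_j(\mu_j - 1/q)^2$, which is immediate from expanding the square and using $\sum_j \mu_j = 1$. Cauchy--Schwarz then gives
\begin{align*}
    \Bigl(\sum_{u=1}^{q-1}|g_u|\Bigr)^{\!2} \le (q-1)\sum_{u=1}^{q-1}|g_u|^2 = q(q-1)\sum_j (\mu_j - 1/q)^2.
\end{align*}
Finally, I would use $\sum_j (\mu_j - 1/q)^2 \le \bigl(\sum_j |\mu_j - 1/q|\bigr)^2$ combined with Pinsker's inequality applied to $\bmu$ against the uniform distribution on $[q]$, noting that $D_{\mathrm{KL}}(\bmu \,\|\, \tfrac{1}{q}\mone) = \log q - H(\bmu) = \log q - I(W)$, to obtain $\sum_j (\mu_j - 1/q)^2 \le 2(\log q - I(W))$ and hence $F(W)^2 \le \frac{2q}{q-1}(\log q - I(W))$.

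The main hurdle is bookkeeping rather than conceptual: the Fourier duality between Gram entries $g_u$ and eigenvalues $\lambda_j$ must be tracked carefully enough to match the Gram-side quantity $\sum |g_u|^2$ with the eigenvalue-side quantity $q\sum \mu_j^2$. Once this Parseval bridge is in place, each bound collapses to a one-line application of Jensen or Pinsker together with an elementary norm comparison.
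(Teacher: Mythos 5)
Your proposal is correct and follows essentially the same route as the paper's proof: the Parseval bridge $\sum_{u=1}^{q-1}|g_u|^2=q\sum_j\mu_j^2-1=\chi^2(\bmu\,\|\,\tfrac{1}{q}\mone)$, your Jensen step (which is exactly the inequality $D\le\log(1+\chi^2)$ the paper invokes), and Pinsker plus the $\ell_1$--$\ell_2$ norm comparisons. No gaps.
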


\begin{proof}
    Observe that 
    \begin{align*}
        \sum_{u\in [q]}|g_u|^{2}=\frac{1}{q}\sum_{j\in [q]}\lambda_{j}^{2}
    \end{align*}
    Define $\bmu=[\mu_{0},\dots,\mu_{q-1}]$ where $\mu_{j}=\frac{\lambda_{j}}{q}$. Since $\lambda_{j}\geq 0$ $\forall j\in [q]$ and $\sum_{j\in [q]}\lambda_{j}=q$, $\bmu$ forms a probability distribution. Also, since $g_0=1$, we get 
    \begin{align*}
        \sum_{u=1}^{q-1}|g_u|^{2}=q\sum_{j\in [q]}\mu_{j}^{2}-1
    \end{align*}
    Consider the uniform distribution $\frac{\mone}{q}=[\frac{1}{q},\dots,\frac{1}{q}]$. Then we can compute the $\chi^{2}$ divergence as 
    \begin{align*}
        \chi^{2}\left(\bmu\,\middle\|\,\frac{\mone}{q}\right) & =\sum_{j\in [q]}\frac{(\mu_{j}-\frac{1}{q})^{2}}{\frac{1}{q}}\\
        & = q\sum_{j\in [q]}\mu_{j}^{2}-1\\
        & = \sum_{u=1}^{q-1}|g_u|^{2}
    \end{align*}
  Using the norm inequality $\|\bx\|_1 \le \sqrt{d}\|\bx\|_2$ with dimension $d=q-1$, we obtain
\begin{align*}
    \Bigg(\sum_{j\in 1}^{q-1}|g_{j}|\Bigg)^{2} & \leq (q-1)\sum_{j=1}^{q-1}|g_{j}|^{2}\\
    \implies (q-1)^{2}F(W)^{2} & \leq (q-1)\chi^{2}\left(\bmu \,\middle\|\, \frac{\mone}{q}\right) 
\end{align*}

Similarly it also satisfies 
\begin{align*}
    \Bigg(\sum_{j\in 1}^{q-1}|g_{j}|\Bigg)^{2} & \geq \sum_{j=1}^{q-1}|g_{j}|^{2}\\
    \implies (q-1)^{2}F(W)^{2}  & \geq \chi^{2}\left(\bmu \, \middle \| \, \frac{\mone}{q}\right)
\end{align*}
The KL divergence $D(\bmu||\frac{\mone}{q})$ is computed as 
\begin{align*}
    D(\bmu||\frac{\mone}{q}) & =\sum_{j\in [q]}\mu_{j}\log(\frac{\mu_{j}}{\frac{1}{q}})\\
    & = \log q+\sum_{j\in [q]}\mu_{j}\log\mu_{j}\\
    & = \log q-I(W)
\end{align*}
 $D(\bmu||\frac{\mone}{q})$ satisfies 
\begin{align*}
  \frac{1}{2}||\bmu-\frac{\mone}{q}||_{1}^{2}\leq  D \left(\bmu \, \middle\| \, \frac{\mone}{q} \right)\leq \log(1+\chi^{2}\left(\bmu \, \middle\| \, \frac{\mone}{q}\right))
\end{align*}
Since $||\bmu-\frac{\mone}{q}||_{1}^{2}\geq ||\bmu-\frac{\mone}{q}||_{2}^{2}=\frac{1}{q}\sum_{j=1}^{q-1}|g_{j}|^{2}$, we finally get 
\begin{align*}
   \frac{\chi^{2}(\bmu||\frac{\mone}{q})}{2q} \leq  D \left(\bmu \, \middle\| \, \frac{\mone}{q} \right)\leq \log\left(1+\chi^{2}\left(\bmu \, \middle\| \, \frac{\mone}{q}\right) \right)
\end{align*}
So, from the upper bound we get
\begin{align*}
    \log q-I(W) & \leq \log(1+(q-1)^{2}F(W)^{2})\\
F(W) & \geq \frac{\sqrt{e^{\log q-I(W)}-1}}{q-1}
\end{align*}
Similarly from the lower bound we get 
\begin{align*}
    \frac{(q-1)}{2q}F(W)^{2} & \leq \log q-I(W)\\
   \implies  F(W) &\leq \sqrt{\frac{2q}{q-1}(\log q-I(W))} \qedhere
\end{align*}
\end{proof}
\begin{lem}\label{lem:Holevo and fidelity boundary relations}
    For  symmetric $q$-ary pure-state channel $W$ with circulant Gram matrix $G$, when $I(W)\rightarrow 0$, we have $F(W)\rightarrow 1$ and when $I(W)\rightarrow\log q$, $F(W)\rightarrow 0$.
\end{lem}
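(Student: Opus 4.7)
The plan is to combine Lemma~\ref{lem:Holevo information and channel fidelity} (which says $I(W)=H(\bmu)$ with $\mu_j=\lambda_j/q$ and $F(W)=\tfrac{1}{q-1}\sum_{u=1}^{q-1}|g_u|$) with the inverse Fourier formula $g_i=\tfrac{1}{q}\sum_j \lambda_j\,\omega^{-ij}$ from Lemma~\ref{lem:gram element eigenvalue relation} to pin down the behavior of the Gram entries at the two boundary values.

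For $I(W)\to\log q$ the statement is essentially free: the upper bound already established in Lemma~\ref{lem:Holevo and fidelity relation},
\begin{equation*}
F(W)\leq \sqrt{\tfrac{2q}{q-1}\bigl(\log q - I(W)\bigr)},
\end{equation*}
sends $F(W)\to 0$ directly. (Alternatively, $H(\bmu)\to\log q$ forces $\bmu\to \mone/q$ by strict concavity of entropy, hence $\lambda_j\to 1$ for every $j$; plugging into the Fourier formula gives $g_i\to\delta_{i,0}$ and thus $F(W)\to 0$.)

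The direction $I(W)\to 0 \Rightarrow F(W)\to 1$ requires more care because the lower bound in Lemma~\ref{lem:Holevo and fidelity relation} saturates at $1/\sqrt{q-1}<1$ and is therefore not tight. Here I would argue directly on the simplex: by continuity of Shannon entropy on the compact set $\{\bmu\in\mathbb{R}_{\ge 0}^q:\sum_j\mu_j=1\}$, the level set $\{H(\bmu)=0\}$ consists precisely of the $q$ vertices $\delta_{j^\ast}$, so $H(\bmu)\to 0$ forces $\bmu$ to approach one of them (every subsequential limit is a vertex). Substituting $\lambda_j\to q\,\delta_{j,j^\ast}$ into $g_i=\tfrac{1}{q}\sum_j\lambda_j\omega^{-ij}$ gives $g_i\to \omega^{-ij^\ast}$, so $|g_i|\to 1$ for each $i\in\{1,\dots,q-1\}$, and therefore $F(W)\to 1$. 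Combined with the trivial upper bound $F(W)\le 1$ (each term $|g_u|=|\braket{\psi_0}{\psi_u}|\le 1$ by Cauchy--Schwarz), this gives the claimed limit.

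The only subtle point is the first direction: one has to rule out that $\bmu$ concentrates on one vertex while the Gram entries somehow fail to reach modulus $1$. The Fourier representation of $g_i$ from Lemma~\ref{lem:gram element eigenvalue relation} resolves this cleanly, because the map $\blambda\mapsto(|g_1|,\dots,|g_{q-1}|)$ is continuous and sends each vertex of the simplex (scaled by $q$) to the all-ones vector. A quantitative version, if desired, could be obtained by applying Pinsker's inequality to $D(\bmu\,\|\,\delta_{j^\ast})$ and translating the $\ell_1$ closeness of $\bmu$ to a vertex into closeness of each $|g_i|$ to $1$ through the triangle inequality on the Fourier sum, but this is not needed for the stated limit result.
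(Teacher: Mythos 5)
Your proposal is correct, and for the direction $I(W)\rightarrow\log q$ it coincides with the paper's proof (both simply invoke the upper bound $F(W)\leq\sqrt{\tfrac{2q}{q-1}(\log q-I(W))}$ from Lemma~\ref{lem:Holevo and fidelity relation}). For the harder direction $I(W)\rightarrow 0$ you take a genuinely different route. The paper stays inside the divergence machinery of Lemma~\ref{lem:Holevo and fidelity relation}: from $D(\bmu\,\|\,\tfrac{\mone}{q})=\log q-I(W)$ and $D\leq\log(1+\chi^{2})$ it deduces $\chi^{2}(\bmu\,\|\,\tfrac{\mone}{q})\rightarrow q-1$, and since $\chi^{2}(\bmu\,\|\,\tfrac{\mone}{q})=\sum_{u=1}^{q-1}|g_u|^{2}$ with each $|g_u|\leq 1$, this forces every $|g_u|\rightarrow 1$ and hence $F(W)\rightarrow 1$. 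You instead argue topologically on the simplex: $H(\bmu)\rightarrow 0$ forces $\bmu$ toward a vertex $\delta_{j^{\ast}}$ by compactness and continuity, and the continuous Fourier map of Lemma~\ref{lem:gram element eigenvalue relation} sends every vertex to Gram entries $g_i=\omega^{-ij^{\ast}}$ of unit modulus. Both arguments are sound, and you correctly identify why neither can rely on the lower bound of Lemma~\ref{lem:Holevo and fidelity relation}, which only saturates at $1/\sqrt{q-1}$. The paper's route has the merit of exposing the identity $\chi^{2}(\bmu\,\|\,\tfrac{\mone}{q})=\sum_{u\geq 1}|g_u|^{2}$ being pinned at its maximum, which ties the boundary behavior to the quantitative bounds already proved; your route is more elementary, avoids the divergence inequalities entirely for this direction, and as a bonus identifies exactly which channels attain $I(W)=0$, namely those whose output states differ only by the phases $\omega^{-ij^{\ast}}$.
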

\begin{proof}
    Since $ D(\bmu||\frac{\mone}{q})=\log q-I(W)$, when $I(W)\rightarrow 0$, we have
    \begin{align*}
        \log \left(1+\chi^{2} \left(\bmu \, \middle\| \, \frac{\mone}{q}\right)\right) & \geq \log q\\
        \implies \chi^{2} \left(\bmu \, \middle\| \,\frac{\mone}{q} \right) & \geq q-1
    \end{align*}
    Using the relation $\chi^{2}(\bmu||\frac{\mone}{q})=\sum_{u=1}^{q-1}|g_u|^{2}$ and using the fact that $|g_u|\leq 1$ $\forall u\in [q]$, we get 
    \begin{align*}
        \chi^{2}\left(\bmu \, \middle\| \, \frac{\mone}{q}\right)\leq q-1.
    \end{align*}
   Together, we have $\chi^{2}(\bmu||\frac{\mone}{q})=q-1$. Hence  $\forall u\in [q]$, we get 
   \begin{align*}
       |g_u|=1
   \end{align*}
   Thus, we get 
   \begin{align*}
       F(W)=\frac{1}{q-1}\sum_{u=1}^{q-1}|g_u|=1.
   \end{align*}
   So, we get when $I(W)\rightarrow 0$, $F(W)\rightarrow 1$.\\
   Using the upper bound 
   \begin{align*}
       F(W) &\leq \sqrt{\frac{2q}{q-1}(\log q-I(W))}
   \end{align*}
   we directly get, when $I(W)\rightarrow \log q$, $F(W)\rightarrow 0$.
\end{proof}
\section{Pretty Good Measurement}\label{Appendix: pgm}
\subsection{Proof of Lemma~\ref{lem:pgm}}
\begin{proof}
    For the states $\{\ket{\psi_u}\}_{u\in [q]}$, the pretty good measurement is constructed using measurement operators $\{\Pi_u\}_{u\in [q]}$ such that 
    \begin{align*}
        \Pi_u=\ketbra{\Gamma_u}{\Gamma_u}
    \end{align*}
    where $\ket{\Gamma_u}$ = $\sqrt{\frac{1}{q}}\bar{\rho}^{-\frac{1}{2}}\ket{\psi_u}$ and the average density matrix $\bar{\rho}=\sum_{u\in [q]}\frac{1}{q}\ketbra{\psi_u}{\psi_u}$. Thus, for uniform input distribution we have 
    \begin{align*}
    \bar{\rho}=\frac{1}{q}\Psi\Psi^{\dagger}=\sum_{u\in [q]}\frac{\lambda_u}{q}\ketbra{v_u}{v_u}
    \end{align*}
    Thus, $\ket{\Gamma_u}$ can be written as 
    \begin{align*}
        & \ket{\Gamma_u}\\
        & =\frac{1}{\sqrt{q}}(\sum_{j'\in [q]}\sqrt{\frac{q}{\lambda_{j'}}}\ketbra{v_{j'}}{v_{j'}})(\frac{1}{\sqrt{q}}\sum_{j\in [q]}\sqrt{\lambda_{j}}\omega^{-uj}\ket{v_{j}})\\
        & = \frac{1}{\sqrt{q}}\sum_{j\in [q]}\omega^{-uj}\ket{v_{j}}
    \end{align*}
Hence, we compute $\braket{\Gamma_u}{\psi_u}$ as
\begin{align*}
    \braket{\Gamma_u}{\psi_u} & =\frac{1}{q}\sum_{j\in [q]}\sum_{j'\in [q]}\sqrt{\lambda_{j'}}\omega^{uj-uj'}\braket{v_{j}}{v_{j'}}\\
    & =\frac{1}{q}\sum_{j\in [q]}\sqrt{\lambda_{j}}
\end{align*}
Finally, the average success probability of the measurement satisfies 
\begin{align*}
    \psuc(W)& =\frac{1}{q}\sum_{u\in [q]}\Tr(\Pi_u\ketbra{\psi_u}{\psi_u})\\
    & = \frac{1}{q}\sum_{u\in [q]}|\braket{\Gamma_u}{\psi_u}|^{2}\\
    & =\frac{1}{q}\sum_{u\in [q]}\left(\frac{1}{q}\sum_{j\in [q]}\sqrt{\lambda_{j}}\right)^{2}\\
    & = \left(\frac{1}{q}\sum_{j\in [q]}\sqrt{\lambda_{j}}\right)^{2} 
\end{align*}
Thus, the error probability is 
\begin{align*}
    \perr(W)=1-\psuc(W)=1-\left(\frac{1}{q}\sum_{j\in [q]}\sqrt{\lambda_{j}}\right)^{2}
\end{align*}
\end{proof}
\section{BPQM for Symmetric PSC}\label{Appendix:BPQM for symmetric PSC}
\subsection{Proof of Lemma~\ref{lem:check node}}
    \begin{proof}
        Consider the states $\{\ket{\psia_{u}}\}_{u\in[q]}$ which are outputs of channel $W_1$ such that 
        \begin{align*}
            \ket{\psia_{u}}\coloneqq \frac{1}{\sqrt{q}}\sum_{j\in [q]}\sqrt{\lambdaa_{j}}\omega^{-uj}\ket{v_{j}}.
        \end{align*}
  Similarly,  consider the states $\{\ket{\psib_{u}}\}_{u\in[q]}$ which are outputs of channel $W_2$ such that 
        \begin{align*}
            \ket{\psib_{u}}\coloneqq \frac{1}{\sqrt{q}}\sum_{j\in [q]}\sqrt{\lambdab_{j}}\omega^{-uj}\ket{v_{j}}.
        \end{align*}
    Then, the outputs of check node combined channel $W_{1}\cnop W_{2}$ can be written  $\forall l\in [q]$ as 
    \begin{align*}
       & [W_{1}\cnop W_{2}](l)\\
       & =\frac{1}{q}\sum_{u\in [q]}\ketbra{\psia_u}{\psia_u}\otimes \ketbra{\psib_{(u-l)}}{\psib_{(u-l)}}
    \end{align*}
    Expanding $\ket{\psia_u}\otimes \ket{\psib_{u-l}}$ we get 
    \begin{align*}
      &  \ket{\psia_u}\otimes \ket{\psib_{u-l}}\\
      & =\frac{1}{q}\sum_{j\in [q]}\sum_{j'\in [q]}\sqrt{\lambdaa_{j}\lambdab_{j'}}\omega^{-uj-(u-l)j'}\ket{v_{j}}\otimes\ket{v_{j'}} 
    \end{align*}
Consider the unitary $\tilde{U}^{\cnop}$ which acts on Fourier vectors as below
\begin{align*}
    \tilde{U}^{\cnop}\ket{v_j}\otimes\ket{v_{j'}}=\ket{v_{j+j'}}\otimes\ket{v_{-j'}}
\end{align*}
Then we get
\begin{align*}
   &  \tilde{U}^{\cnop}\ket{\psia_u}\otimes \ket{\psib_{u-l}}\\
   & =  \frac{1}{q}\sum_{j\in [q]}\sum_{j'\in [q]}\sqrt{\lambdaa_{j}\lambdab_{j'}}\omega^{-uj-(u-l)j'}\ket{v_{j+j'}}\otimes\ket{v_{-j'}} 
\end{align*}
Hence $\forall l\in [q]$, we can write
\begin{align*}
   & \tilde{U}^{\cnop}[W_{1}\cnop W_{2}](l)(\tilde{U}^{\cnop})^{\dagger}\\
   & = \frac{1}{q^3}\sum_{u\in [q]}\left(\sum_{j\in [q]}\sum_{j'\in [q]}\sqrt{\lambdaa_{j}\lambdab_{j'}}\omega^{-uj-(u-l)j'}\ket{v_{j+j'}}\otimes\ket{v_{-j'}}\right) \left(\sum_{k\in [q]}\sum_{k'\in [q]}\sqrt{\lambdaa_{k}\lambdab_{k'}}\omega^{uk+(u-l)k'}\bra{v_{k+k'}}\otimes\bra{v_{-k'}}\right)\\
   & = \frac{1}{q^3}\sum_{u\in [q]}\bigg{(}\sum_{j,j',k,k'\in [q]}\sqrt{\lambdaa_{j}\lambdaa_{k}\lambdab_{-j'}\lambdab_{-k'}}\omega^{-u(j-j')+u(k-k')}
  \omega^{-lj'+lk'}\ketbra{v_{j-j'}}{v_{k-k'}}\otimes \ketbra{v_{j'}}{v_{k'}}\bigg{)}\\
   & =\frac{1}{q^3}\sum_{j,j',k,k'\in [q]}\sqrt{\lambdaa_{j}\lambdaa_{k}\lambdab_{-j'}\lambdab_{-k'}}\left(\sum_{u\in [q]}\omega^{u((k-k')-(j-j'))}\right) \omega^{-lj'+lk'}\ketbra{v_{j-j'}}{v_{k-k'}}\otimes \ketbra{v_{j'}}{v_{k'}}\\
   & =\frac{1}{q^3}\sum_{j,j',k,k'\in [q]}\sqrt{\lambdaa_{j}\lambdaa_{k}\lambdab_{-j'}\lambdab_{-k'}}\left(q\delta_{j-j',k-k'}\right) \omega^{-lj'+lk'}\ketbra{v_{j-j'}}{v_{k-k'}}\otimes \ketbra{v_{j'}}{v_{k'}}\\
   & = \frac{1}{q^2}\sum_{m,j',k'\in [q]}\sqrt{\lambdaa_{m+j'}\lambdaa_{m+k'}\lambdab_{-j'}\lambdab_{-k'}}\omega^{-lj'+lk'}\ketbra{v_{m}}{v_{m}}\otimes \ketbra{v_{j'}}{v_{k'}} \\
   & =\frac{1}{q^2}\sum_{m\in [q]}\ketbra{v_{m}}{v_{m}}\otimes  \left(\sum_{j',k'\in [q]}\sqrt{\lambdaa_{m+j'}\lambdaa_{m+k'}\lambdab_{-j'}\lambdab_{-k'}}\omega^{-lj'+lk'}\ketbra{v_{j'}}{v_{k'}}\right)
\end{align*}
Hence we can compute $p^{\cnop}_{m}$ as 
\begin{align*}
    p^{\cnop}_{m} & =\Tr\left(\bra{v_{m}}\otimes \mI \left(U^{\cnop}[W_{1}\cnop W_{2}](l)(U^{\cnop})^{\dagger}\right)\ket{v_{m}}\otimes \mI\right)\\
    & = \Tr\left(\frac{1}{q^2}\sum_{j',k'\in [q]}\sqrt{\lambdaa_{m+j'}\lambdaa_{m+k'}\lambdab_{-j'}\lambdab_{-k'}}\omega^{-lj'+lk'}\ketbra{v_{j'}}{v_{k'}}\right)\\
    & =\frac{1}{q^2}\sum_{j\in [q]}\bigg{(}\sum_{j',k'\in [q]}\sqrt{\lambdaa_{m+j'}\lambdaa_{m+k'}\lambdab_{-j'}\lambdab_{-k'}}\omega^{-lj'+lk'} \bra{v_{j}}\ketbra{v_{j'}}{v_{k'}}\ket{v_{j}}\bigg{)}\\
 & = \frac{1}{q^2}\sum_{j\in [q]}\lambdaa_{m+j}\lambdab_{-j}
\end{align*}
Consider the state $\{\ket{\psi^{(\cnop,m)}_{l}}\}_{l\in [q]}$ such that 
\begin{align*}
    \ket{\psi^{(\cnop,m)}_{l}}\coloneqq\frac{1}{\sqrt{q}}\sum_{j\in [q]}\sqrt{\lambda_{j}^{(\cnop,m)}}\omega^{-lj}\ket{v_{j}}
\end{align*}
where $\lambda_{j}^{(\cnop,m)}  = \frac{1}{qp_{m}^{\cnop}}\lambda_{(m+j)}^{(1)}\lambda_{-j}^{(2)}$. Then we get 
\begin{align}
   & \ketbra{\psi^{(\cnop,m)}_l}{\psi^{(\cnop,m)}_l}\\
   & =\frac{1}{q}\sum_{j',k'\in [q]}\sqrt{\lambda_{j'}^{(\cnop,m)}\lambda_{k'}^{(\cnop,m)}}\omega^{-lj'+lk'}\ketbra{v_{j'}}{v_{k'}}\\
    & = \frac{1}{q^2p^{\cnop}_{m}}\sum_{j',k'\in [q]}\sqrt{\lambdaa_{m+j'}\lambdaa_{m+k'}\lambdab_{-j'}\lambdab_{-k'}}\omega^{-lj'+lk'}\ketbra{v_{j'}}{v_{k'}}
\end{align}
The following relation holds for the check node combined output
\begin{align*}
 &  \frac{ \bra{v_{m}}\otimes \mI \left(\tilde{U}^{\cnop}[W_{1}\cnop W_{2}](l)(\tilde{U}^{\cnop})^{\dagger}\right)\ket{v_{m}}\otimes \mI}{\Tr\left(\bra{v_{m}}\otimes \mI \left(\tilde{U}^{\cnop}[W_{1}\cnop W_{2}](l)(\tilde{U}^{\cnop})^{\dagger}\right)\ket{v_{m}}\otimes \mI\right)}\\
   & =\frac{ \bra{v_{m}}\otimes \mI \left(\tilde{U}^{\cnop}[W_{1}\cnop W_{2}](l)(\tilde{U}^{\cnop})^{\dagger}\right)\ket{v_{m}}\otimes \mI}{p^{\cnop}_{m}}\\
   & = \frac{1}{q^2p^{\cnop}_m}\sum_{j',k'\in [q]}\sqrt{\lambdaa_{m+j'}\lambdaa_{m+k'}\lambdab_{-j'}\lambdab_{-k'}}\omega^{-lj'+lk'}\ketbra{v_{j'}}{v_{k'}}\\
   & = \ketbra{\psi^{(\cnop,m)}_l}{\psi^{(\cnop,m)}_l}.
\end{align*}
    \end{proof}
\subsection{Proof of Lemma~\ref{lem:bit node}}
    \begin{proof}
      Consider the states $\{\ket{\psia_{u}}\}_{u\in[q]}$ and $\{\ket{\psib_{u}}\}_{u\in[q]}$  which are outputs of channel $W_1$ and $W_{2}$ respectively. Consider the first rows for the Gram matrices $G^{1}$ and $G^{2}$ are as below
        \begin{align*}
             G_{0,:}^{(1)} & =[\ga_0,\dots ,\ga_{q-1}]\\
             G_{0,:}^{(2)} & =[\gb_0,\dots ,\gb_{q-1}]
        \end{align*}
        Then $\forall u\in [q]$, we have 
        \begin{align*}
            \ga_{u} & =\frac{1}{q}\sum_{j\in [q]}\lambdaa_{j}\omega^{-uj}\\
            \gb_{u} & =\frac{1}{q}\sum_{j\in [q]}\lambdab_{j}\omega^{-uj}
        \end{align*}
    For bit node combined channel $W_{1}\vnop W_{2}$, $\forall u\in [q]$ the outputs are 
    \begin{align*}
        [W_{1}\vnop W_{2}](u)= \ketbra{\psia_{u}}{\psia_u}\otimes \ketbra{\psib_u}{\psib_u}
    \end{align*}
    Thus, the Gram matrix $G^{\vnop}$ for the combined channel $W_{1}\vnop W_{2}$ satisfies 
    \begin{align*}
        G^{\vnop}_{i,i'} & =\braket{\psia_i}{\psia_{i'}}\braket{\psib_i}{\psib_{i'}}\\
        & = \ga_{i'-i}\gb_{i'-i}\\
        & =\left(\frac{1}{q}\sum_{k\in [q]}\lambdaa_{k}\omega^{-(i'-i)k}\right)\left(\frac{1}{q}\sum_{k'\in [q]}\lambdab_{k'}\omega^{-(i'-i)k'}\right)\\
        & =\frac{1}{q^2}\sum_{k,k'\in [q]}\lambdaa_{k}\lambdab_{k'}\omega^{-(i'-i)(k+k')}\\
        & =\frac{1}{q^{2}}\sum_{j\in [q]} \left(\sum_{k\in [q]}\lambdaa_{k}\lambdab_{j-k}\right)\omega^{-(i'-i)j}
    \end{align*}
    Denoting $\lambda^{\vnop}_{j}=\frac{1}{q}\sum_{k\in [q]}\lambdaa_{k}\lambdab_{j-k}$, we get 
    \begin{align*}
       G^{\vnop}_{i,i'}= \frac{1}{q}\sum_{j\in [q]}\lambda^{\vnop}_{j}\omega^{-(i'-i)j} 
    \end{align*}
    Thus $G^{\vnop}$ is circulant with eigen value spectrum $\blambda^{\vnop}=[\lambda^{\vnop}_0,\dots,\lambda^{\vnop}_{q-1} ]$ which characterizes channel $W^{\vnop}$.
    \end{proof}
\section{Channel Fidelity and Symmetric Holevo Information of Check and Bit Node Combined Channels}\label{Appendix: fidelity bounds}
    \subsection{Proof of Lemma~\ref{lem:fidelity check and bit node bounds}}
       \begin{proof}
Writing $F(W_1)$ and $F(W_2)$, in terms of Gram matrix values we get 
\begin{align}
    F(W_1) & =\frac{1}{q-1}\sum_{u=1}^{q-1}|\ga_{u}|\\
    F(W_2) & =\frac{1}{q-1}\sum_{u=1}^{q-1}|\gb_{u}|
\end{align}
The fidelity for the bit node combined channel $F(W_1\vnop W_{2})$ satisfies 
\begin{align*}
    F(W_1\vnop W_2)=\frac{1}{q-1}\sum_{u=1}^{q-1}|\ga_u\gb_u|
\end{align*}
Then we get
\begin{align*}
    \sum_{u=1}^{q-1}|\ga_u\gb_u| & \leq \sum_{u=1}^{q-1}|\ga_{u}|\sum_{u=1}^{q-1}|\gb_{u}|\\
    \implies F(W_{1}\vnop W_{2}) & \leq (q-1)F(W_1)F(W_2)
\end{align*}
The fidelity for check node combined channel, $F(W_1\cnop W_2)$ satisfies
\begin{align*}
    F(W_1\cnop W_2) & =\sum_{m\in [q]}p_{m}^{\cnop}F(W_{m}^{\cnop})\\
    & = \sum_{m\in [q]}p_{m}^{\cnop}\left(\frac{1}{q-1}\sum_{u=1}^{q-1}|g^{(\cnop,m)}_u|\right)
\end{align*}
The Gram matrix values $g^{(\cnop,m)}_u$ can be computed as 
\begin{align*}
  &  g^{(\cnop,m)}_u \\
  & = \frac{1}{q}\sum_{j\in [q]}\lambda^{(\cnop,m)}_{j}\omega^{-uj}\\
    & =\frac{1}{q^2p_{m}^{\cnop}}\sum_{j\in [q]}\lambdaa_{m+j}\lambdab_{-j}\omega^{-uj}\\
    & = \frac{1}{q^2p_{m}^{\cnop}}\sum_{j\in [q]}\left(\sum_{k\in [q]}\ga_{k}\omega^{(m+j)k}\sum_{k'\in [q]}\gb_{k'}\omega^{-jk'}\right)\omega^{-uj}\\
    & = \frac{1}{q^2p_{m}^{\cnop}}\left(\sum_{k,k'\in [q]}\ga_{k}\gb_{k'}\omega^{mk}\right)\sum_{j\in [q]}\omega^{j(k-k'-u)}\\
    & = \frac{1}{qp_{m}^{\cnop}}\sum_{k\in [q]}\ga_{k}\gb_{k-u}\omega^{mk}
\end{align*}
Then, we get 
\begin{align*}\label{eq:checknode fidelity gram element relation}
   F(W_1\cnop W_2) & =  \sum_{m\in [q]}p_{m}^{\cnop}\left(\frac{1}{q-1}\sum_{u=1}^{q-1}|g^{(\cnop,m)}_u|\right)\\
   & = \frac{1}{q(q-1)}\sum_{m\in [q]}\sum_{u=1}^{q-1}|\sum_{k\in [q]}\ga_{k}\gb_{k-u}\omega^{mk}|
\end{align*}
Using absolute values we obtain 
\begin{align*}
|\sum_{k\in [q]}\ga_{k}\gb_{k-u}\omega^{mk}|\leq \sum_{k\in [q]}  |\ga_k||\gb_{k-u}|
\end{align*}
Thus, substituting this expression we get 
\begin{align*}
    F(W_1\cnop W_2) & \leq \frac{1}{q(q-1)}\sum_{m\in [q]}\sum_{u=1}^{q-1}\sum_{k\in [q]}  |\ga_k||\gb_{k-u}|\\
    & =\frac{1}{q-1}\sum_{u=1}^{q-1}\sum_{k\in [q]}  |\ga_k||\gb_{k-u}|
\end{align*}
Next, for fixed $u$, when $k=0$ we have 
\begin{align}
    |\ga_{0}|\,|\gb_{u}|=|\gb_{u}|
\end{align}
Similarly, for fixed $u$ and  $k=u$, we get 
\begin{align*}
    |\ga_{u}|\,|\gb_{0}|=|\ga_{u}| 
\end{align*}
Hence, we get the following inequality 
\begin{align*}
   & F(W_1\cnop W_2)\\
   & \leq \frac{1}{q-1}\sum_{u=1}^{q-1}\left(|\ga_{u}|+|\gb_{u}|+\sum_{k\neq 0,u}|\ga_{k}|\,|\gb_{k-u}|\right)\\
   & = F(W_1)+F(W_2) +\frac{1}{q-1}\left(\sum_{u=1}^{q-1}\sum_{k\neq 0,u}|\ga_{k}|\,|\gb_{k-u}|\right)\\
   & \leq F(W_1)+F(W_2) +\frac{1}{q-1}\left(\sum_{k=1}^{q-1}|\ga_{u}|\right)\left(\sum_{k'\in 1}^{q-1}|\gb_{k'}|\right)\\
   & =F(W_1)+F(W_2) +(q-1) F(W_1)F(W_2)
\end{align*}
    \end{proof}
\subsection{Special Case}
 \begin{lem}\label{lem:fidelity bounds special case 1}
          Consider symmetric $q$-ary PSCs $W_{1}$ and $W_{2}$ with Gram matrices $G^{(1)}$ and $G^{(2)}$ which have eigen list $\blambda_{1}=[\lambda^{(1)}_{0},\frac{q-\lambdaa_0}{q-1}\dots,\frac{q-\lambdaa_0}{q-1}]$ and $\blambda_{2}=[\lambda^{(2)}_{0},\frac{q-\lambdab_0}{q-1}\dots,\frac{q-\lambdab_0}{q-1}]$ respectively. Then, for the check node and bit node combined channels $W_{1}\cnop W_{2}$ and $W_{1}\vnop W_{2}$, the following relations hold
          \begin{align*}
          F(W_{1}\vnop W_{2}) & = F(W_1)F(W_{2})\\
               F(W_{1}\cnop W_{2}) & \leq F(W_1)+F(W_{2})-F(W_1)F(W_{2})
              \end{align*}
    \end{lem}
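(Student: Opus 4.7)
The plan is to exploit the fact that the flat eigen lists force the Gram matrices to be determined by a single real parameter each. From Lemma~\ref{lem:gram element eigenvalue relation}, using $\sum_{j\ne 0}\omega^{-uj}=-1$ and the trace relation $\lambda_0^{(i)}+(q-1)\lambda_1^{(i)}=q$, I would check that $g_u^{(i)}=g^{(i)}:=\frac{\lambda_0^{(i)}-1}{q-1}$ for every $u\ne 0$, so $F(W_i)=|g^{(i)}|$ by Lemma~\ref{lem:Holevo information and channel fidelity}. Both statements of the lemma then reduce to inequalities in $g^{(1)}$ and $g^{(2)}$.

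For the bit-node claim the argument is one line. By Lemma~\ref{lem:bit node} (equivalently, by entrywise Hadamard product of the Gram matrices) the combined channel is itself flat with $g_u^{\vnop}=g^{(1)}g^{(2)}$ for $u\ne 0$, so $F(W_1\vnop W_2)=|g^{(1)}g^{(2)}|=F(W_1)F(W_2)$, which is the claim with equality.

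For the check-node claim I would first extract an exact expression for $F(W_1\cnop W_2)$ from the heralded ensemble of Lemma~\ref{lem:check node}. Setting $a=\lambda_0^{(1)}$, $b=1-g^{(1)}$, $c=\lambda_0^{(2)}$, $d=1-g^{(2)}$, the $m=0$ outcome remains in the flat family, and the identity $ac-bd=q(g^{(1)}+g^{(2)}+(q-2)g^{(1)}g^{(2)})$ gives $p_0^{\cnop}F(W_0^{\cnop})=\frac{1}{q}|g^{(1)}+g^{(2)}+(q-2)g^{(1)}g^{(2)}|$. For $m\ne 0$ a short calculation yields $qp_m^{\cnop}=1-g^{(1)}g^{(2)}$ and $g_u^{(\cnop,m)}=\frac{(1-g^{(1)})g^{(2)}+(1-g^{(2)})g^{(1)}\omega^{um}}{1-g^{(1)}g^{(2)}}$ for $u\ne 0$. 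Because $q$ is prime the substitution $v=um\bmod q$ makes the inner sum independent of $m$, so the total contribution from the $m\ne 0$ outcomes collapses to $\frac{1}{q}\sum_{v=1}^{q-1}|\alpha+\beta\omega^v|$ with $\alpha:=(1-g^{(1)})g^{(2)}$ and $\beta:=(1-g^{(2)})g^{(1)}$.

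The remaining task is to show
\[|g^{(1)}+g^{(2)}+(q-2)g^{(1)}g^{(2)}|+\sum_{v=1}^{q-1}|\alpha+\beta\omega^v|\le q(|g^{(1)}|+|g^{(2)}|-|g^{(1)}||g^{(2)}|),\]
which I would do by a sign case analysis on $(g^{(1)},g^{(2)})$. When at least one of $g^{(1)},g^{(2)}$ is non-negative, the triangle bound $|\alpha+\beta\omega^v|\le|\alpha|+|\beta|$ already suffices: after substitution the coefficients of $|g^{(i)}|$ and $|g^{(1)}g^{(2)}|$ collapse, and the sign of the first term is pinned down using $\lambda_0^{(i)}=1+(q-1)g^{(i)}\ge 0$. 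The hard case is $g^{(1)},g^{(2)}<0$, in which the naive triangle overshoots the target by $2q|g^{(1)}g^{(2)}|$; there I would sharpen the $\omega^v$-sum via Cauchy--Schwarz together with the Parseval-type identity $\sum_{v=0}^{q-1}|\alpha+\beta\omega^v|^2=q(\alpha^2+\beta^2)$, giving $\sum_{v=1}^{q-1}|\alpha+\beta\omega^v|\le\sqrt{(q-1)[q(\alpha^2+\beta^2)-(\alpha+\beta)^2]}$, and then verify the residual algebraic inequality using $|g^{(i)}|\le\frac{1}{q-1}$ (with equality only at $q=3$, $g^{(1)}=g^{(2)}=-\tfrac{1}{2}$). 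Closing this last case cleanly is the main obstacle I anticipate.
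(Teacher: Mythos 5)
Your derivation of the exact expression for $F(W_1\cnop W_2)$ is correct and essentially reproduces the paper's own computation: you arrive (with $\alpha=(1-g^{(1)})g^{(2)}$, $\beta=(1-g^{(2)})g^{(1)}$) at
\[
F(W_1\cnop W_2)=\tfrac{1}{q}\bigl|g^{(1)}+g^{(2)}+(q-2)g^{(1)}g^{(2)}\bigr|+\tfrac{1}{q}\sum_{v=1}^{q-1}\bigl|\alpha+\beta\omega^{v}\bigr|,
\]
which matches the paper's $S_{m,u}$ decomposition after the reindexing $t=mu\bmod q$, and the bit-node equality is handled the same way. The gap is in the last step. The paper's proof implicitly takes $\lambda_0^{(i)}\ge 1$ (equivalently $g^{(i)}=\tfrac{\lambda_0^{(i)}-1}{q-1}\ge 0$); this is the only regime the paper ever uses, since elsewhere $\lambda_0$ is declared to be the \emph{maximum} eigenvalue, and it forces $F(W_i)=g^{(i)}$, $\alpha,\beta\ge 0$. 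Then the triangle bound $|\alpha+\beta\omega^{v}|\le\alpha+\beta=F(W_1)+F(W_2)-2F(W_1)F(W_2)$ plugs in and the $F(W_1)F(W_2)$ coefficients combine exactly to $-1$, finishing the proof in one line. You instead allowed $g^{(i)}<0$, and your treatment of that extension is incomplete: the "one sign negative" subcase does not follow from a bare triangle bound as you claim (you still need to determine the sign of the $m=0$ term and check both branches — it does work out, but only after that case analysis and using $|g^{(i)}|\le\tfrac{1}{q-1}$), and you explicitly leave the "both negative" subcase open, where the naive triangle bound overshoots by $2F(W_1)F(W_2)$ and your proposed Cauchy--Schwarz/Parseval fix is not carried through. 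As a proof of the lemma in the paper's intended scope, restricting to $\lambda_0^{(i)}\ge 1$ closes your argument immediately and makes the sign analysis moot; as a proof of the more general statement you set out to show, the both-negative case is a genuine unresolved gap.
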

       \begin{proof}
        Observe that the Gram matrix elements $|\ga_u|$ $\forall u\in [q]$ is computed as 
        \begin{align*}
            \ga_{u} & =\frac{1}{q}\sum_{j=0}^{q-1}\lambdaa_{j}\omega^{-uj}\\
            & = \frac{1}{q}(\lambdaa_0+\sum_{j-1}^{q-1}\frac{q-\lambdaa_0}{q-1}\omega^{-uj})
        \end{align*}
        If $u\neq 0$, we get 
        \begin{align*}
            \ga_u & =\frac{1}{q}(\lambdaa_0-\frac{q-\lambdaa_0}{q-1})\\
            & =\frac{\lambdaa_0-1}{q-1}
        \end{align*}
    Hence, the channel fidelity $F(W_1)$ satisfies 
    \begin{align*}
        F(W_1) & =\frac{1}{q-1}\sum_{u=1}^{q-1}|\ga_u|\\
        & = \frac{\lambdaa_0-1}{q-1}.
    \end{align*}
    Similarly, we obtain the expression for channel fidelity $F(W_2)$ as 
    \begin{align*}
        F(W_2) & =\frac{\lambdab_0-1}{q-1}
    \end{align*}
This also implies, the following holds $\forall u\neq 0$ 
\begin{align*}
    |\ga_{u}\gb_u| & =\frac{\lambdaa_0-1}{q-1}\frac{\lambdab_0-1}{q-1}\\
    & = F(W_1)F(W_2)
\end{align*}
Thus we get 
\begin{align*}
    F(W_1\vnop W_2) & =\frac{1}{q-1}\sum_{u=1}^{q-1}|\ga_u\gb_u|\\
    & = F(W_1)F(W_2)
\end{align*}
Consider the quantity $S_{m,u}=\sum_{k\in [q]}\ga_{k}\gb_{k-u}\omega^{mk}$. From above relations we get the following equality for $u\neq 0$
\begin{align*}
    S_{0,u} & =\ga_{u}+\gb_{-u}+\sum_{k\neq 0,u}\ga_{k}\gb_{k-u}\\
    & = \frac{q-\lambdaa_0}{q-1}+\frac{q-\lambdab}{q-1}+(q-2)\frac{q-\lambdaa_0}{q-1}\frac{q-\lambdab_0}{q-1}\\
    & = F(W_1)+F(W_2)+(q-2)F(W_1)F(W_2)
\end{align*}
For $m\neq 0$, observe that 
\begin{align*}
    \sum_{k\neq 0,u}\omega^{mk}= -1-\omega^{mu}.
\end{align*}
Hence, we get
\begin{align*}
   & S_{m,u} \\
   & =\frac{q-\lambdaa_0}{q-1}\omega^{mu}+\frac{q-\lambdab}{q-1}+(-1-\omega^{mu})\frac{q-\lambdaa_0}{q-1}\frac{q-\lambdab_0}{q-1}\\
   & = F(W_2)(1-F(W_1))+F(W_1)(1-F(W_2))\omega^{mu}
\end{align*}
Using \eqref{eq:checknode fidelity gram element relation}, we get 
\begin{align*}
  &  F(W_1\cnop W_2)\\
  & =\frac{1}{q(q-1)}\sum_{m\in [q]}\sum_{u=1}^{q-1}|S_{m,u}|\\
    & = \frac{1}{q(q-1)}\sum_{u=1}^{q-1}|S_{0,u}|+\frac{1}{q(q-1)}\sum_{m=1}^{q-1}\sum_{u=1}^{q-1}|S_{m,u}|\\
    & =\frac{1}{q}(F(W_1)+F(W_2)+(q-2)F(W_1)F(W_2))\\
   & \quad +\frac{1}{q}\sum_{t=1}^{q-1}|F(W_2)(1-F(W_1))+F(W_1)(1-F(W_2))\omega^{t}|
\end{align*}
Next $\forall t\neq 0$, it satisfies 
\begin{align*}
  & | F(W_2)(1-F(W_1))+F(W_1)(1-F(W_2))\omega^{t}|\\
  & \leq F(W_2)(1-F(W_1))+F(W_1)(1-F(W_2))\\
  & = F(W_1)+F(W_2)-2F(W_1)F(W_2)
\end{align*}
Thus, we get the final inequality as 
\begin{align*}
    & F(W_1\cnop W_2)\\
  &  \leq \frac{1}{q} (F(W_1)+F(W_2)+(q-2)F(W_1)F(W_2)\\
   & \quad +(q-1)(F(W_1)+F(W_2)-2F(W_1)F(W_2)) \\
   & = F(W_1)+F(W_2)-F(W_1)F(W_2)
\end{align*}
    \end{proof}
\subsection{Symmetric Holevo Information of Bit Node Combined Channel}
  \begin{lem}\label{lem:Holevo_bitnode}
          Consider symmetric $q$-ary PSCs $W_{1}$ and $W_{2}$ with Gram matrices $G^{(1)}$ and $G^{(2)}$ which have eigen list $\blambda_{1}=[\lambda^{(1)}_{0},\dots,\lambda_{q-1}^{(1)}]$ and $\blambda_{2}=[\lambda^{(2)}_{0},\dots,\lambda_{q-1}^{(2)}]$ respectively. Then, for the check node and bit node combined channels $W_{1}\cnop W_{2}$ and $W_{1}\vnop W_{2}$, the following relations hold
          \begin{align*}
          I(W_{1}\vnop W_{2}) & =  H(\bmu^{(1)}*\bmu^{(2})
              \end{align*}
              where $\bmua=[\mua_{0},\dots,\mua_{q-1} ]$ and $\bmub=[\mub_{0},\dots,\mub_{q-1} ]$ such that $\mua_{j}=\frac{\lambdaa_j}{q}$ and $\mub_{j}=\frac{\lambdab_j}{q}$ $\forall j\in [q]$.
    \end{lem}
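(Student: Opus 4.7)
The plan is to reduce this identity to a direct composition of two results already established in the paper, namely Lemma~\ref{lem:Holevo information and channel fidelity} and Lemma~\ref{lem:bit node}. Since the bit node combined channel $W_1 \vnop W_2$ is itself (up to isometric equivalence) a symmetric $q$-ary PSC, its symmetric Holevo information is computable directly from its Gram-matrix eigen list, and only the identification of that eigen list with the convolution $\bmua * \bmub$ remains to be verified.

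The first step is to apply Lemma~\ref{lem:bit node} to $W_1 \vnop W_2$: the combined channel is isometrically equivalent to a symmetric $q$-ary PSC $W^{\vnop}$ whose Gram-matrix eigen list satisfies
\begin{align*}
\lambda_j^{\vnop} \;=\; \frac{1}{q}\sum_{k\in[q]} \lambdaa_k\,\lambdab_{j-k}.
\end{align*}
Dividing by $q$ and using $\mua_k = \lambdaa_k/q$ and $\mub_k = \lambdab_k/q$ gives
\begin{align*}
\mu_j^{\vnop} \;=\; \frac{\lambda_j^{\vnop}}{q} \;=\; \sum_{k\in[q]} \mua_k\,\mub_{j-k} \;=\; (\bmua * \bmub)_j,
\end{align*}
i.e. the normalized eigen list of the bit-node combined channel is exactly the (modulo-$q$) convolution of the normalized eigen lists of $W_1$ and $W_2$. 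The second step is then to apply Lemma~\ref{lem:Holevo information and channel fidelity} to $W^{\vnop}$, which gives $I(W_1 \vnop W_2) = I(W^{\vnop}) = H(\bmu^{\vnop}) = H(\bmua * \bmub)$, as claimed.

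The argument is essentially a bookkeeping exercise, and I do not expect any real obstacle. The only small caveat is to make sure that isometric equivalence preserves symmetric Holevo information, but this follows immediately because an isometry $V$ sends $\frac{1}{q}\sum_u \ketbra{\psi_u}{\psi_u}$ to $V\bigl(\frac{1}{q}\sum_u \ketbra{\psi_u}{\psi_u}\bigr)V^\dagger$, which has the same nonzero eigenvalues and hence the same von Neumann entropy. One should also note that $\bmua * \bmub$ is automatically a probability distribution on $[q]$ because $\bmua$ and $\bmub$ are, so $H(\bmua * \bmub)$ is well defined; this is consistent with $\sum_u \lambda_u^{\vnop} = q$ which also follows directly from Lemma~\ref{lem: eigenlist trace relation} applied to $W^{\vnop}$.
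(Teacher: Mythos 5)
Your proposal is correct and follows exactly the paper's own argument: invoke Lemma~\ref{lem:bit node} to obtain $\lambda_j^{\vnop} = \tfrac{1}{q}\sum_k \lambdaa_k\lambdab_{j-k}$, normalize to see $\bmu^{\vnop}=\bmua*\bmub$, and then apply Lemma~\ref{lem:Holevo information and channel fidelity} to conclude $I(W_1\vnop W_2)=H(\bmu^{\vnop})$. Your added note that isometric equivalence preserves the von Neumann entropy of the average state is a helpful clarification the paper leaves implicit, but it does not change the route.
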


    \begin{proof}
         Define $\bmu^{\vnop}=[\mu^{\vnop}_0,\dots, \mu^{\vnop}_{q-1}]$ such that $\mu^{\vnop}_{j}=\frac{\lambda^{\vnop}_j}{q}$ $\forall j\in [q]$.
  Using Lemma~\ref{lem:bit node}, we get
        \begin{align*}
            \lambda^{\vnop}_{j} & =\frac{1}{q}\sum_{k\in [q]}\lambdaa_{k}\lambdab_{j-k}\\
            \implies \mu^{\vnop}_{j} & = \sum_{k\in [q]}\mua_{k}\mub_{j-k}\\
        \implies \bmu^{\vnop}   &  = \bmu^{(1)}*\bmu^{(2)}
        \end{align*}
    Using Lemma~\ref{lem:Holevo information and channel fidelity},  the Holevo information $I(W_{1}\vnop W_{2})$ for the bit node combined channel satisfies 
        \begin{align*}
            I(W_{1}\vnop W_{2}) & =H(\bmu^{\vnop})\\
            & =-\sum_{j\in [q]}\mu^{\vnop}_{j}\log \mu^{\vnop}_{j}\\
            & = H(\bmu^{(1)}*\bmu^{(2)})
        \end{align*}
    \end{proof}

\section{Heralded Mixtures of Symmetric  PSC}\label{appendix:hearlded PSC}
\begin{defn}
        A channel $W_{H}$ is heralded mixtures of symmetric PSC ,if the channel output can be characterized via an ensemble of symmetric $q$-ary PSCs. More generally, the output of channel $W_H$ $\forall j \in [q]$, can be written as 
        \begin{align*}
            W_H(j)=\sum_{x\in \cX}p_{x}W_{x}(j)\otimes \ketbra{x}{x}
        \end{align*}
    where $W_{x}:j\rightarrow \ketbra{\psi^{x}_j}{\psi^{x}_j}$ is a symmetric $q$-ary PSC with circulant Gram matrix $G_{x}$ and the eigen list $\blambda_x$ $\forall x \in \cX$, $\{\ket{x}\}_{x\in \cX}$ is an orthonormal basis and $\{p_{x}\}_{x\in \cX}$ is some probability distribution in finite sized alphabet $\cX$.
    \end{defn}
    Since $\{\ket{x}\}$ forms orthonormal basis, the channel fidelity $F(W_{H})$ satisfies
    \begin{align*}
      &  F(W_{H})\\
    & =\frac{1}{q(q-1)}\sum_{j,j'\in [q]:, j\neq j'}\Tr\left(\sqrt{\sqrt{W_H(j)}W_H(j')\sqrt{W_H(j)}}\right)\\
    & = \frac{1}{q(q-1)}\sum_{j,j'\in[q]:j\neq j'}\sum_{x\in \cX}p_x\left|\braket{\psi^{x}_j}{\psi^{x}_{j'}}\right|\\
    & =\sum_{x\in \cX}p_x \left(\frac{1}{q(q-1)}\sum_{j,j'\in[q]:j\neq j'}\left|\braket{\psi^{x}_j}{\psi^{x}_{j'}}\right|\right)\\
    & = \sum_{x\in \cX}p_x F(W_x)\\
    & = \mathbb{E}_{x}(F(W_{x}))
    \end{align*}
Similarly, the symmetric Holevo information $I(W_H)$ satisfies 
\begin{align*}
    I(W_{H}) & = S\Bigg(\frac{1}{q}\sum_{j\in [q]}W_H(j)\Bigg)-\sum_{j\in [q]}\frac{1}{q}S\Bigg(W_H(j)\Bigg)\\
    & = S\Bigg(\frac{1}{q}\sum_{j\in [q]}\sum_{x\in \cX}p_{x}W_x(j)\otimes \ketbra{x}{x}\Bigg)-\sum_{j\in [q]}\frac{1}{q}S\Bigg(\sum_{x\in \cX}p_{x}W_{x}(j)\otimes \ketbra{x}{x}\Bigg)\\
    & =\sum_{x\in \cX}p_x \left(S\Bigg(\frac{1}{q}\sum_{j\in [q]}W_x(j)\Bigg)-\sum_{j\in [q]}\frac{1}{q}S\Bigg(W_x(j)\Bigg)\right)\\
    & =\sum_{x\in \cX}p_x I(W_x)\\
    & = \mathbb{E}_{x}[I(W_x)]
\end{align*}
\begin{lem}\label{lem fidelity bound for heralded PSC}
    Consider two heralded mixtures of symmetric PSCs $W_{H}^{(1)}$ and $W_{H}^{(2)}$ whose outputs are of the forms as below 
    \begin{align*}
        W_H^{(1)}(j) & =\sum_{x_1\in \cX_1}p_{x_1}^{(1)}W_{x_1}^{(1)}(j)\otimes \ketbra{x_1}{x_1}\\
        W_H^{(2)}(j) & =\sum_{x_2\in \cX_2}p_{x_2}^{(2)}W_{x_2}^{(2)}(j)\otimes \ketbra{x_2}{x_2}
    \end{align*}
$\forall j\in [q]$ and $\cX_1$, $\cX_2$ are finite sized alphabets, then the following holds 
\begin{align*}
    F(W_{H}^{(1)}\vnop W_{H}^{(2)}) & \leq (q-1)F(W_{H}^{(1)})F(W_{H}^{(2)})\\
    F(W_{H}^{(1)}\cnop W_{H}^{(2)}) & \leq F(W_H^{(1)})+F(W_{H}^{(2)})+(q-1)F(W_{H}^{(1)})F(W_{H}^{(2)})
\end{align*}
\end{lem}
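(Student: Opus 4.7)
The plan is to reduce the heralded-mixture claim to the single-PSC bounds in Lemma~\ref{lem:fidelity check and bit node bounds} by showing that both check-node and bit-node combining commute with the formation of heralded mixtures, and then averaging the pointwise bound against the product distribution on $(x_1, x_2)$.

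First I would expand the definitions to make the commutation explicit. Distributing the tensor product across the mixture sums and pulling the herald registers $\ketbra{x_i}{x_i}$ outside the $u$-average and the cross tensor $W^{(1)}(u) \otimes W^{(2)}(u-l)$ gives
\begin{align*}
[W_H^{(1)} \cnop W_H^{(2)}](l) = \sum_{x_1, x_2} p^{(1)}_{x_1} p^{(2)}_{x_2} \, [W^{(1)}_{x_1} \cnop W^{(2)}_{x_2}](l) \otimes \ketbra{x_1}{x_1} \otimes \ketbra{x_2}{x_2},
\end{align*}
and analogously for $\vnop$. Both combined channels are therefore again heralded mixtures over the product index $(x_1, x_2)$ with weights $p^{(1)}_{x_1} p^{(2)}_{x_2}$ and constituent channels $W^{(1)}_{x_1} \ast W^{(2)}_{x_2}$ for $\ast \in \{\cnop, \vnop\}$; the fact that the check-node constituent is itself a heralded mixture of symmetric PSCs (Lemma~\ref{lem:check node}) is irrelevant, since the outer fidelity identity still applies after flattening.

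Next I would invoke the identity $F(W_H) = \E_x[F(W_x)]$ established in the paragraph immediately preceding the lemma. Applied to each of the two combined heralded mixtures, this gives
\begin{align*}
F(W_H^{(1)} \ast W_H^{(2)}) = \sum_{x_1, x_2} p^{(1)}_{x_1} p^{(2)}_{x_2} \, F(W^{(1)}_{x_1} \ast W^{(2)}_{x_2}).
\end{align*}
Now I would apply the single-PSC bounds of Lemma~\ref{lem:fidelity check and bit node bounds} pointwise in $(x_1, x_2)$, then use the product structure of the distribution to factor the relevant expectation, $\E_{x_1, x_2}[F(W^{(1)}_{x_1}) F(W^{(2)}_{x_2})] = F(W_H^{(1)}) F(W_H^{(2)})$, together with the marginal identities $\E_{x_1, x_2}[F(W^{(i)}_{x_i})] = F(W_H^{(i)})$ for $i = 1, 2$. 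The two claimed inequalities then fall out immediately.

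There is no genuinely new estimation beyond Lemma~\ref{lem:fidelity check and bit node bounds}; the only nontrivial step is the bookkeeping in the first paragraph, and the main point requiring care is the check-node case, where one must verify that the $u$-average and the cross tensor do not entangle the herald registers so that the product-of-mixtures structure is preserved. Once that commutation is in hand, the bound is just linearity of expectation plus independence of the two herald distributions.
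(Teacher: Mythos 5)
Your proof is correct and follows essentially the same route as the paper's: decompose $F(W_H^{(1)} \ast W_H^{(2)})$ as the expectation of $F(W_{x_1}^{(1)} \ast W_{x_2}^{(2)})$ over the product distribution on $(x_1,x_2)$, apply Lemma~\ref{lem:fidelity check and bit node bounds} pointwise, and factor using independence of the two herald distributions. The only difference is that you make explicit the commutation of the combining operations with the mixture structure, which the paper's proof uses implicitly when it writes $F(W_H^{(1)} \ast W_H^{(2)})$ directly as the weighted sum of $F(W_{x_1}^{(1)} \ast W_{x_2}^{(2)})$.
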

    \begin{proof}
From the definition of bit node combining we get 
\begin{align*}
    F(W_{H}^{(1)}\vnop W_{H}^{(2)}) & =\sum_{x_1\in \cX_1,x_2\in \cX_2}p_{x_1}^{(1)}p^{(2)}_{x_2}F(W_{x_1}^{(1)}\vnop W_{x_2}^{(2)})\\
    & = \mathbb{E}_{x_1\in\cX_1,x_2\in \cX_2}[F(W_{x_1}^{(1)}\vnop W_{x_2}^{(2)})]
\end{align*}
Hence, using the upper bound of  $F(W_{x_1}^{(1)}\vnop W_{x_2}^{(2)})$ for all $x_1\in \cX_1,x_2\in \cX_2$ pairs we get 
\begin{align*}
    & F(W_{H}^{(1)}\vnop W_{H}^{(2)})\\
    & \leq (q-1)\sum_{x_1\in \cX_1,x_2\in \cX_2}p_{x_1}^{(1)}p^{(2)}_{x_2}F(W_{x_1}^{(1)})F(W_{x_2}^{(2)})\\
     & = (q-1)\left(\sum_{x_1\in \cX_1}p_{x_1}^{(1)}F(W_{x_1}^{(1)})\right)\left(\sum_{x_2\in \cX_2}p^{(2)}_{x_2}F(W_{x_2}^{(2)})\right)\\
     & = (q-1)F(W_{H}^{(1)})F(W_{H}^{(2)})
\end{align*}
Similarly for check node combining it satisfies 
\begin{align*}
    F(W_{H}^{(1)}\cnop W_{H}^{(2)}) & =\sum_{x_1\in \cX_1,x_2\in \cX_2}p_{x_1}^{(1)}p^{(2)}_{x_2}F(W_{x_1}^{(1)}\cnop W_{x_2}^{(2)})\\
    & = \mathbb{E}_{x_1\in \cX_1,x_2\in \cX_2}[F(W_{x_1}^{(1)}\cnop W_{x_2}^{(2)})]
\end{align*}
\begin{align*}
    & F(W_{H}^{(1)}\vnop W_{H}^{(2)})\\
    & \leq \sum_{x_1\in \cX_1,x_2\in \cX_2}p_{x_1}^{(1)}p^{(2)}_{x_2}\Bigg(F(W_{x_1}^{(1)})+F(W_{x_2}^{(2)})+ (q-1)F(W_{x_1}^{(1)})F(W_{x_2}^{(2)})\Bigg)\\
    & = \sum_{x_1\in \cX_1}p_{x_1}^{(1)}F(W_{x_1}^{(1)})+\sum_{x_2\in \cX_2}p^{(2)}_{x_2}F(W_{x_2}^{(2)})\\
    & \quad + (q-1)\left(\sum_{x_1\in \cX_1}p_{x_1}^{(1)}F(W_{x_1}^{(1)})\right)\left(\sum_{x_2\in \cX_2}p^{(2)}_{x_2}F(W_{x_2}^{(2)})\right)\\
    & = F(W_H^{(1)})+F(W_{H}^{(2)})+(q-1)F(W_{H}^{(1)})F(W_{H}^{(2)})
\end{align*}
\end{proof}
\begin{lem}\label{lem: Holevo information for heralded PSC}
    Consider two heralded symmetric PSCs $W_{H}^{(1)}$ and $W_{H}^{(2)}$ whose outputs are of the forms as below 
    \begin{align*}
        W_H^{(1)}(j) & =\sum_{x_1\in \cX_1}p_{x_1}^{(1)}W_{x_1}^{(1)}(j)\otimes \ketbra{x_1}{x_1}\\
        W_H^{(2)}(j) & =\sum_{x_2\in \cX_2}p_{x_2}^{(2)}W_{x_2}^{(2)}(j)\otimes \ketbra{x_2}{x_2}
    \end{align*}
$\forall j\in [q]$, then the following holds 
\begin{align*}
   I(W_{H}^{(1)}\vnop W_{H}^{(2)})=  \mathbb{E}_{x_1\in \cX_1,x_2\in \cX_2}[H(\bmu^{(1)}_{x_1}*\bmu^{(2)}_{x_2})]
\end{align*}
\end{lem}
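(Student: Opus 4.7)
The plan is to reduce this to the single-channel case (Lemma~\ref{lem:Holevo_bitnode}) by first showing that the bit-node combination of two heralded mixtures is again a heralded mixture, now indexed by the product alphabet $\cX_1 \times \cX_2$, and then invoking the additivity of symmetric Holevo information over a heralded mixture that was established in the discussion immediately preceding this lemma.

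First, I would expand the combined channel directly from the definition of $\vnop$: for every $l \in [q]$,
\begin{align*}
[W_H^{(1)} \vnop W_H^{(2)}](l) &= W_H^{(1)}(l) \otimes W_H^{(2)}(l) \\
&= \sum_{x_1 \in \cX_1, x_2 \in \cX_2} p_{x_1}^{(1)} p_{x_2}^{(2)} \bigl( W_{x_1}^{(1)}(l) \otimes W_{x_2}^{(2)}(l) \bigr) \otimes \ketbra{x_1}{x_1} \otimes \ketbra{x_2}{x_2}.
\end{align*}
After reordering the registers so that the two heralding systems sit together, the joint state $\ket{x_1}\ket{x_2}$ ranges over an orthonormal basis of $\mathbb{C}^{|\cX_1|} \otimes \mathbb{C}^{|\cX_2|}$ indexed by the product alphabet. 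Identifying this with a single herald $x = (x_1,x_2) \in \cX_1 \times \cX_2$ drawn with product probability $p_x = p_{x_1}^{(1)} p_{x_2}^{(2)}$, the combined channel is manifestly a heralded mixture with component PSCs $W_{x_1}^{(1)} \vnop W_{x_2}^{(2)}$.

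Next, applying the identity $I(W_H) = \E_{x}[I(W_x)]$ derived earlier for arbitrary heralded mixtures (which follows from the block-diagonal structure and additivity of von Neumann entropy across orthogonal subspaces) to the product-alphabet heralded mixture above gives
\begin{align*}
I\bigl( W_H^{(1)} \vnop W_H^{(2)} \bigr) = \E_{x_1 \in \cX_1,\, x_2 \in \cX_2}\bigl[ I\bigl( W_{x_1}^{(1)} \vnop W_{x_2}^{(2)} \bigr) \bigr].
\end{align*}
Finally, each summand $W_{x_1}^{(1)} \vnop W_{x_2}^{(2)}$ is a bit-node combination of two (non-heralded) symmetric $q$-ary PSCs, so Lemma~\ref{lem:Holevo_bitnode} gives $I(W_{x_1}^{(1)} \vnop W_{x_2}^{(2)}) = H(\bmu_{x_1}^{(1)} * \bmu_{x_2}^{(2)})$, and substituting into the display above yields the claim.

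There is no substantial obstacle: the only point that requires a little care is the register bookkeeping in the first step, verifying that the product heralds $\ket{x_1}\ket{x_2}$ form an orthonormal system so that the resulting object genuinely falls into the heralded-mixture framework on which the additivity identity and Lemma~\ref{lem:Holevo_bitnode} are based.
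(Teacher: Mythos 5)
Your proposal is correct and follows essentially the same route as the paper: the paper's proof simply writes $I(W_{H}^{(1)}\vnop W_{H}^{(2)})=\mathbb{E}_{x_1,x_2}[I(W^{(1)}_{x_1}\vnop W^{(2)}_{x_2})]$ and then applies Lemma~\ref{lem:Holevo_bitnode}, which is exactly your argument with the product-herald bookkeeping left implicit. Your version just makes explicit the (correct and worthwhile) intermediate observation that the bit-node combination of two heralded mixtures is itself a heralded mixture over $\cX_1\times\cX_2$ with product weights.
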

\begin{proof}
    The proof the directly follows from Lemma~\ref{lem:Holevo_bitnode}
    \begin{align*}
        I(W_{H}^{(1)}\vnop W_{H}^{(2)}) & = \mathbb{E}_{x_1\in \cX_1,x_2\in \cX_2}[I(W^{(1)}_{x_1}\vnop W^{(2)}_{x_2})]\\
        & = \mathbb{E}_{x_1\in \cX_1,x_2\in \cX_2}[H(\bmu^{(1)}_{x_1}*\bmu^{(2)}_{x_2})]
    \end{align*}
\end{proof}
\section{Polar Coding on Symmetric PSC}\label{Appendix:polar coding on symmetric PSC}
In 2009, Arikan introduced Polar codes as the first deterministic construction
of capacity-achieving codes for binary memoryless symmetric (BMS) channels~\cite{arikan2009channel}. In the context of binary polar codes, the polar transform of length $N=2^{n}$ is obtained by $K_{N}\triangleq B_{N}K_{2}^{\otimes n}$ where $B_{N}$ is $N\times N$  bit reversal matrix~\cite[Sec.~VII.B]{arikan2009channel} and $K_{2}^{\otimes n}$ is $n$-fold tensor product of $2\times 2$ binary matrix 
\begin{align*}
    K_{2}\triangleq\begin{bmatrix}
     1 & 0\\
     1 & 1
    \end{bmatrix}.
\end{align*}
Consider the input  pair $(u_1,u_2)$ and the polar transformed pair $(x_1,x_2)$ defined by the matrix $K_2$ such that they satisfy 
\begin{align*}
    x_1 & =u_1+u_2\\
    x_2 & = u_2.
\end{align*}
Suppose, we send inputs $x_1$ and $x_2$ through the symmetric $q$-ary PSCs $W_1$ and $W_2$. Then we can construct the effective CQ channel $W^{-}$ which takes $u_1\in [q]$ as input and outputs a quantum state which can be written as 
\begin{align}\label{eq:check node polarization}
    W^{-}(u_1) & =\frac{1}{q}\sum_{u_2\in [q]}W_1(u_1+u_2)\otimes W_2(u_2)\nonumber\\
    & =\frac{1}{q}\sum_{u\in [q]}W_1(u)\otimes W_2(u-u_1)\nonumber\\
    & = [W_1\cnop W_2](u_1)
\end{align}
The CQ channel seen  by $u_2$ utilizes the fact that input $u_1$ is revealed as classical side information. Thus,  the effective channel $W^+$ which takes input $u_2\in [q]$ outputs the quantum state which can be written as 
\begin{align*}
    W^+(u_2) & =\frac{1}{q}\sum_{u_1\in [q]}\ketbra{u_1}{u_1}\otimes W_1(u_1+u_2)\otimes W_2(u_2)
\end{align*}
Since, the channels $W_1$ and $W_2$ are symmetric, there exists a collection of unitaries $\{U_{a}\}_{a\in [q]}$ such that 
\begin{align*}
    U_{a}W_{1}(u_1)U_{a}^{\dagger}=W_1(u_1+a)
\end{align*}
Define the unitary $\cU$ as below
\begin{align*}
    \cU= \sum_{u_1\in [q]}\ketbra{u_1}{u_{1}}\otimes U_{u_1}^{\dagger}
\end{align*}
Applying $\cU$ on the outputs of channel $W^+ $ $\forall u_2\in [q]$, we get 
\begin{align}\label{eq:bit node polarization}
 &   \cU W^+(u_2)\cU^{\dagger}\\
 &  =\frac{1}{q}\sum_{u_1\in [q]}\ketbra{u_1}{u_1}\otimes U_{u_1}^{\dagger} W_1(u_1+u_2) U_{u_1}\otimes W_2(u_2)\nonumber\\
 & = \frac{\mI}{q}\otimes W_{1}(u_2)\otimes W_{2}(u_2)\nonumber\\
 & = \frac{\mI}{q}\otimes [W_{1}\vnop W_2](u_2).
\end{align}
Thus, the channel $W^{+}$ is isometrically equivalent to channel $W_1\vnop W_2$. This means to understand the channel polarization in this setting, it is enough to analyze the check node and bit node channel recursions $W_1\cnop W_2$ and $W_1\vnop W_2$ where $W_1$ and $W_2$ are arbitrary symmetric $q$-ary PSCs.
\subsection{Channel Polarization}
Consider two symmetric $q$-ary PSCs $W_1$ and $W_2$. Since polarized channels $W^{+}$ and $W^{-}$ are constructed via invertible map $K_2$, we get 
\begin{align*}
    I(W^-)+I(W^+)=I(W_1)+I(W_2).
\end{align*}
Using \eqref{eq:check node polarization} and \eqref{eq:bit node polarization}, we get 
\begin{align*}
    I(W_1\cnop W_2)+ I(W_1\vnop W_2)= I(W_1)+I(W_2).
\end{align*}
This implies if the channels $W_1$ and $W_2$ are characterized by the normalized eigen list $\bmu^{(1)}$ and $\bmu^{(2)}$ then the following holds from Lemma~\ref{lem:Holevo_bitnode} we get 
\begin{align*}
    I(W_1\cnop W_2)=H(\bmu^{(1)})+H(\bmu^{(2)})- H(\bmu^{(1)}*\bmu^{(2})
\end{align*}
Similarly, consider two heralded symmetric PSCs $W_{H}^{(1)}$ and $W_{H}^{(2)}$ whose outputs are of the forms as below 
    \begin{align*}
        W_H^{(1)}(j) & =\sum_{x_1\in \cX_1}p_{x_1}^{(1)}W_{x_1}^{(1)}(j)\otimes \ketbra{x_1}{x_1}\\
        W_H^{(2)}(j) & =\sum_{x_2\in \cX_2}p_{x_2}^{(2)}W_{x_2}^{(2)}(j)\otimes \ketbra{x_2}{x_2}.
    \end{align*}
    In this case also we get 
    \begin{align*}
        I(W_{H}^{(1)}\cnop W_{H}^{(2)})+ I(W_H^{(1)}\vnop W_H^{(2)})= I(W_H^{(1)})+I(W_H^{(2)}). 
    \end{align*}
    Hence, using Lemma~\ref{lem: Holevo information for heralded PSC}, we get 
    \begin{align}\label{eq:heraled psc holevo information relation}
        I(W_{H}^{(1)}\cnop W_{H}^{(2)})= \mathbb{E}_{x_1\in \cX_1}[H(\bmu_{x_1}^{(1})]+\mathbb{E}_{x_2\in \cX_2}[H(\bmu_{x_2}^{(2})]- \mathbb{E}_{x_1\in \cX_1,x_2\in \cX_2}[H(\bmu^{(1)}_{x_1}*\bmu^{(2)}_{x_2})].
    \end{align}
Since we assume $q$ to be prime, we get single level of polarization as there is no non trivial subgroup\cite{nasser2018polar}. In Lemma~\ref{lem:Holevo and fidelity boundary relations}, we showed when $I(W)\rightarrow 0$, we have $F(W)\rightarrow 1$ and when $I(W)\rightarrow\log q$, $F(W)\rightarrow 0$. Combining this with $\eqref{eq:heraled psc holevo information relation}$, we obtain the channel polarization for symmetric $q$-ary PSC based on the proof technique from \cite{csacsouglu2010entropy}.

Furthermore, if $q$ is composite, the channel polarization phenomenon still occurs but results in multilevel polarization, where channels converge to intermediate symmetric Holevo information determined by the subgroup structure of $\mathbb{Z}_q$. 
\section{Density Evolution}\label{Appendix:DE results}
 For binary symmetric CQ channels, DE was proposed in \cite{brandsen2022belief} based on a generalization of BPQM called paired measurement BPQM. The paired measurement compresses the decision information from check-node and bit-node combining into the first qubit while keeping reliability information in the second qubit. Applying DE to a long code, whose factor graph is a tree with sufficiently large depth, results in a threshold phenomenon that allows one to estimate asymptotic characteristics (such as noise threshold) of the code.
In~\cite{brandsen2022belief, mandal2024polarcodescqchannels}, this was applied to regular LDPC codes and polar codes on CQ channels with PM-BPQM decoding. In~\cite{mandal2024polarcodescqchannels,mandal2023belief}, DE was also used to design polar codes based on a targeted block error rate, while in \cite{piveteau2025efficient}, it was applied to turbo codes.
\subsection{DE Results for Polar codes on Symmetric PSC}
Consider length $N=2^n$ polar code where we want to find the information set $\cA$ for a targeted block error rate on a  symmetric $q$-ary PSC $W$ with circulant Gram matrix $G$ characterized by eigen list $\blambda$. This can be done using the DE based on the BPQM updated described in Lemmas~\ref{lem:check node},\ref{lem:bit node}. 
In the Appendix~\ref{sec: DE Algorithms}, we describe the detailed algorithms for Monte Carlo based DE for polar codes with arbitrary code length $N=2^n$ in a symmetric $q$-ary PSC with arbitrary eigen list $\blambda$.  The algorithms BitCombine (Algorithm~\ref{alg:bitcombine}) and CheckCombine (Algorithm~\ref{alg:check combine}) describe the bit node and check node operations for two collections (which we refer to as bags) of symmetric  $q$-ary PSCs. To obtain the polar coding design i.e. to obtain the information set $\cA$, we fix a target block error rate $\epsilon$ and rely on the non commutative union bound \cite{gao2015quantum}. More specifically, let $\perr(W^i_N)$ denote the optimal symbol error probability of channel $W^{i}_N$ using PGM as described in \ref{sec:pgm}, where $i\in [1,\dots, 2^n]$ is the index of polarized channel after $n$ rounds of polarization.

In Fig.~\ref{plot: polar design curve q=3}, we obtain the average channel error rate from the DE for each channel $W^{i}_N$ where we consider the symmetric PSC $W$ with $\blambda=[2.2,0.4,0.4]$. To compare this with respect to the different block lengths, we consider the normalized channel index $\frac{i}{N}$ of the channel $W^{i}_N$. Then, we sort the channel according to the ascending order of channel error rate to get the normalized channel rank and obtain the curve for each block length. We observe that, as increase the block length, we get sharp threshold behavior in terms channel error rate which emphasizes the channel polarization behavior for symmetric $q$-ary PSCs.  In Fig.~\ref{plot: polar design curve q=5}, we obtain the average channel error rate plot with respect to normalized channel rank for $q=5$ with $\blambda=[3,0.5,0.5,0.5,0.5]$.
In Fig.~\ref{plot: polar design curve q=3 arbitrary} and Fig.~\ref{plot: polar design curve q=5 arbitrary}, we obtain the channel error rate using DE on symmetric $q$-ary PSCs  whose eigen lists are $\blambda=[1.9,0.65,0.45]$ and $\blambda=[2,1.1,0.9,0.6,0.4]$ respectively. This provides evidence of DE framework for any arbitrary eigen list for fixed $q$.

In Fig.~\ref{plot: polar rate vs capacity q=5} and \ref{plot: polar rate vs capacity q=7}, we obtain the design rate for different block lengths $N$ with target block error rate $\epsilon=0.1$ in terms of $\lambda_0$ for channel $W$ with  eigen list of the form $\blambda=[\lambda_0,\frac{q-\lambda_0}{q-1},\dots,\frac{q-\lambda_0}{q-1}]$ for $q=5$ and $q=7$ respectively.
\begin{figure}[t]
    \centering
    \subfloat[$q=3$ and $\lambda_0=2.2$ \label{plot: polar design curve q=3}]{%
        \includegraphics[width=0.45\linewidth]{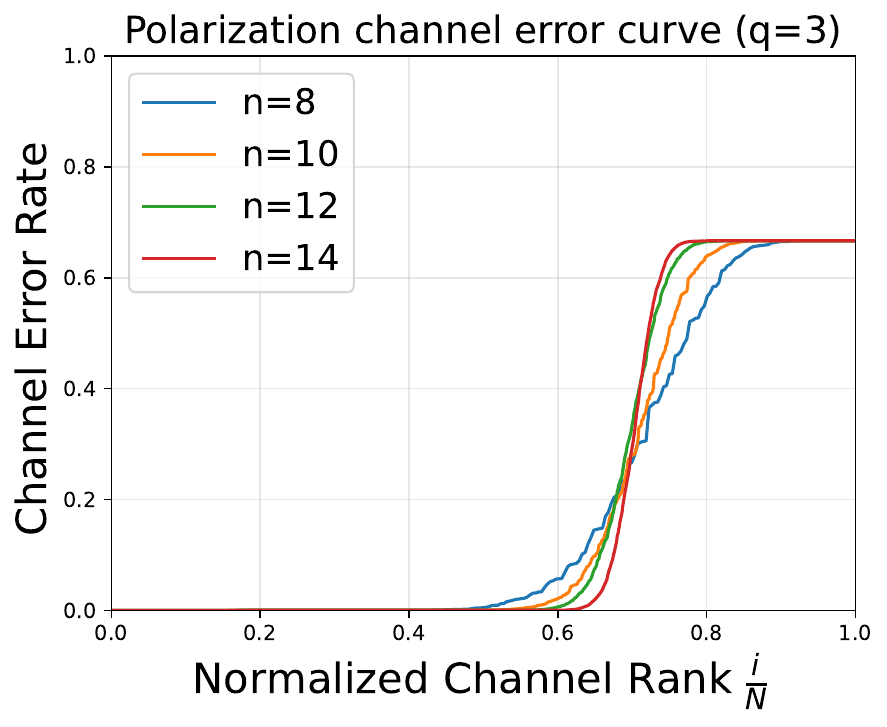}}
    \hfill
    \subfloat[$q=5$ and $\lambda_0=3$ \label{plot: polar design curve q=5}]{%
        \includegraphics[width=0.45\linewidth]{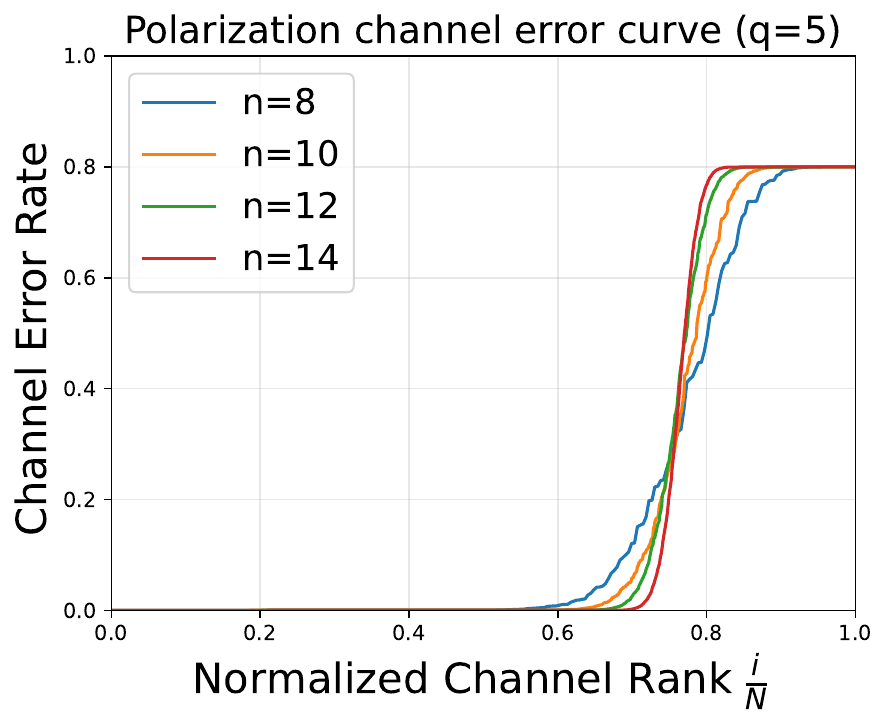}}
    
    \caption{Polar design curves for $q=3$ and $q=5$ for different block lengths $N=2^n$}
\end{figure}

\begin{figure}[t]
    \centering
    \subfloat[$q=3$ and $\blambda={[1.9,0.65,0.45]}$ \label{plot: polar design curve q=3 arbitrary}]{%
        \includegraphics[width=0.45\linewidth]{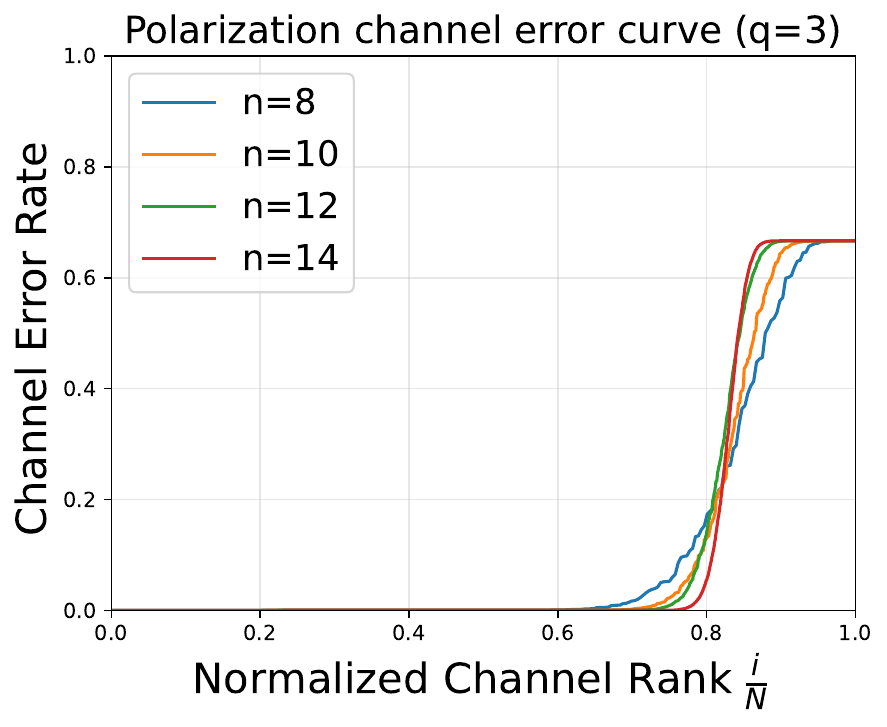}}
    \hfill
    \subfloat[$q=5$ and $\blambda={[2,1.1,0.9,0.6,0.4]}$ \label{plot: polar design curve q=5 arbitrary}]{%
        \includegraphics[width=0.45\linewidth]{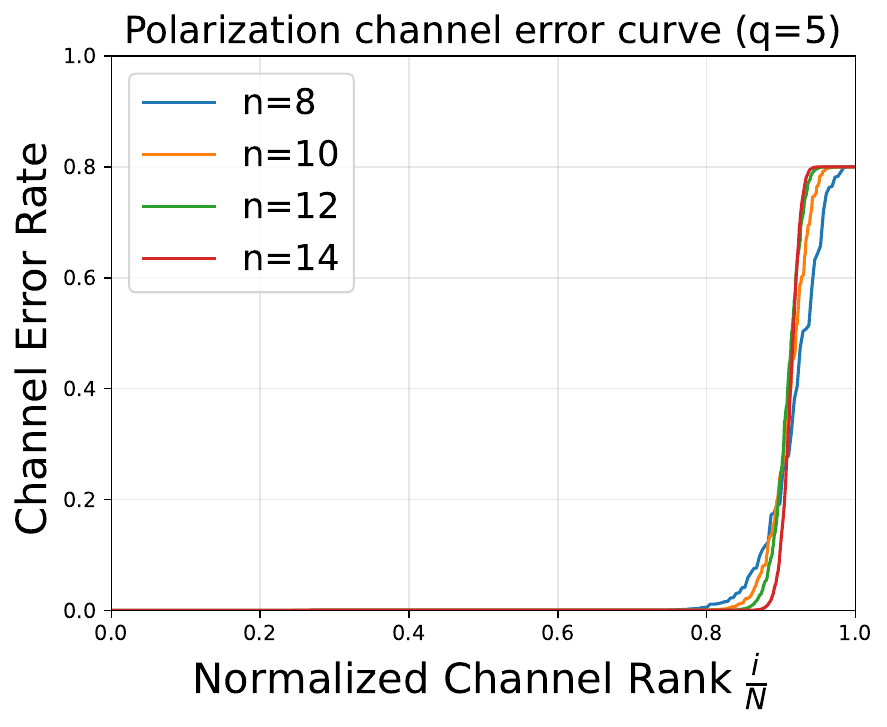}}
    
    \caption{Polar design curves for $q=3$ and $q=5$ for different block lengths $N=2^n$}
\end{figure}
\begin{figure}[t]
    \centering
    \subfloat[$q=5$ \label{plot: polar rate vs capacity q=5}]{%
        \includegraphics[width=0.45\linewidth]{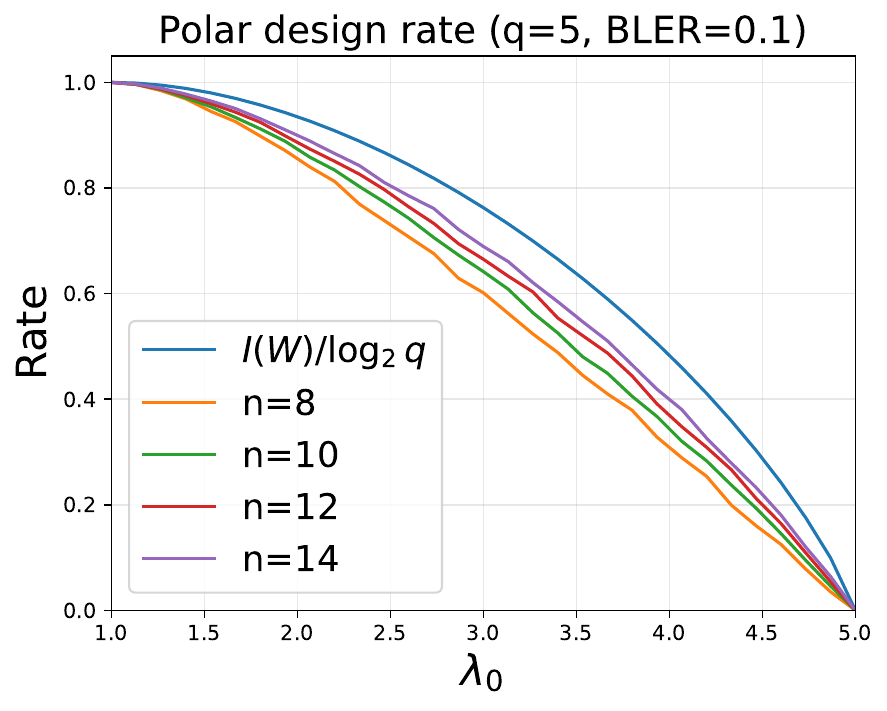}}
    \hfill
    \subfloat[$q=7$ \label{plot: polar rate vs capacity q=7}]{%
        \includegraphics[width=0.45\linewidth]{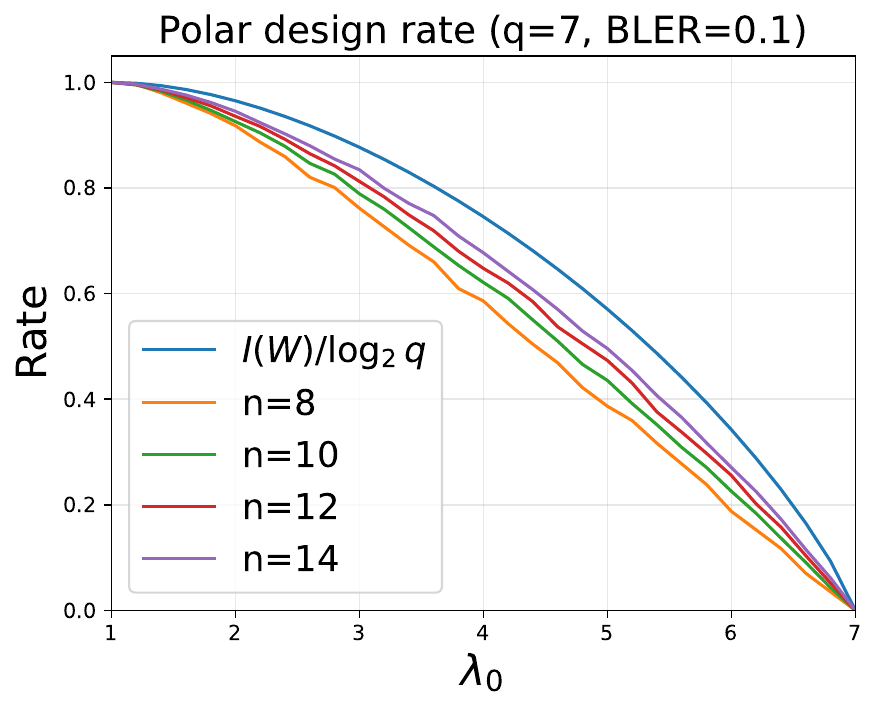}}
    
    \caption{Plot comparing rate $\frac{|\cA|}{N}$ of polar codes with different block lengths }
\end{figure}
  
\section{DE Results for LDPC Codes on Symmetric PSC}

Although Gallager originally proposed Low-density parity-check (LDPC) codes and iterative decoding in 1960~\cite{gallager1962low}, the concepts remained largely dormant until the emergence of Turbo codes~\cite{berrou1993near} and the subsequent rediscovery of LDPC codes decades later~\cite{mackay2002good}. In a parallel development, Pearl introduced Belief Propagation (BP) in 1982~\cite{Pearlaaai82} as a method for efficiently computing exact marginals on tree-structured probability models via message passing. It was later established that Pearl's BP algorithm provides a general framework that encompasses both Turbo decoding and Gallager’s iterative decoding as specific instances~\cite{mceliece1998turbo,Kschischangjsac98}.

Density evolution has been widely employed to estimate the code performance in the context of BP decoding of LDPC codes over classical channel \cite{richardson2008modern}.
For $(dv,dc)$ regular LDPC code the idea of DE on symmetric PSCs $q$-ary can be roughly explained as follows. Let $W^{(t)}$ be the channel after $t$ rounds of BPQM starting with symmetric $q$-ary PSC $W$ on a factor graph which has bit degree $dv$ and check degree $dc$.  Of course, $W^{(t)}$ would be a heralded mixture of PSCs. For channel $W'$, let $(W')^{\cnop k}$ and $(W')^{\vnop k}$ denote the channel obtained via $k$ rounds of check node and bit channel combining with channel $W'$ itself. Then, to obtain the output channel for the next stage, first we apply check node combining $(dc-1)$ times on channel $W^{(t)}$ resulting channel $(W^{(t)})^{\cnop (dc-1)}$. This is followed by bit node combining $(dv-1)$ times to obtain the channel $[(W^{(t)})^{\cnop (dc-1)}]^{\vnop dv-1}$. Finally, we apply bit node combining on channel $W$ and $[(W^{(t)})^{\cnop (dc-1)}]^{\vnop dv-1}$, to obtain the output channel corresponding to $(t+1)\textsuperscript{th}$ stage $W^{(t+1)}=W\vnop [(W^{(t)})^{\cnop (dc-1)}]^{\vnop dv-1} $. DE essentially tracks the eigen list of these channels using the parameter update from Lemma~\ref{lem:check node} and \ref{lem:bit node}.  Once we know that eigen list associated with the heralded mixture of PSCs at each states, we can efficiently compute optimal symbol error probability and the symmetric Holevo information at each stage. Using this, we can obtain the threshold in terms of eigen list of the channel $W$. In this context, the threshold refers to channel $W_{th}$ such that $\lim_{t\to \infty}I(W_{th}^{(t)})=\log q$ and  for all symmetric $q$-ary PSC $W'$ with $I(W')< I(W)$, $\lim_{t\to \infty}I((W')^{(t)})=0$.

Consider the channel $W$ with eigen list $\blambda=[\lambda_0,\frac{q-\lambda_0}{q-1},\dots,\frac{q-\lambda_0}{q-1}]$. In Fig.~\ref{plot: ldpc threshold (3,4) q=3},\ref{plot: ldpc threshold (5,6) q=3}, \ref{plot: ldpc threshold (3,8) q=3}, \ref{plot: ldpc threshold (6,8) q=3}, we obtain the $\lambda_0$ threshold for $(3,4)$,$(5,6)$, $(3,8)$  and $(6,8)$ regular LDPC codes, with $q=3$.
\begin{figure}[ht]
    \centering
    \subfloat[ (3,4) LDPC code \label{plot: ldpc threshold (3,4) q=3}]{%
        \includegraphics[width=0.45\linewidth]{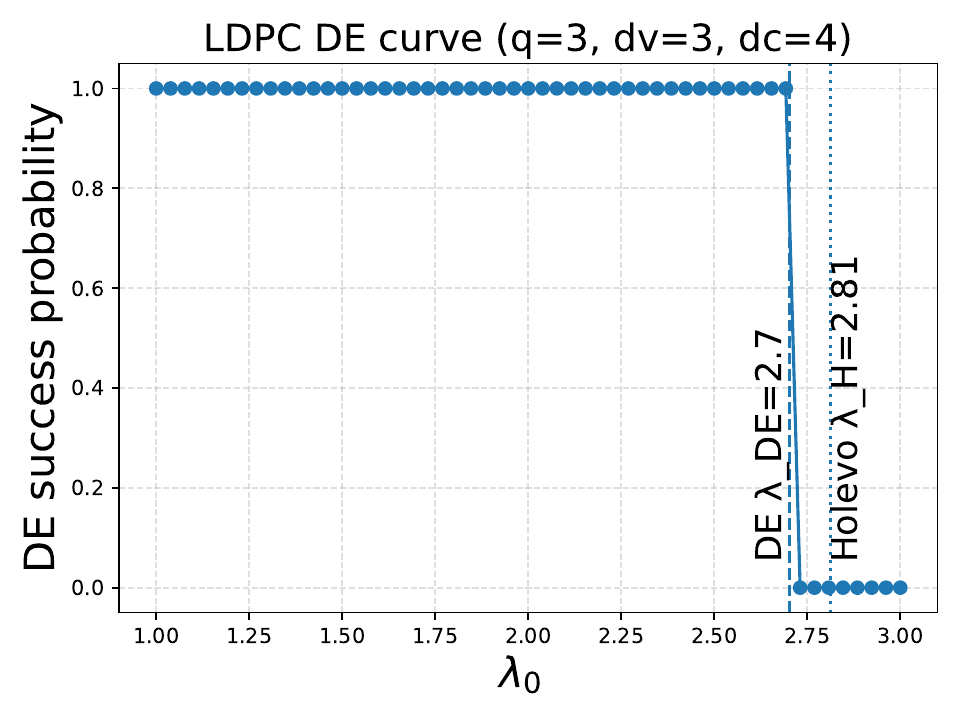}}
    \hfill
    \subfloat[ (5,6) LDPC code \label{plot: ldpc threshold (5,6) q=3}]{%
        \includegraphics[width=0.45\linewidth]{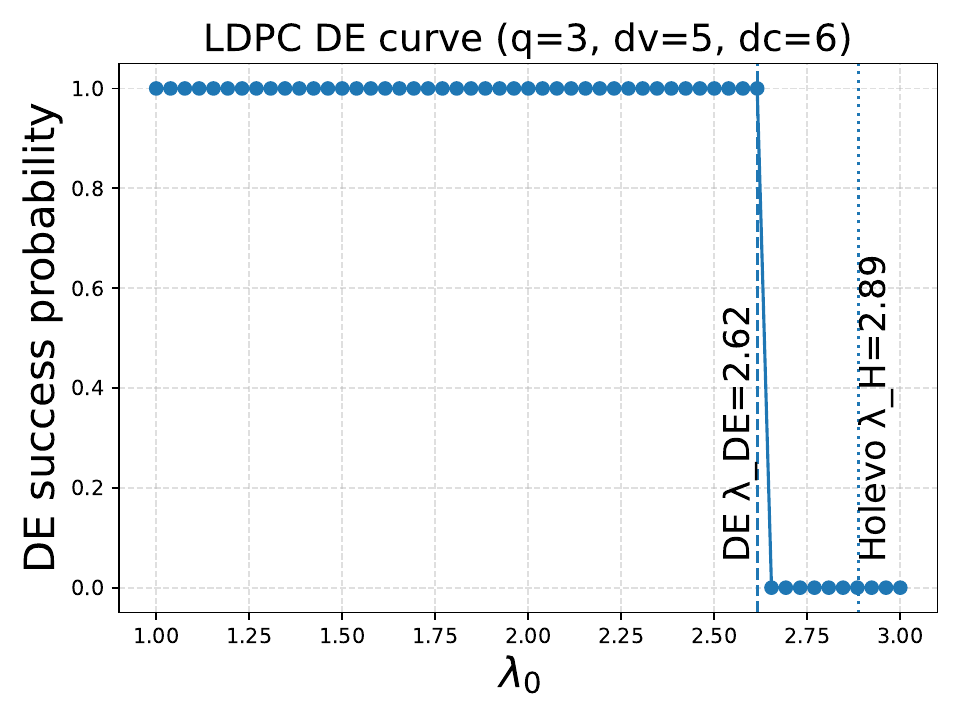}}
    
    \caption{LDPC threshold plots for $q=3$ }
\end{figure}

\begin{figure}[ht]
    \centering
    \subfloat[ (3,8) LDPC code \label{plot: ldpc threshold (3,8) q=3}]{%
        \includegraphics[width=0.45\linewidth]{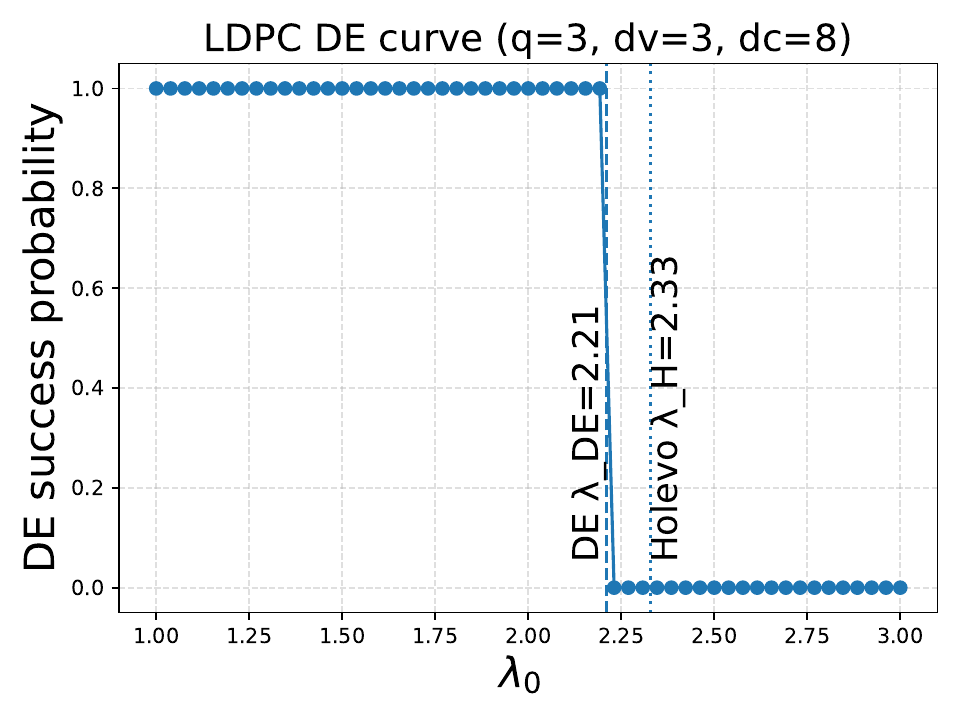}}
    \hfill
    \subfloat[ (6,8) LDPC code \label{plot: ldpc threshold (6,8) q=3}]{%
        \includegraphics[width=0.45\linewidth]{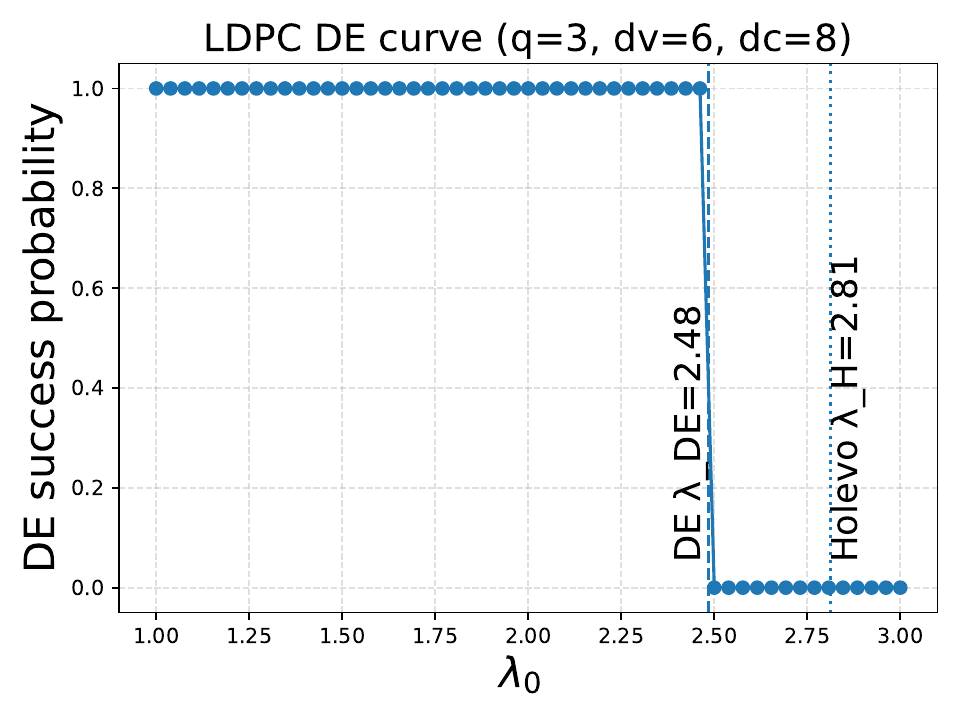}}
    
    \caption{LDPC threshold plots for $q=3$ }
\end{figure}

\twocolumn
\section{Algorithms for Density Evolution for Polar and LDPC Codes on Symmetric PSCs}\label{sec: DE Algorithms}

\begin{algorithm}[h!]
\caption{BitCombine$(B_1,B_2)$}\label{alg:bitcombine}
\begin{algorithmic}
    \Require Two Bags $B_1,B_2$ consisting of eigen lists \\
    Construct an empty bag $B^{\vnop}$\\
    Randomly permute the entries of $B_2$
    \While{$B_1,B_2$ not empty}\\
    \quad\qquad Sample $\blambda_1=[\lambdaa_0,\dots,\lambdaa_{q-1}]$ from $B_1$\\
    \quad\qquad Sample $\blambda_2=[\lambdab_0,\dots,\lambdab_{q-1}]$ from $B_2$\\
    \quad\qquad Apply bit node combining rule to obtain \\
    \quad\qquad $\lambda^{\vnop}_{j}\leftarrow\frac{1}{q}\sum_{k\in [q]}\lambdaa_{k}\lambdab_{j-k}$\\
    \quad\qquad Append $\blambda^{\vnop}=[\lambda^{\vnop}_0,\dots,\lambda^{\vnop}_{q-1}]$ to $B^{\vnop}$\\
    \quad\qquad Remove $\blambda_1$ from $B_1$ and $\blambda_2$ from $B_2$
    \EndWhile\\
    \Return $B^{\vnop}$
\end{algorithmic}
\end{algorithm}

\vfill

\begin{algorithm}[h!]
\caption{CheckCombine$(B_1,B_2)$}\label{alg:check combine}
\begin{algorithmic}
    \Require Two Bags $B_1,B_2$ consisting of eigen lists \\
    Construct an empty bag $B^{\cnop}$\\
    Randomly permute the entries of $B_2$
    \While{$B_1,B_2$ not empty}\\
    \quad\qquad Sample $\blambda_1=[\lambdaa_0,\dots,\lambdaa_{q-1}]$ from $B_1$\\
    \quad\qquad Sample $\blambda_2=[\lambdab_0,\dots,\lambdab_{q-1}]$ from $B_2$\\
    \quad\qquad Apply check node combining rule to obtain \\
    \quad\qquad $ p_{m}^{\cnop}  \leftarrow\frac{1}{q^2}\sum_{j\in [q]}\lambda_{(m+j)}^{(1)}\lambda_{-j}^{(2)}$\\
     \quad\qquad   $\lambda_{j}^{(\cnop,m)}  \leftarrow \frac{1}{qp_{m}^{\cnop}}\lambda_{(m+j)}^{(1)}\lambda_{-j}^{(2)}$\\
     \quad\qquad Sample $\blambda^{\cnop,m}=[\lambda_{0}^{\cnop,m},\dots,\lambda_{q-1}^{\cnop,m}]$ w.p. $p_{m}$\\
    \quad\qquad Append $\blambda^{\cnop,m}$ to $B^{\cnop}$\\
    \quad\qquad Remove $\blambda_1$ from $B_1$ and $\blambda_2$ from $B_2$
    \EndWhile\\
    \Return $B^{\cnop}$
\end{algorithmic}
\end{algorithm}

\vfill 

\begin{algorithm}[hb!]
\caption{DE for $q$-ary Polar Codes on a Symmetric PSC\label{algo:polar qudit de}}
\begin{algorithmic}[1]
      \Require Symmetric $q$-ary PSC $W$ characterized by Gram eigen list $\blambda$. Target block error rate $\epsilon$. \\
      Create bag of eigen lists, $B$, consisting of $M$ copies of $\blambda$\\
      Set $B=B^{1,0}$
      \For{$k$ in $\{1,\dots, n\}$}
      \For{$i$ in $\{1,\dots, 2^{k-1}\}$}\\
   \quad \qquad    \textbf{Compute check node update on bag $B^{i,k-1}$} \\
   \quad\qquad $B^{2i-1,k}\leftarrow \text{CheckCombine}(B^{i,k-1},B^{i,k-1})$\\
    \quad \qquad   \textbf{Compute bit node update on $B^{i,k-1}$} \\
   \quad\qquad $B^{2i,k}\leftarrow \text{BitCombine}(B^{i,k-1},B^{i,k-1})$

\EndFor
\EndFor
\For{$i$ in $\{1,\dots,N\}$}\\
\quad Compute empirical error probability $P_{\text{err}}(B^{i,N})$
\EndFor\\
Find the largest set $\cA$, satisfying $4\sum_{i\in \cA}\perr(B^{i,N})\leq \epsilon$\\
\Return $\cA$
    \end{algorithmic}
\end{algorithm}

\vfill

\newpage

\begin{algorithm}[!t]
\caption{ DE for $q$-ary $(d_v,d_c)$-regular LDPC Codes on a Symmetric PSC}
\label{alg:ldpc_de_qudit}
\begin{algorithmic}[1]
\Require Symmetric $q$-ary PSC $W$ with Gram eigen list $\blambda_0$.\\
Set number of samples $M$, tree depth $T$
\State Create channel bag $B^{\mathrm{ch}}$ consisting of $M$ copies of $\blambda_0$.
\State Initialize variable-to-check bag $B^{(0)} \gets B^{\mathrm{ch}}$.

\For{$t=1$ to $T$}

  \Statex \hspace{-2mm}\textbf{(A) Check-node update: $(d_c-1)$-fold check combining}
  \State $B^{\cnop} \gets B^{(t-1)}$
  \For{$r=2$ to $d_c-1$}
     \State $B^{\cnop} \gets \text{CheckCombine}(B^{\cnop}, B^{(t-1)})$
  \EndFor

  \Statex \hspace{-2mm}\textbf{(B) Variable-node update: $(d_v-1)$-fold bit combining}
  \State $B^{\mathrm{ext}} \gets B^{\cnop}$
  \For{$r=2$ to $d_v-1$}
     \State $B^{\mathrm{ext}} \gets \text{BitCombine}(B^{\mathrm{ext}}, B^{\cnop})$
  \EndFor

  \Statex \hspace{-2mm}\textbf{(C) With channel message (bit combining with $\blambda_0$)}
  \State $B^{(t)} \gets \text{BitCombine}(B^{\mathrm{ch}}, B^{\mathrm{ext}})$

  \Statex \hspace{-2mm}\textbf{(D) Track PGM error } 
  \State $P_{\mathrm{err}}^{(t)} \gets 1 - \frac{1}{M}\sum_{\blambda \in B^{(t)}} \psuc(\blambda)$

\EndFor

\State \Return $\{P_{\mathrm{err}}^{(t)}, B^{(t)}\}_{t=1}^{T}$ 
\end{algorithmic}
\end{algorithm}

\vfill
\null 
\clearpage
\onecolumn
\end{appendices}

\end{document}